\documentclass[11pt]{article}
\usepackage[a4paper, top=2.5cm, bottom=2.5cm, left=2.5cm, right=2.5cm]
{geometry}






\usepackage[utf8]{inputenc} 
\usepackage[T1]{fontenc}    
\usepackage{hyperref}       
\usepackage{url}            
\usepackage{booktabs}       
\usepackage{amsfonts}       
\usepackage{nicefrac}       
\usepackage{microtype}      
\usepackage{xcolor}         

\usepackage{graphicx}
\usepackage{amsmath}
\usepackage{amsthm}
\usepackage{amsfonts}
\usepackage{amssymb}
\usepackage{natbib}
\usepackage{makecell}
\usepackage{algorithmic}
\usepackage[linesnumbered,algoruled,boxed,lined]{algorithm2e}
\usepackage{xcolor}
\usepackage{subcaption}
\usepackage{soul}
\usepackage{multirow}
\usepackage{hyperref}
\usepackage{booktabs}

\newtheorem{definition}{Definition}
\newtheorem{proposition}{Proposition}

\newtheorem{theorem}{Theorem}

\newtheorem{lemma}{Lemma}

\newtheorem{observation}{Observation}

\usepackage{makecell}
\usepackage{enumitem}
\usepackage{microtype}
\usepackage{multirow}
\usepackage{bm}
\usepackage{bigstrut}
\usepackage{subcaption}
\usepackage{xcolor}


\title{Online Fair Allocations with Binary Valuations and Beyond\footnote{This is the second version. We have simplified some algorithms and optimized the whole structure.}}

%

\author{
Yuanyuan Wang$^1$ \hspace{15pt}
Tianze Wei$^2$ \hspace{15pt}\\ 
$^1$School of Mathematical Science, The Ocean University of China \\ \small
\texttt{wyy8088@stu.ouc.edu.cn} \\ 
$^2$Department of Computer Science, City University of Hong Kong \\ \small
\texttt{t.z.wei-8@my.cityu.edu.hk} \\ 
}
\date{}

\begin{document}

\maketitle

\begin{abstract}
In an online fair allocation problem, a sequence of indivisible items arrives online and needs to be allocated to offline agents immediately and irrevocably.
In our paper, we study the online allocation of either goods or chores. 
We employ popular fairness notions, including envy-freeness up to one item (EF1) and maximin share fairness (MMS) to capture fairness, and utilitarian social welfare (USW) to measure efficiency. 
For both settings of items, we present a series of positive results regarding the existence of fair and efficient allocations with widely studied classes of additive binary and personalized bi-valued valuation/cost functions. 
Furthermore, we complement our results by constructing counterexamples to establish our results as among the best guarantees possible.
\end{abstract}

\section{Introduction}
The fair division of indivisible items is a prominent topic in algorithmic game theory and artificial intelligence, with practical applications \cite{budish2011combinatorial,goldman2015spliddit}. Most prior work has focused on \textit{offline} settings, where the information of all items is known a priori. In practice, items are often online, and waiting for all resources to appear can significantly impede progress. We must proactively allocate resources in a timely manner for better outcomes.
A paradigmatic example is the food banks problem \cite{lee2019webuildai}, where food arrives online without prior knowledge of future items. 
Since the food is perishable, it must be allocated upon arrival to some agent. Motivated by these real-world scenarios, we define and discuss the following problem:

(\textit{Online Fair Division})
There are $n$ agents that are fixed and known in advance. The items arrive online (the number of items is \textit{unknown}). Upon the arrival of an item, its values to all agents are revealed, and we must allocate it to some agent or discard it immediately and irrevocably. At all times, the allocation among agents must be fair.

If the online algorithm knows additional information about all future items beforehand, it somewhat reduces the uncertainty of the online model, allowing for more effective decision-making and improved outcomes. \cite{zhou2023multi,benade2024fair} study the normalized valuations, i.e., online algorithms know the total value of future items. \cite{benade2022dynamic} studies the online fair allocation, foreseeing the rank of a new arrival item with respect to the past items under different assumptions of agents' valuation distributions. And there are other studies about additional information \cite{he2019achieving,elkind2024temporal,cookson2024temporal}. For an independent interest, we also examine the online setting with additional information, with relevant results presented in the Appendix.

In numerous practical applications, such as vaccine allocation \cite{rey2023vaccine},
the online algorithm operates without prior knowledge of future items. \cite{aleksandrov2015online} studies the online fair division with additive binary valuations and shows that an envy-free up to one item (EF1) allocation exists. However, there are a few non-trivial approximation fairness results of online goods allocation beyond additive binary valuations and online chores allocation. In our paper, we aim to address this gap and study the online allocation model without any information about future items. We discuss the goods and chores settings, respectively.
Our primary research question is:

\begin{quote}

\textit{
Can we design deterministic online algorithms that are fair and efficient, for valuation functions beyond additive binary, in the online allocation problem?
}
\end{quote}


\subsection{Our Contributions}

We study two settings: goods and chores, in the online fair indivisible items allocation model.

In the goods setting, for different classes of valuations, including binary and additive personalized bi-valued, our main results are summarized in Table \ref{tab: summary_results_goods}.
We introduce the non-wastefulness (NW) notion in this setting.
First, we show the inapproximability of EF1 and MMS with general additive valuations, even for only two agents with additive personal tri-valued valuations. 
Then, we study some restricted valuation functions, such as submodular binary valuation functions.


 \begin{table}[t]
 \centering
    \begin{tabular}{c|c|c|c|c}
    \toprule
       Valuations & \makecell[c]{Number \\of agents} & \makecell[c]{Fairness/\\Efficiency} & LB & UB \\
      \midrule
      \multirow{2}{*}{\makecell[c]{Additive\\(personalized tri-valued)}}&\multirow{2}{*}{$n$}& EF1 & \multicolumn{2}{c}{$\times$ (Thm \ref{the: trivalued_ef1_mms_goods})} \bigstrut\\\cline{3-5}
         ~ & ~ & MMS & \multicolumn{2}{c}{$\times$ (Thm \ref{the: trivalued_ef1_mms_goods})}\bigstrut\\\cline{1-5}
         \multirow{3}{*}{ \makecell[c]{Submodular\\ binary}}& \multirow{3}{*}{$n$}  & EF1  & \makecell[c]{$\frac{1}{2}$\\(Thm \ref{the: indivisible_submodular+binary})} & \makecell[c]{$\frac{1}{2}$\\(Thm \ref{the: martroid_rank_negative_results})}\bigstrut\\\cline{3-5}
         ~ & ~& MMS  & \makecell[c]{$\frac{1}{2}$\\(Thm \ref{the: indivisible_submodular+binary})} &  \makecell[c]{$\frac{1}{2}$\\(Thm \ref{the: martroid_rank_negative_results})}\bigstrut\\\cline{3-5}
        ~ & ~ & USW  & \makecell[c]{$\frac{1}{2}$\\(Thm \ref{the: indivisible_submodular+binary})} & \makecell[c]{$\frac{1}{2}$\\(Thm \ref{the: martroid_rank_negative_results})} \bigstrut\\\cline{1-5}
       \multirow{4}{*}{\makecell[c]{Additive personalized \\bi-valued}}&\multirow{4}{*}{$2$}& EF1 & \makecell[c]{$\frac{1}{2}$\\(Thm \ref{the: two_agents_bivalued_ef1_mms})} & \makecell[c]{$\frac{1}{2}$\\(Thm \ref{the: ef1_bi_valued_impossiblity_results})}\bigstrut\\\cline{3-5} 
         ~ & ~ & MMS & \makecell[c]{$\frac{1}{3}$\\(Thm \ref{the: two_agents_bivalued_ef1_mms})}& \makecell[c]{$\frac{1}{3}$\\(Thm \ref{the: ef1_bi_valued_impossiblity_results})}\bigstrut\\\cline{3-5}
          ~ & ~ & USW & \multicolumn{2}{c}{$\times$ (Prop \ref{prop: bivalued_no_efficiency_goods})}\bigstrut\\ \cline{1-5}
         \multirow{3}{*}{\makecell[c]{Identical additive\\ binary\\+ personalized bi-valued}}&\multirow{3}{*}{$n$}& EF1 & \multicolumn{2}{c}{1 (Thm \ref{the: binary_bivalued_goods})}\bigstrut\\\cline{3-5}
        ~ & ~ & MMS & \multicolumn{2}{c}{1 (Thm \ref{the: binary_bivalued_goods})}\bigstrut\\ \cline{3-5}
        ~ & ~ & USW & \multicolumn{2}{c}{$\times$ (Prop \ref{prop: binary_bivalued_no_efficiency_goods})}\\
        \bottomrule
    \end{tabular} 
    \caption{Goods setting. LB: lower bound. UB: upper bound. "$\times$" indicates inapproximability.} 
    \label{tab: summary_results_goods}
\end{table}


\noindent
\begin{itemize}[left=0pt]%
  \item For submodular binary valuation functions, we propose the Marginal-Greedy Algorithm (Algorithm \ref{alg: indivisible_submodular+binary}) to compute an allocation that simultaneously satisfies 
  NW, $\frac{1}{2}$-EF1, $\frac{1}{2}$-MMS, and $\frac{1}{2}$-max-USW, and we show that these approximation ratios are all tight.
    \item For two agents with additive personalized bi-valued valuation functions, we design the Adapted Envy-Graph Procedure (Algorithm \ref{alg: two_agents_ef1_mms}) to compute an NW allocation satisfying $\frac{1}{2}$-EF1 and $\frac{1}{3}$-MMS simultaneously.
    Further, we show that no deterministic algorithm can guarantee an NW allocation that is $\alpha$-EF1 or $\beta$-MMS for any $\alpha>\frac{1}{2}$, $\beta>\frac{1}{3}$, and no deterministic algorithm can compute an NW allocation where $\alpha$-EF1 (or $\alpha$-MMS) is compatible with $\beta$-max-USW for any $\alpha, \beta>0$. 
    \item 
    When one agent has the additive personalized bi-valued valuation function and the remaining agents have identical additive binary valuation functions,
    we design the Adapted-Picking Algorithm (Algorithm \ref{alg: binary_bivalued_ef1_mms}) to compute an allocation satisfying NW, EF1, and MMS.
    Further, we show that no deterministic algorithm can compute an NW allocation that is EF1 (or MMS) and $\alpha$-max-USW for any $\alpha>0$.
\end{itemize}

 \begin{table}[tb]  
 \centering
    \begin{tabular}{c|c|c|c|c}
    \toprule
       Costs & \makecell[c]{Number \\ of agents} & \makecell[c]{Fairness/\\Efficiency}& LB & UB \\
       \midrule
       \makecell[c]{Additive \\(personalized tri-valued)}&$n$& EF1 & \multicolumn{2}{c}{$\times$ (Thm \ref{the: trivalued_ef1_chores})} \bigstrut\\\cline{1-5}
      \multirow{3}{*}{ \makecell[c]{Additive\\ binary}}&\multirow{3}{*}{$n$}& EF1 & \multicolumn{2}{c}{1 (Thm \ref{the: indivisible_binary_chore})}\bigstrut\\\cline{3-5}
        ~ & ~ & MMS & \multicolumn{2}{c}{1 (Thm \ref{the: indivisible_binary_chore})}\bigstrut\\\cline{3-5}
        ~ & ~ & USC & \multicolumn{2}{c}{1 (Thm \ref{the: indivisible_binary_chore})}\bigstrut\\\cline{1-5}
         \multirow{3}{*}{ \makecell[c]{Supermodular\\ binary}}& \multirow{3}{*}{$n$}  & EF1  & \multicolumn{2}{c}{$\times$ (Thm \ref{the: supermodular_binary_negative_results})}   \bigstrut\\\cline{3-5}
         ~ & ~& MMS &  \multicolumn{2}{c}{$\times$ (Thm \ref{the: supermodular_binary_negative_results})}  \bigstrut\\\cline{3-5}
        ~ & ~ & USC & \multicolumn{2}{c}{$\times$ (Thm \ref{the: supermodular_binary_negative_results})}  \bigstrut\\\cline{1-5}
        \multirow{4}{*}{\makecell[c]{Additive personalized \\ bi-valued}}&\multirow{4}{*}{$2$}& EF1 & \makecell[c]{$2$\\(Thm \ref{the: ef1_bi_valued_impossiblity_results_chores})}&\makecell[c]{$2$\\(Thm \ref{the: two_agents_bivalued_ef1_mms_chores})}\bigstrut\\\cline{3-5} 
         ~ & ~ & MMS & \makecell[c]{$\frac{3}{2}$\\(Thm \ref{the: ef1_bi_valued_impossiblity_results_chores})}& \makecell[c]{$\frac{5}{3}$\\(Thm \ref{the: two_agents_bivalued_ef1_mms_chores})}\bigstrut\\\cline{3-5}
          ~ & ~ & USC & \multicolumn{2}{c}{$\times$ (Prop \ref{prop: bivalued_no_efficiency_chores})}\bigstrut\\ \cline{1-5}
        \multirow{3}{*}{\makecell[c]{ Additive binary\\+ personalized bi-valued}}&\multirow{3}{*}{$n$}& EF1  & \multicolumn{2}{c}{1 (Thm \ref{the: binary_bivalued_chores}) }\bigstrut\\\cline{3-5}
        ~ & ~  & MMS & \multicolumn{2}{c}{1 (Thm \ref{the: binary_bivalued_chores}) }\bigstrut\\\cline{3-5}
        ~ & ~ & USC & \multicolumn{2}{c}{$\times$ (Prop \ref{prop: binary_bivalued_no_efficiency_chores})}\bigstrut\\
        \bottomrule
    \end{tabular} 
    \caption{Chores setting. LB: lower bound. UB: upper bound. "$\times$" indicates inapproximability.}
    \label{tab: summary_results_chores}
\end{table}

In the chores setting, we introduce the completeness constraints.
We first show the inapproximability of EF1 even for two agents with additive personalized tri-valued cost functions, which blocks our way of studying more general cost functions.
Next, we investigate the restricted cost functions, such as additive binary cost functions.

\begin{itemize}[left=0pt]

    \item For additive binary cost functions, we propose the Compelled-Greedy Algorithm (Algorithm \ref{alg: indivisible_suppermodular+binary}) to compute a complete allocation that simultaneously satisfies EF1, MMS, and min-USC.
    For supermodular binary cost functions, we prove that no deterministic algorithm can compute a complete allocation that is $\alpha$-EF1 (or $\alpha$-MMS or $\alpha$-min-USC) for any $\alpha \geq 1$.

    \item When two agents have additive personalized bi-valued cost functions, we design the Adapted Chores Envy-Graph Procedure (Algorithm \ref{alg: two_agents_ef1_mms_chores}) to compute an allocation that satisfies completeness, $2$-EF1, and $\frac{5}{3}$-MMS.
    Further, we show that no deterministic algorithm can guarantee a complete allocation that is $\alpha$-EF1 for any $\alpha<2$ or $\beta$-MMS for any $\beta<\frac{3}{2}$, and no deterministic algorithm can guarantee a complete allocation that is $\alpha$-EF1 (or $\beta$-MMS) and $\gamma$-min-USC for any $\alpha, \gamma \geq 1$ and $\beta<2$.

    \item When one agent has the additive personalized bi-valued cost function and the remaining agents have additive binary cost functions,
    we design the Adapted-Chores-Picking Algorithm (Algorithm \ref{alg: binary_bivalued_ef1_mms_chores}) to compute an allocation satisfying completeness, EF1, and MMS.
    Further, no deterministic algorithm can compute a complete allocation that is $\alpha$-EF1 (or MMS) and $\beta$-min-USC for any $\alpha, \beta \geq 1$.
\end{itemize}

\subsection{Related Work}
\paragraph{Online Fair Allocation.} 
In the majority of the literature, there are two kinds of online arrivals:
The former kind is that agents arrive over time \cite{kash2014no,friedman2015dynamic,friedman2017controlled,li2018dynamic},
and the latter is that items arrive over time, like \cite{aleksandrov2015online}, which is also the topic of our paper. 
Our work is mainly related to the study of
\cite{aleksandrov2015online} and \cite{hosseini2024class}, where items arrive one by one without any future information known in advance. 
\cite{aleksandrov2015online} consider the food bank problem when all agents only have the additive binary valuation functions and design two simple mechanisms to guarantee ex-ante envy-freeness.
In our paper, we study beyond additive binary valuation functions in the goods setting.
\cite{hosseini2024class} study the class fairness in online matching for indivisible and divisible items, and present a series of positive results.
The model of indivisible items in their paper can be seen as a special case of our model with submodular binary valuation functions in our paper, which will be discussed later.
In other words, we study a more general setting and derive the approximation results that strengthen their results.

\paragraph{Offline Fair Allocation.}
When agents have additive valuation functions, there is a rich body of literature that discusses EF1 or MMS. Please refer to the survey by \cite{amanatidis2023fair} for an overview.
Here, we highlight some papers most relevant to our work on offline fair division with submodular binary valuation functions or supermodular binary cost functions.
When agents have submodular binary valuations in the goods setting,
\cite{benabbou2021finding} study the compatibility of fairness and efficiency and show that a maximum utilitarian social welfare allocation that is EF1 can be efficiently computed, and  \cite{barman2021existence} consider the fairness notion - MMS and prove that a maximum utilitarian social welfare allocation that satisfies MMS can be computed in polynomial time.
When agents have supermodular binary cost functions in the chores setting,
\cite{barman2023suppermodular} show that an EF1 or MMS allocation minimizing the social cost can be computed in polynomial time.

\section{Preliminaries}

For each natural number $s \in \mathbb{N}$, let $[s]= \{1, \ldots,s\}$.
Let $T = \{e_1, \ldots, e_{t}\}$ be the set of \textit{indivisible} items arriving one by one and $N =\{1,\ldots,n\}$ be the set of offline agents.
When an online item $e_t$ arrives,
its values or costs to all agents are revealed.
When items are goods, each agent is endowed with a valuation function $v_i: 2^{T} \rightarrow \mathbb{R}_{\geq 0}$.
When items are chores, each agent is endowed with a cost function $c_i: 2^{T} \rightarrow \mathbb{R}_{\geq 0}$. 
We assume that  $v_i(\emptyset)=0$ ($c_i(\emptyset)=0$), and monotone, $v_i(X)\le v_i(Y)$ ($c_i(X)\le c_i(Y)$) for all $X\subseteq Y\subseteq T$.
For any item $e\in T$ and any subset $X \subseteq T$, we denote by $\Delta^i_X(e)=v_i(X\cup\{e\})-v_i(X)$ or $c_i(X\cup\{e\})-c_i(X)$ the marginal value or cost of an item $e\in T$ in the set $X$.
 $v_i$ is \textit{additive} if the value of any set of items $S$ is the sum of the value of each item in $S$, that is, $v_i(X)=\sum_{e\in X} v_i(e)$.
   $v_i$ is \textit{submodular} if $v_i(X)+v_i(Y)\ge v_i(X\cup Y)+v_i(X\cap Y)$ for any two sets $X,Y\subseteq T$.
  $v_i$ is \textit{supermodular} if $v_i(X)+ v_i(Y)\le v_i(X\cup Y)+v_i(X\cap Y)$ for any two sets $X,Y\subseteq T$. 
There are similar arguments for $c_i$, and we omit it.
Let $T^k$ be the set of items that have arrived by round $k \in [t]$.
We denote an allocation by ${\bf A}^k = (A_1^k, \ldots, A_n^k)$, where $A_i^k \subseteq T^k$ is the bundle of items that are allocated to agent $i$, and $A_i^{k} \cap A_j^k =\emptyset$ for any $i, j \in N$. Let $\Pi_{n}(T^k)$ be the set of all allocations of $T^k$.
Collectively, $(T, N, (v_i)_{i\in N})$ forms an instance of \textit{online goods fair allocation} and $(T, N, (c_i)_{i\in N})$ forms an instance of \textit{online chores fair allocation}.
For a singleton set $\{e\}$, we will use $v_i(e)$ ($c_i(e)$) as a shorthand for  $v_i(\{e\})$ ($c_i(\{e\})$). 

\paragraph{Deterministic online algorithms.}
In the online setting, the items in $T$ arrive one by one in an arbitrary order, where $|T|$ is unknown.
When an item arrives, the values (costs) of this item for all agents are revealed.
The online algorithm must make
an immediate and irrevocable decision to allocate the item to one agent or discard it \textit{deterministically}.

Our objective is to design deterministic algorithms to compute allocations guaranteeing the desired approximation fairness or efficiency, including EF1, MMS, and USW/USC (Definitions \ref{def: ef1_goods}, \ref{def: mms_goods}, \ref{def: usw}, \ref{def: ef1_chores}, \ref{def: mms_chores}, and \ref{def: usc}), as well as constraints, including non-wastefulness (Definition \ref{def: non-wastefulness}) and completeness (Definition \ref{def: completeness}), at the end of each round.
In contrast, our impossibility results will hold even if the desired guarantees are required to hold only at the end.
For a better understanding of our online model, let us demonstrate the deterministic online allocation procedure through the following example.

\begin{table}[tb]
\centering
    \begin{tabular}{c|c|c|c|c}
    \toprule
      & $e_1$ & $e_2$ & $e_3$ & $e_4$    \\
      \midrule
        agent $1$  & \textcolor{red}{6} & $8$& 10 & $6$  \\
        \midrule
        agent 2 & 13 & \textcolor{red}{8} & \textcolor{red}{5} & 0  \\
        \bottomrule
    \end{tabular} 
\caption{Example of the deterministic online allocation.}
\label{tab: det_example}

\end{table}

\paragraph{Example.}
Suppose that there are two agents who have additive valuation functions, and the arriving items are goods.
The value of each item is shown in Table \ref{tab: det_example}, and note that the value of each item is only revealed after it arrives.
Now we have a deterministic online algorithm named $\mathsf{OALG}$.
In round $t=1$, $\mathsf{OALG}$ allocates $e_1$ to agent 1, and the allocation ${\bf A}^1$ is EF1, MMS, and $\frac{6}{13}$-max-USW. 
In round $t=2$, $\mathsf{OALG}$ allocates $e_2$ to agent 2, and the allocation ${\bf A}^2$ is EF1, MMS, and $\frac{2}{3}$-max-USW.
In round $t=3$, $\mathsf{OALG}$ allocate $e_3$ to agent 2, and the allocation $\textbf{A}^3$ is $\frac{3}{4}$-EF1, $\frac{3}{5}$-MMS, and $\frac{19}{31}$-max-USW.
In round $t=4$, $\mathsf{OALG}$ discards $e_4$, and the allocation $\textbf{A}^4$ is $\frac{3}{4}$-EF1, $\frac{3}{7}$-MMS, and $\frac{19}{37}$-max-USW. 
Then, no further item arrives and we say that $\mathsf{OALG}$ achieves $\min\{\frac{3}{4},1\} = \frac{3}{4}$-EF1, $\min\{ \frac{3}{7}, \frac{3}{5}, 1\} = \frac{3}{7}$-MMS, and $\min\{\frac{6}{13}, \frac{19}{37},\frac{19}{31}, \frac{2}{3}\}$$=\frac{6}{13}$-max-USW.

\section{Allocation of Goods}
\label{sec: goods_setting}

Throughout this section, we focus on the goods setting.
We first introduce some definitions of
fairness and efficiency notions that are used in this section.

\begin{definition}[Envy-freeness up to One Good]
\label{def: ef1_goods}
For any $\alpha \in [0,1]$, an allocation ${\bf A}^{k}$ is $\alpha$-approximate envy-free up to one good ($\alpha$-EF1) if, for every pair of agents $i, i^{\prime} \in N$ with $A_{i^{\prime}}^k \neq \emptyset$, it holds that $v_i(A_i^k) \geq \alpha \cdot v_i(A_{i^{\prime}}^k \setminus \{e\})$ for some $e \in A_{i^{\prime}}^k$.  
\end{definition}
\begin{definition}[Maximin Share Fairness]
\label{def: mms_goods}
For any agent $i$, her maximin share $\mathsf{MMS}_{i}^k$ by round $t$ is defined as:
\[
\mathsf{MMS}_{i}^k = \max_{B^k \in \Pi_{n}(T^k)} \min_{i^{\prime} \in N}
v_i(B_{i^{\prime}}^k),
\]
For any $\alpha \in [0, 1]$, an allocation ${\bf A}^k$ is $\alpha$-approximate maximin share fair ($\alpha$-MMS) if for any agent $i \in N$, it holds that $v_i(A_i^k) \geq \alpha \cdot \mathsf{MMS}_i^k$.
\end{definition}
\begin{definition}[Utilitarian Social Welfare]
\label{def: usw}
The utilitarian social welfare of allocation ${\bf A}^k$ is given by $USW({\bf A}^k) = \sum_{i \in N}v_i(A_i^k)$.
For any $\alpha \in [0,1]$, an allocation ${\bf A}^k$ is $\alpha$-max-USW
if $USW({\bf A}^k) \geq \alpha \cdot \max_{{\bf B}^k \in \Pi_{n}(T^k)}USW({\bf B}^k)$. 
\end{definition}

Besides the above definitions, we also pay attention to the following restriction, which is added in the goods setting.


\begin{definition}[Non-Wastefulness]
\label{def: non-wastefulness}
An allocation ${\bf A}^k$ is non-wasteful (NW) if (1) $\Delta_{A_i^k \setminus \{e\}}^i(e)>0$ for any agent $i\in N$ and any good $e \in A^k_i$, and (2) $\Delta_{A_i^k}^i({e}^{'})=0$ for any agent $i \in N$ and any discarded good $e'\in T^k$. 
\end{definition}
Note that imposing NW in the goods setting is necessary, as our online model permits item discarding (see our example for illustration). Without this constraint, achieving EF1 would be trivial by simply discarding all online items.
For agents with additive valuation functions, an NW allocation can always be found as long as each arriving item is allocated to at least one agent who values it positively.
Below, we present a positive result where an NW allocation always exists for submodular binary valuation functions that we study in the subsequent parts, and complement it with a negative result for general submodular valuation functions.
\begin{lemma}
\label{submodular_binary_NW}
For submodular binary valuation functions, an NW allocation always exists.
\end{lemma}

\begin{proof}
\begin{algorithm}[t]
\caption{Greedy Algorithm}
\label{alg: indivisible_submodular+binary_greedy}
\KwIn{An instance $(T, N, (v_i)_{i\in N})$ with submodular binary valuation functions}
\KwOut{A NW allocation ${\bf A}$}
Initialize: 
Let ${\bf A} = (\emptyset, \ldots, \emptyset)$;

\SetKwProg{Def}{when}{ do}{}
\Def{item $e \in T$ arrives}{

\For{$i=1$ to $n$}
{

\If{$\Delta_{A_{i}}^{i}({e})>0$}
{

 Let $A_{i}=A_{i}\cup \{e\}$;

 break;

}
}
}

\Return ${\bf A}$;

\end{algorithm}

Given an instance $(T, N, (v_i)_{i\in N})$, Algorithm \ref{alg: indivisible_submodular+binary_greedy} can compute an NW allocation.
When item $e_{k}$ ($k\in [t]$) arrives, if there exists an agent $j \in N$ such that $\Delta_{A_j^{k-1}}^{j}({e_k}) = 1$, $e_k$ will be allocated.
We assume that $e_{k}$ 
 is allocated to agent $i$.
We will show that $\Delta_{A_i^{r} \setminus \{e_r\}}^{i}({e_k})=1$ holds for any round $r > k$ after some new items are added to agent $i$'s bundle.
Without loss of generality, we assume that $e_s$ is an arbitrarily newly added item in agent $i$'s bundle, where $s > k$, it holds that
$\Delta_{A_i^{s-1}}^{i}({e_s}) = v_i(A_i^{s-1} \cup \{e_s\}) - v_i(A_i^{s-1}) = 1$.
 Then, we have 
 \begin{align*}
   \Delta_{{A_i^{s}} \setminus\{e_r\}}^{i}({e_k}) &= v_i(A_i^{s-1} \cup \{e_s\})-v_i(A_i^{s-1} \cup \{e_s\} \setminus\{e_k\})\\ &= v_i(A_i^{s-1}) + 1-v_i(A_i^{s-1} \cup \{e_s\} \setminus\{e_k\})\\
   &\geq v_i(A_{i}^{s-1})+1 - v_i(A_i^{s-1}\setminus\{e_k\})-v_i(e_s)\\
   &=v_i(A_i^{s-1}) - v_i(A^{s-1} \setminus \{e_k\})\\ &= \Delta_{A_i^{s-1} \setminus\{e_k\}}^{i}({e_k}),  
 \end{align*} where the inequality holds for submodularity.
 Following this induction by repeatedly utilizing submodularity, we have that $\Delta_{A_i^{s} \setminus\{e_k\}}^{i}({e_k}) \geq \ldots \geq \Delta_{A_i^{k}\setminus\{e_k\}}^{i}({e_k}) = \Delta_{A_i^{k-1}}^{i}({e_k}) = 1$.
 Since $v_i$ is a submodular binary valuation function, it holds that $\Delta_{A_i^{s} \setminus\{e_k\}}^{i}({e_k})\leq 1$.
 Thus, we can conclude that $\Delta_{A_i^{s} \setminus\{e_k\}}^{i}({e_k})= 1$. 
\end{proof}

\begin{lemma}
\label{no_submodular_NW}
For submodular valuation functions,
there exists an instance that no deterministic online algorithm can compute an NW allocation, even for two agents with identical valuation functions.
\end{lemma}
\begin{proof}  
We construct an instance with two agents, $v_1=v_2=v$.
When first item $e_1$ arrives, $v(e_1)=3$. W.l.o.g., we allocate  $e_1$ to agent 1. 
When the second item $e_2$ arrives, $v(e_2)=3, v(\{e_1,e_2\})=5$.
Case (1): we allocate $e_2$ to agent 2, for the third item $e_3$,$v(e_3)=5, v(\{e_1,e_3\})=v(\{e_2,e_3\})= v(\{e_1,e_2,e_3\})=5$. 
We have that $v(\{e_1,e_3\})-v(e_3)=v(\{e_2,e_3\})-v(e_3)=0$.
So, no matter who is assigned $e_3$, it cannot satisfy NW.
Case (2): we allocate $e_2$ to agent 1, for the third item $e_3$ with $v(e_3)=5, v(\{e_1,e_3\})=v(\{e_2,e_3\})= v(\{e_1,e_2,e_3\})=5$.  
To ensure NW, we must allocate $e_3$ to agent 2.
For the fourth item $e_4$, $v(e_4)=v(\{e_1,e_4\})=v(\{e_2,e_4\})= v(\{e_3,e_4\})=v(\{e_1,e_2,e_4\})=v(\{e_1,e_3,e_4\})=v(\{e_2,e_3,e_4\})=v(\{e_1,e_2,e_3,e_4\})=6$. 
We have that $v(\{e_1,e_2,e_4\})-v(\{e_2,e_4\})=v(\{e_3,e_4\})-v(e_4)=0$.
So, no matter who is assigned $e_4$, it cannot be NW.
\end{proof}

\subsection{The Impossibilities of Approximate EF1 and MMS
}
\citep{aleksandrov2015online} show that in the online setting, an EF1 allocation exists for additive binary valuations, where for each agent $i \in N$, $v_i$ is additive and $v_i(e) \in \{0,1\}$ for any $e \in T$.
In this part, we present a strong negative result regarding the approximations of EF1 and MMS 
for general additive valuations. 
We say that the valuation function $v_i$ is
additive personalized tri-valued if $v_i$ is additive and $v_i(e)\in \{a_i, b_i,z_i\}$ for any $e\in T$, where $0<a_i\leq b_i\leq z_i$.

\begin{theorem}
\label{the: trivalued_ef1_mms_goods}
No deterministic online algorithm can compute an NW allocation that guarantees $\alpha$-EF1 or $\alpha$-MMS for any $\alpha > 0$, even for two agents with additive personalized tri-valued valuation functions.
\end{theorem}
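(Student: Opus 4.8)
The plan is to construct an adversarial instance with two agents having additive tri-valued valuations, where the adversary forces the online algorithm into a position where some agent ends up with an arbitrarily small fraction of what fairness would require. The key leverage is that in the online setting, once an item is allocated (irrevocably), the adversary can choose future items' values to punish that decision; combined with the non-wastefulness constraint, the algorithm is often \emph{compelled} to allocate an arriving good to a particular agent (whenever only one agent values it positively), which removes its freedom.

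The core idea I would use: let the adversary first release one ``big'' item of value, say, $M$ (a large parameter) that both agents value at $M$. The algorithm must give it to someone, say agent~$1$ (by symmetry). Then the adversary releases a long stream of ``small'' items, each of value $1$ to agent~$2$ and value $0$ to agent~$1$. By NW, every such small item must go to agent~$2$ (agent~$1$ has zero marginal value, so giving it to agent~$1$ would violate NW clause~(1); and discarding it would violate NW clause~(2) since agent~$2$ has positive marginal value — note with additive valuations there is always a positive-valuing agent here). After $k$ such small items, agent~$1$ holds a bundle worth $M$ to herself and $0$ to agent~$2$, while agent~$2$ holds $k$ items worth $k$ to herself and worth $0$ to agent~$1$. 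Now check EF1 from agent~$2$'s perspective toward agent~$1$'s bundle $\{e_1\}$: after removing the single item $e_1$, agent~$2$ values the remainder at $0$, so agent~$2$ is not envious — that direction is fine. The envy that bites is agent~$1$ toward agent~$2$: $v_1(A_1) = 0$? No — $v_1(A_1) = M > 0$. Hmm, so I need to be more careful: the standard trick must make the \emph{deprived} agent's own bundle small relative to what she could guarantee herself.

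So the refined plan: make the first forced item worthless to the agent who gets stuck. Concretely, release $e_1$ with $v_1(e_1) = v_2(e_1) = 1$ (so it must be allocated, WLOG to agent~$1$, who now has value $1$). Then release items $e_2, \ldots$ each with value $0$ to agent~$1$ and value $1$ to agent~$2$; NW forces each to agent~$2$. After $k$ rounds agent~$1$ has total value $1$, agent~$2$ has total value $k$; and $v_1$ of agent~$2$'s bundle is $0$, so EF1 is actually satisfied from both sides. That still fails. The genuinely right construction needs the adversary to make agent~$1$'s \emph{only} item become worthless to agent~$1$ in hindsight — impossible with additive valuations since values are fixed on arrival. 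Therefore the real mechanism must be: force the algorithm to commit a valuable-to-$1$ item to agent~$2$ (because at arrival time only agent~$2$ valued it), then flood agent~$1$ with junk. I would have $e_1$ valued $z$ (large) by agent~$1$ and $0$ by agent~$2$ — NW forces it to agent~$1$. Symmetric move: simultaneously this is fine for agent~$1$. Then $e_2$ valued $0$ by agent~$1$, $z$ by agent~$2$ — forced to agent~$2$. Now neither envies. The trick has to break symmetry via a single free choice and then punish it; I expect the working instance interleaves a medium item both agents value (forcing a real choice) followed by a cascade of items valued only by the \emph{other} agent, iterated, so that the $\mathsf{MMS}$ benchmark (which, for two agents, is roughly half the total, since the adversary can make the pile splittable evenly) grows while the shortchanged agent's utility stays bounded by a constant.

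Thus the key steps, in order: (1) set up the tri-valued value levels $a \le b \le z$ and a free parameter, identifying which arriving configurations \emph{compel} an allocation under NW (any item with exactly one positive-valuer) versus which leave a genuine binary choice; (2) on each genuine choice, use a symmetry/adversary argument — whichever agent the algorithm picks, the adversary follows the branch that hurts that agent; (3) release a long tail of items that are positively valued only by the ``victim's opponent,'' forcing them all to the opponent via NW, driving up both the total pool value and the victim's $\mathsf{MMS}$ share (for $n=2$, $\mathsf{MMS}_i^k$ is the best balanced split, which the adversary arranges to be large) while the victim's realized value remains a fixed constant; (4) conclude $v_i(A_i^k)/\mathsf{MMS}_i^k \to 0$ and, in parallel, $v_i(A_i^k)$ versus $v_i(A_j^k \setminus \{e\})$ for every removable $e$ also $\to 0$, so no constant $\alpha > 0$ survives. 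The main obstacle I anticipate is engineering the value triples so that \emph{simultaneously} (a) NW genuinely forces the cascade items to the opponent, (b) the victim cannot ``escape'' by having been handed a large item earlier, and (c) the EF1-removal slack ``up to one item'' cannot rescue the ratio — this likely requires the victim's opponent to accumulate many items no single one of which dominates, which is exactly what a tri-valued (rather than binary) profile buys us, and is presumably why the theorem needs three values rather than two.
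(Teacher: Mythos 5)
There is a genuine gap: your proposal never arrives at a working construction, and the final plan you settle on (steps (1)--(4)) is internally inconsistent in exactly the way your own first two attempts were. To force the ``cascade'' items to the opponent via non-wastefulness, those items must have zero marginal value to the victim; but then they contribute nothing to the victim's maximin share (recall $\mathsf{MMS}_i^k$ is computed with $v_i$) and nothing to $v_i(A_j^k\setminus\{e\})$, so neither the EF1 ratio nor the MMS ratio of the victim degrades -- the flood of opponent-only items is invisible to the victim's fairness benchmarks. Moreover, such items do not even exist in the class the theorem is about: the paper's tri-valued definition requires $0<a_i\le b_i\le z_i$, so every item is positively valued by everyone, and NW alone never compels an allocation to a specific agent. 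Your closing guess that one needs the opponent to accumulate many items, none of which dominates, is also not the mechanism that works, and in any case is not carried out.

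The idea you are missing is that the forcing comes from the \emph{fairness requirement itself}, not from NW, and three items suffice. The paper's instance uses values in $\{\epsilon,1,\tfrac{1}{\epsilon}\}$: round $1$ is an item worth $1$ to both (say it goes to agent $1$); round $2$ is an item worth $\tfrac{1}{\epsilon}$ to agent $1$ but only $\epsilon$ to agent $2$ -- the algorithm must give it to agent $2$, since otherwise agent $2$ holds nothing while positively valuing both allocated items, i.e.\ the allocation is already $0$-EF1 and $0$-MMS; round $3$ is an item worth $\tfrac{1}{\epsilon}$ to both. Now whichever agent does not receive $e_3$ is ruined: if agent $1$ gets it, agent $2$ holds value $\epsilon$ while even after removing the best item from $\{e_1,e_3\}$ she still sees value $1$, giving EF1 ratio $\epsilon$ (and MMS ratio $\Theta(\epsilon)$); if agent $2$ gets it, agent $1$ holds value $1$ against $v_1(\{e_2,e_3\}\setminus\{e\})=\tfrac{1}{\epsilon}$, again ratio $\epsilon$. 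Choosing $\epsilon<\alpha$ kills any constant $\alpha>0$. The asymmetry created in round $2$ -- agent $2$ forced to hold something tiny for her but huge for agent $1$ -- is what defeats the ``up to one item'' slack, not a long tail of small items; this is the step your adversary design lacks.
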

\begin{proof}
\begin{table}[tb]
\centering
    \begin{tabular}{c|c|c|c}
    \toprule
      & $e_1$ & $e_2$ & $e_3$   \\
      \midrule
        agent $1$  & \textcolor{red}{1} & $\frac{1}{\epsilon}$ &  $\frac{1}{\epsilon}$ \\
        \midrule
        agent 2 & $1$ & \textcolor{red}{$\epsilon$} &  $\frac{1}{\epsilon}$\\
        \bottomrule
    \end{tabular} 
\caption{The impossibility result of EF1 and MMS.}
    \label{tab: two_agents_impossibility}
\end{table}
Given any $\alpha>0$, let $\epsilon$ be a constant arbitrarily close to 0 such that $0<\epsilon<\alpha$.
Consider an online instance with two agents. For any agent $i \in N$ and item $e \in T$, $v_i(e)\in\{\epsilon, 1, \frac{1}{\epsilon}\}$, where $0<\epsilon <1$.
The items arrive online, and the valuation of each item is shown in Table \ref{tab: two_agents_impossibility}.
When $t =1$, the item $e_1$ arrives, $T^1=\{e_1\}$.
    Without loss of generality, 
    suppose it is allocated to agent 1, $A^1_1=\{e_1\}, A^1_2=\emptyset$.   
    When $t = 2$, the item $e_2$ arrives, $T^2=\{e_1,e_2\}$.
    To maintain the fairness of allocation, we must allocate it to agent 2, i.e., $A^2_1=\{e_1\}, A^2_2=\{e_2\}$.
    Otherwise, agent 1 gets $\{e_1,e_2\}$ and agent 2 gets an empty set, $v_2(\emptyset)=0<v_2(\{e_1,e_2\}\setminus\{e\})$ for any $e\in \{e_1,e_2\}$ and $MMS_2(T^2)=\epsilon$.
    It is $0$-EF1 and $0$-MMS.
   When $t = 3$, item $e_3$ arrives, $T^3=\{e_1,e_2,e_3\}$. $MMS_1(T^3)=\frac{1}{\epsilon}$, $MMS_2(T^3)=1+\epsilon$.
    If we allocate it to agent 1, i.e., $A^3_1=\{e_1,e_3\}, A^3_2=\{e_2\}$,
    we have $v_2(A^3_2)=\epsilon$, $v_2(A^3_1\setminus\{e_1\})=\frac{1}{\epsilon}$, and $v_2(A^3_1\setminus\{e_3\})=1$.
    Thus,
    ${\bf A^3}$ is $\epsilon$-EF1 and $\frac{\epsilon}{1+\epsilon}$-MMS.
    If $e_3$ is allocated to agent 2, i.e., $A^3_1=\{e_1\}, A^3_2=\{e_2,e_3\}$, we have
    $v_1(A^3_1)=1=\epsilon\cdot v_1(A^3_2\setminus\{e\})$ for any $e\in A^3_2$.
    Thus,
    ${\bf A^3}$ is $\epsilon$-EF1 and $\epsilon$-MMS.
    
    Therefore, in the above instance, no algorithm can guarantee an NW allocation that is $\alpha$-EF1 or $\alpha$-MMS.
    This result can be generalized to more than two agents, and we defer the proof to Proposition \ref{prop: no_ef1_mms_three_agents}. 
    \end{proof}

\begin{proposition}
\label{prop: no_ef1_mms_three_agents}
 For general additive functions, no deterministic online algorithm can compute an NW allocation that guarantees $\alpha$-EF1 or $\alpha$-MMS for any $\alpha > 0$ when there are more than two agents with additive personalized tri-valued valuation functions.   
\end{proposition}

\begin{proof}
\begin{table}[tb]
\centering
    \begin{tabular}{c|c|c|c|c|c|c|c}
    \toprule
      & $e_1$ & $e_2$ & $\ldots$&$e_{n-2}$ & $e_{n-1}$ & $e_n$ & $e_{n+1}$   \\
      \midrule
        agent $1$  & \textcolor{red}{$\frac{1}{\epsilon}$} & $\frac{1}{\epsilon}$ &  $\ldots$ &$\frac{1}{\epsilon}$ & $1$ & $\epsilon$ & $\frac{1}{\epsilon}$ \\
        \midrule
       agent $2$  & $\frac{1}{\epsilon}$ & \textcolor{red}{$\frac{1}{\epsilon}$} &  $\ldots$&$\frac{1}{\epsilon}$ & $1$ & $\epsilon$ & $\frac{1}{\epsilon}$\\
       \midrule
       $\ldots$& $\ldots$ & $\ldots$ & $\ldots$ & $\ldots$ & 
       $\ldots$ &$\ldots$ & $\ldots$\\
       \midrule
       agent $n-2$  & $\frac{1}{\epsilon}$ & $\frac{1}{\epsilon}$ &  $\ldots$&\textcolor{red}{$\frac{1}{\epsilon}$} & $1$ & $\epsilon$ & $\frac{1}{\epsilon}$\\
       \midrule
       agent $n-1$  & $\frac{1}{\epsilon}$ & $\frac{1}{\epsilon}$ &  $\ldots$&$\frac{1}{\epsilon}$ & \textcolor{red}{$1$} & $\frac{1}{\epsilon}$ & $\frac{1}{\epsilon}$\\
       \midrule
       agent $n$  & $\frac{1}{\epsilon}$ & $\frac{1}{\epsilon}$ &  $\ldots$&$\frac{1}{\epsilon}$ & $1$ & \textcolor{red}{$\epsilon$} & $\frac{1}{\epsilon}$\\
        \bottomrule
    \end{tabular} 
\caption{The impossibility results of EF1 and MMS for additive personalized tri-valued valuations when there are more than two agents.}
    \label{tab: impossiblility_ef1_three_values}
\end{table}
    We further extend the impossibility results of approximate EF1 and MMS for $n\geq 3$.
    Consider the following instance with $n \geq 3$ agents, where the value of each item is shown in Table \ref{tab: impossiblility_ef1_three_values}. 
    In the first $(n-2)$ rounds, to avoid 0-EF1 or 0-MMS, each item is allocated to the agent with the empty bundle.
    For the item that arrives before round $t = n-2$, it has the value of $\frac{1}{\epsilon}$ for any agent.
    Without loss of generality, we assume that agent $i$ picks the item in the round $t = i$, where $i \leq n$.   
    In round $t = n-1$, the arriving item has the value of 1 for any agent, and we assume that it is allocated to agent $n-1$. 
    In round $t = n$, the arriving item has the value of $\epsilon$ for agent $i \in N \setminus \{n-1\}$ and $\frac{1}{\epsilon}$ for agent $n-1$, and it should be allocated to agent $n$ to avoid 0-EF1 or 0-MMS.
     Finally, in the round $t = n+1$, the item with the value of $\frac{1}{\epsilon}$ for each agent arrives.
     Due to non-wastefulness, this item must be allocated to one agent.
     If this item is allocated to agent $i \in N \setminus \{n-1, n\}$, it is not hard to see that the allocation is $\epsilon^{2}$-EF1 and $\frac{\epsilon}{1+\epsilon}$-MMS.   
     If this item is allocated to agent $n-1$, this allocation is $\epsilon$-EF1 and $\frac{\epsilon}{1+\epsilon}$-MMS.
     If this item is allocated to agent $n$, we can derive that the competitive ratio of EF1 is $\epsilon$ and MMS is $\epsilon$.   
     Combining the above cases, we can conclude that the upper bound of EF1 is $\epsilon$ and MMS is $\epsilon$.
     For given $\alpha > 0 $, let $\epsilon < \alpha$, we can conclude that no online algorithm can guarantee an NW allocation that is $\alpha$-EF1 or $\alpha$-MMS for any given $\alpha > 0$. 
\end{proof}

By Theorem \ref{the: trivalued_ef1_mms_goods}, we cannot make further progress in the approximations of EF1 and MMS for additive personalized tri-valued valuation functions and beyond in the online setting.
In the following parts, we focus on two classes of valuation functions that are more general than additive binary functions, including submodular binary and additive personalized bi-valued valuation functions, and study whether there are positive results about the approximations of EF1 and MMS.

\subsection{Submodular Binary Valuations}
This part focuses on submodular binary valuation functions, where for each agent $i \in N$, $v_i$ is submodular and the marginal value of any item is 0 or 1.
For a subclass of submodular binary valuation functions based on maximum (unweighted) bipartite matching, also known as assignment \cite{munkres1957algorithms} or OXS \cite{lehmann2001combinatorial} valuations,
\cite{hosseini2024class} show that online allocations satisfying $\frac{1}{2}$-EF1 and $\frac{1}{2}$-MMS exist. 

\textit{Limitation of \cite{hosseini2024class}}:
Their $\frac{1}{2}$-EF1/MMS guarantees are restricted to OXS valuations, where items must be assigned within predefined groups under rigid matching constraints. Each member in the group can only receive at most one item, and once someone gets an item, it cannot be changed. 
Thus, their algorithm can only handle the OXS valuation function and its subclasses.

Moving beyond the online allocation model based on matching, we consider the general online allocation problem.
Building upon the effective greedy technique in the submodular optimization and the sequential allocating technique to guarantee fairness in \cite{hosseini2024class},
we propose the Marginal-Greedy Algorithm (Algorithm \ref{alg: indivisible_submodular+binary}), greedily assigning each arriving item to the agent who receives the maximum marginal value, for submodular binary valuation functions, and then show that the Marginal-Greedy Algorithm computes an NW allocation that is $\frac{1}{2}$-EF1, $\frac{1}{2}$-MMS, and $\frac{1}{2}$-max-USW. 

\begin{algorithm}[tb]
\caption{Marginal-Greedy Algorithm}
\label{alg: indivisible_submodular+binary}
\KwIn{An instance $(T, N, (v_i)_{i\in N})$ with submodular binary valuation functions}
\KwOut{An NW approximate EF1 and approximate MMS allocation ${\bf A}$}
Initialize: let $\pi=(\pi_1,\ldots, \pi_n)$ be an arbitrary order of $n$ agents;

Let ${\bf A} = (\emptyset, \ldots, \emptyset)$;

\SetKwProg{Def}{when}{ do}{}
\Def{item $e \in T$ arrives}{

\For{$i=1$ to $n$}
{

\If{$\Delta_{A_{\pi_i}}^{\pi_i}(e)>0$}
{

 Let $A_{\pi_i}=A_{\pi_i}\cup \{e\}$;

$\pi=(\pi_1,\ldots,\pi_{i-1},\pi_{i+1},\ldots,\pi_n,\pi_i)$;

 break;

}
}
}

\Return ${\bf A}$;
\end{algorithm}

In the analysis of results about submodular binary valuation functions, we effectively leverage submodularity and the properties of the matroid.
There exists a relationship of submodularity and matroid: a function is submodular binary iff it is a rank function of a matroid \cite{schrijver2003combinatorial}.

\begin{definition}
    A set system  $\mathcal{M} = (E, \mathcal{F})$, where $E$ is a ground set and $\mathcal{F}$ is a family of subset of $E$, is a matroid if	it satisfies the following properties:
(1) $\emptyset \in \mathcal{F}$;
(2) if  $D\in \mathcal{F}$ and $C\subseteq D$, then $C \in \mathcal{F}$;
(3) if $C, D\in \mathcal{F}$ and $|D|>|C|$, then there exists $ x \in D\setminus C$ such that $C + x \in \mathcal{F}$. 
The rank function of matroid $\mathcal{M}$, denoted by $r(\cdot)$, is defined by:
$r:2^T\rightarrow R$, $r(X)=\max\{|Y|:Y\subseteq T, Y\in \mathcal{F}\}$.
\end{definition}

\begin{theorem}
\label{the: indivisible_submodular+binary}
    For the deterministic allocation of indivisible goods with submodular binary valuation functions, the Marginal-Greedy Algorithm (Algorithm \ref{alg: indivisible_submodular+binary}) computes an NW allocation that satisfies $\frac{1}{2}$-EF1, $\frac{1}{2}$-MMS and $\frac{1}{2}$-max-USW.
\end{theorem}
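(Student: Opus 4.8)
The plan is to prove the three guarantees separately, exploiting the fact that each $v_i$ is the rank function of a matroid $\mathcal{M}_i = (T, \mathcal{F}_i)$ and that Algorithm~\ref{alg: indivisible_submodular+binary} maintains, at every round, a bundle $A_i$ that is independent in $\mathcal{M}_i$ (so $v_i(A_i) = |A_i|$). The non-wastefulness claim already follows from Claim~\ref{submodular_binary_NW}, since Algorithm~\ref{alg: indivisible_submodular+binary} allocates an item exactly when some agent has positive marginal value and the greedy-with-rotation rule is a special case of the rule analyzed there; I would state this explicitly first, together with the structural invariant that $A_i \in \mathcal{F}_i$ and hence $v_i(A_i^k) = |A_i^k|$ throughout.

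\textbf{The EF1 bound.} Fix a round $k$ and agents $i, i'$ with $A_{i'}^k \neq \emptyset$. The key is the rotation rule: whenever an agent receives an item she is moved to the back of the order $\pi$, so between any two consecutive items received by $i'$, every other agent — in particular $i$ — either received an item or \emph{declined} every item that was offered to her (i.e.\ had zero marginal value for it and for all agents before $i$ in the current order who also declined). Let $e$ be the last item $i'$ received. I would argue that for the set $A_{i'}^k \setminus \{e\}$, each item $g$ in it was available to $i$ at the moment it was allocated, and if $i$ did not take $g$ it was because $\Delta^i_{A_i}(g) = 0$ at that time; by submodularity (the exchange property of matroids), an item with zero marginal value into a subset has zero marginal value into any larger independent extension, so $v_i(A_i^k \cup (A_{i'}^k\setminus\{e\})) = v_i(A_i^k)$. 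Combined with submodularity $v_i(A_{i'}^k \setminus \{e\}) \le v_i(A_i^k) + v_i(A_{i'}^k\setminus\{e\}) - v_i(A_i^k \cup (A_{i'}^k\setminus\{e\})) + \big(v_i(A_i^k\cup(A_{i'}^k\setminus\{e\})) - v_i(A_i^k)\big)$ — more cleanly, submodularity gives $v_i(A_i^k) + v_i(A_{i'}^k\setminus\{e\}) \ge v_i(A_i^k \cup (A_{i'}^k\setminus\{e\}))$, and with the zero-marginal fact $v_i(A_i^k \cup (A_{i'}^k\setminus\{e\})) = v_i(A_i^k)$, this is vacuous; so instead I count: each time $i'$ received an item other than its last, $i$ had a turn, and either $i$ gained one unit of value or the item was worthless to $i$. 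Hence $|A_{i'}^k| - 1 \le |A_i^k| + (\text{number of items worthless to } i)$, and the worthless ones do not contribute to $v_i(A_{i'}^k\setminus\{e\})$. This yields $v_i(A_{i'}^k \setminus \{e\}) \le |A_i^k| + v_i(A_i^k) = 2 v_i(A_i^k)$ when $v_i(A_i^k) \ge 1$, and a direct check handles $v_i(A_i^k) = 0$; thus $\tfrac12$-EF1.

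\textbf{The MMS and USW bounds.} For $\tfrac12$-max-USW, note $\mathrm{OPT}(T^k) = \max_B \sum_i v_i(B_i) \le \sum_i r_i(T^k)$ and the greedy rule guarantees that no item with positive marginal value to some available agent is discarded, so every allocated item contributes; a standard matroid-greedy argument (each item that could have raised $\mathrm{OPT}$ raised $\mathrm{USW}(A)$ by at least $1$ for at least one agent, up to the rotation) gives $\mathrm{USW}(A^k) \ge \tfrac12 \mathrm{OPT}(T^k)$, the factor $2$ coming from the possibility that the greedy assignment saturates agent $i$ while the optimum would have used agent $j$. For $\tfrac12$-MMS I would fix agent $i$, let $B^* = (B_1^*,\dots,B_n^*)$ achieve $\mathsf{MMS}_i^k$, so $v_i(B_j^*) \ge \mathsf{MMS}_i^k$ for all $j$ and in particular $r_i(T^k) = v_i(T^k) \ge$ (by submodularity, the union bound in reverse) — the cleanest route is: $v_i(A_i^k) = r_i(A_i^k)$ and by the exchange argument above $r_i(A_i^k) = r_i(T^k \setminus D)$ where $D$ is whatever was diverted to others, and one shows $r_i(A_i^k) \ge \tfrac12 r_i(T^k)$ because every item with positive $i$-marginal that went elsewhere was taken by an agent who appeared before $i$ had exhausted her matroid, which by the rotation can happen at most once per item $i$ actually holds. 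Then $v_i(A_i^k) \ge \tfrac12 r_i(T^k) \ge \tfrac12 \mathsf{MMS}_i^k$ since $\mathsf{MMS}_i^k \le \tfrac1n v_i(T^k) \le r_i(T^k)$.

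\textbf{The main obstacle} is making the ``zero-marginal propagates'' and ``one turn per item'' arguments fully rigorous in the presence of the rotating priority order $\pi$: one must track, for the pair $(i,i')$, exactly how many items $i$ could have claimed between $i'$'s successive acquisitions, and rule out the degenerate case where $i$'s matroid is already saturated (which is precisely where the factor $\tfrac12$ rather than $1$ is lost). I expect the bulk of the work to be a careful invariant on the state $(A^k, \pi^k)$ — something like ``for every ordered pair $(i,i')$, $|A_i^k| \ge |\{g \in A_{i'}^k : \Delta^i_{A_i^{\tau(g)}}(g) > 0\}| - 1$'' where $\tau(g)$ is the round $g$ arrived — proved by induction on $k$, after which all three bounds drop out; I would also need the companion fact that $v_i(A_{i'}^k\setminus\{e\})$ counts only items with positive $i$-marginal at arrival, which is again the matroid exchange property. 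The USW part is the easiest and I would present it first as a warm-up.
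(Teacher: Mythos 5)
Your skeleton for NW, EF1 and USW is close to the paper's (NW via Claim~\ref{submodular_binary_NW}; EF1 via counting how often $i'$ can be ahead of $i$ under the rotation, which is exactly the paper's bound $|B_{i'}|\le |A_i^k|+1$ with $B_{i'}=\{g\in A_{i'}^k: A_i^k+g\in\mathcal{F}_i\}$; USW via the standard Lehmann--Lehmann--Nisan $\frac12$ guarantee for greedy with submodular valuations, which the paper makes precise by handing discarded items to agent $1$ and inducting on marginal valuations). However, your EF1 endgame has a flaw: the claim that ``the worthless ones do not contribute to $v_i(A_{i'}^k\setminus\{e\})$'' is false as stated. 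An item with zero $i$-marginal at its arrival is spanned by $A_i^k$, but it can still raise the rank of the \emph{different} set $A_{i'}^k\setminus\{e\}$ (e.g.\ in a partition matroid when $A_i^k$ holds the other element of its part). The correct way to convert the turn-count into a value bound is the union computation you dismissed as vacuous: by monotonicity $v_i(A_{i'}^k)\le r_i(A_i^k\cup A_{i'}^k)=|A_i^k|+|B^*|$ with $B^*\subseteq B_{i'}$, zero-marginal-at-arrival items are excluded from $B_{i'}$, and $|B_{i'}|\le |A_i^k|+1$ then gives $v_i(A_{i'}^k)\le 2v_i(A_i^k)+1$, hence $\frac12$-EF1 after removing one valuable item. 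This is fixable, but as written your final inequality does not follow.

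The MMS part is a genuine gap: both intermediate claims are false for matroid rank functions. The claim $v_i(A_i^k)\ge\frac12 r_i(T^k)$ fails already for $n\ge 3$ agents with identical additive binary valuations all equal to $1$: the rotation gives agent $i$ only about $k/n$ items, so $v_i(A_i^k)\approx \frac1n r_i(T^k)$, not $\frac12 r_i(T^k)$; and $\mathsf{MMS}_i^k\le\frac1n v_i(T^k)$ fails for the rank-one uniform matroid, where $\mathsf{MMS}_i^k=1=v_i(T^k)$. Since per agent the greedy only secures roughly a $1/n$ fraction of $r_i(T^k)$, no argument routed through $r_i(T^k)$ can yield $\frac12$-MMS; one must compare against the MMS partition itself. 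The paper's missing idea is a global count: if $v_i(A_i^k)<\frac12\mathsf{MMS}_i^k$, then each part $X_j$ of $i$'s MMS partition satisfies $r_i(A_i^k\cup X_j)\ge v_i(X_j)>2|A_i^k|$, so each part contributes more than $|A_i^k|+1$ items addable to $A_i^k$, i.e.\ more than $n(|A_i^k|+1)$ addable items in total; but every addable item has positive $i$-marginal at its arrival, so it was not discarded and lies in some $A_j^k$ with $j\ne i$, and your own turn-counting bound caps these at $(n-1)(|A_i^k|+1)$ --- a contradiction. Without this comparison across all $n$ parts, the $\frac12$-MMS claim in your proposal is unsupported.
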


\begin{proof}
Given an online fair allocation instance $I=(T, N, (v_i)_{i\in N})$, where $v_i$ is submodular and binary for all $i\in N$.
Fix an arbitrary round $k$, let $T^k=\{e_1,\ldots,e_k\}$ be the set of items that have already arrived, denoted by ${\bf A}^k=(A^k_1,\ldots, A^k_n)$ and $A^k_0$ the allocation and the set of all unallocated items at the end of round $k$ respectively.

In Algorithm \ref{alg: indivisible_submodular+binary}, the arriving item is allocated to an agent who thinks that the marginal value of this item is $1$ if there exists such an agent. By Lemma \ref{submodular_binary_NW}, $\textbf{A}^k$ is NW.
For each $v_i$, there exists a matroid $\mathcal{M}_i=(T^k,\mathcal{F}_i)$ such that $v_i$ is the rank function of $\mathcal{M}_i$. 
We have $v_i(A^k_i)=|A^k_i|$, and $A^k_i\in \mathcal{F}_i$.

First, we show that ${\bf A}^k$ is $\frac{1}{2}$-EF1. 
Consider any two agents $i,j\in N$, $i\ne j$.
If $v_i(A^k_j)>v_i(A^k_i)$, by the property (3) of the matroid, there must exist an item $e\in A^k_j$ such that $A^k_i+e\in \mathcal{F}_i$.
Let $B_j=\{e:e\in A^k_j, A^k_i+e\in \mathcal{F}_i\}$ and $B_j^*\in\arg \max\{|S|: S\subseteq B_j, A_i^k\cup S\in \mathcal{F}_i\}$. We will show $|B_j|\le |A^k_i|+1$. Assume that $|B_j|\ge |A^k_i|+2$, $j$ has the priority higher than $i$ at least $|B_j|\ge |A^k_i|+2$ times, which contradicts the order in the algorithm.
Combining the property (3) of the matroid, we have $v_i(A^k_j)=|A^k_i|+|B_j^*|\le |A^k_i|+|B_j|\le 2|A^k_i|+1$. 
Thus, ${\bf A}^k$ is $\frac{1}{2}$-EF1.

Second, we show that ${\bf A}^k$ is $\frac{1}{2}$-MMS. Consider a MMS partition ${\bf X}=(X_1,\ldots, X_n)$ for agent $i$ of item set $T$, we assume that the share of agent $i$ in ${\bf A}^k$ is less than half of the maximin share of $i$,  i.e. $v_i(A^k_i)\le \frac{1}{2}v_i(X_j)$ for any $j\in N$. 
By the property (3) of the matroid, there exists an item $e\in X_j$ such that $A^k_i+e\in \mathcal{F}_i$.
Let $C_j=\{e: e\in X_j, A^k_i+e\in \mathcal{F}_i\}$ and
$C_j^*\in\arg \max\{|S|: S\subseteq C_j, A_i^k\cup S\in \mathcal{F}_i\}$.
Then, we have $v_i(X_j)=v_i(A^k_i\cup C_j^*)=v_i(A^k_i)+v_i(C_j^*)=|A^k_i|+|C_j^*|$.
Since $v_i(A^k_i)=|A^k_i|\le \frac{1}{2}v_i(X_j)$ holds, 
we get $|C_j|\ge |C_j^*|>|A^k_i|+1$.
Then, summing up all agents, we have $|\cup_{j\in N}C_j|>n(|A^k_i|+1)$.
At last, we get 
$|\cup_{j\in N}C_j|=|\{e: e\in \cup_{j\in N}X_j, A^k_i+e\in \mathcal{F}_i\}|=|\{e: e\in T, A^k_i+e\in \mathcal{F}_i\}|=|\cup_{j\in N}B_j|\le (n-1)(|A^k_i|+1)$, which contradicts the inequality $|\cup_{j\in N}C_j|>n(|A^k_i|+1)$. Thus, ${\bf A}^k$ is $\frac{1}{2}$-MMS.

Finally, we show that ${\bf A}^k$ is $\frac{1}{2}$-USW. 
We make a slight adjustment to the Algorithm \ref{alg: indivisible_submodular+binary}: when a new item arrives, if the marginal utility is 0 for all agents, then assign the item to agent $1$ without updating the agents' order.
Then, we will show that
${\bf A}^{k'}=(A^k_1\cup A^k_0,\ldots, A^k_n)$ is the output of the adjusted algorithm.
We only need to prove that in each round $s\in [k]$, if the marginal value of the arrived item $e_s$ to $A^{s-1}_1$ is $1$, we have $v_i(A^{s-1}_1\cup\{e_s\})=v_i(A^{s-1}_1)+1$. 
By the monotonicity of $v_i$,
we get $v_i(A^{s-1}_1\cup A^{s-1}_0\cup\{e_s\})\ge v_i(A^{s-1}_1\cup\{e_s\})=v_i(A^{s-1}_1)+1=v_i(A^{s-1}_1\cup A^{s-1}_0)+1$, implying that the item $e_s$'s marginal value to $A^{s-1}_1\cup A^{s-1}_0$ is $1$.
Thus, ${\bf A}^{k'}=(A^k_1\cup A^t_0,\ldots, A^k_n)$ is the output of the adjusted algorithm.
Since $v_1(A^k_1\cup A^t_0)=v_1(A^k_1)$, we have $USW({\bf A}^k)=USW({\bf A}^{k'})$. 
In other words, we only need to show that ${\bf A}^{k'}$ is $\frac{1}{2}$-max-USW.
The remaining proof part is based on the idea from \cite{lehmann2001combinatorial}.
Consider the first item $e_1$, 
we assume that $e_1$ is allocated to agent $i^*$ during the execution of the algorithm and $v_{i^*}(e_1)=1$, $A_{i^*}^1=\{e_1\}$ and $A_i^1=\emptyset$ for $i\ne i^*$.
Let ${\bf Y}=(Y_1,\ldots,Y_n)$ be an allocation achieving the optimal USW of instance $I=(T^k, N, (v_i)_{i\in N})$, and we assume that $e_1$ is allocated to agent $i^o$ in $Y$, $e_1\in Y_{i^o}$.
Next, we construct a new instance ${I^{(1)}}=(T',N, (v_i^{(1)})_{i\in N})$, 
where $v_{i^*}^{(1)}={\Delta^{i^*}_{e_1}}$ and $v_i^{(1)}=v_i$ for all $i\ne i^*$, $T'=T^k\setminus\{e_1\}$, that is, the first item arriving is $e_2$,
and then input the instance $I^{(1)}$ into the adjusted algorithm.
Let initialized order of $N$ be $\pi=(\pi_1,\ldots,\pi_n,\pi_{i^*})$, the initialized allocation be ${\bf D}^0=(D^0_1,\ldots,D^0_n)$ , where $D_{i^*}=\{e_1\}$ and $D_i=\emptyset$ for all $i\ne i^*$.
Assume that ${\bf D}^k=(D^k_1,\ldots, D^k_n)$ is the allocation at the end of round $k$.
We have ${\bf D}^k={\bf A}^{k'}$, then $USW({\bf D}^k)=USW({\bf A}^{k'})=USW({\bf A}^k)$.
Now, we construct an allocation ${\bf S}=(S_1,\ldots, S_n)$ for instance $I'$ as follows: $S_{i^o}=Y_{i^o}\setminus\{e_1\}$, $S_i=Y_i$ for all $i\ne {i^o}$.
It holds that $v_{i^o}^{(1)}(S_{i^o})=v_{i^o}(S_{i^o})=v_{i^o}(Y_{i^o}\setminus \{e_1\})= v_{i^o}(Y_{i^o})-v_{i^o}(\{e_1\})\ge v_{i^o}(Y_{i^o})-v_{i^*}(\{e_1\})$.
By the monotonicity of $v_{i^*}$,  we have $v_{i^*}^{(1)}(S_{i^*})=v_{i^*}(Y_{i^*}\cup \{e_1\})-v_{i^*}(\{e_1\})\ge v_{i^*}(Y_{i^*})-v_{i^*}(\{e_1\})$. 
 $v_{i^o}^{(1)}(S_{i^o})=v_{i^o}(Y_{i^o}\setminus \{e_1\})\ge v_{i^o}(Y_{i^o})-v_{i^o}(\{e_1\})\ge v_{i^o}(Y_{i^o})-v_{i^*}(\{e_1\})$, where the first inequality follows from the submodularity and the second inequality follows from $v_{i^o}(\{e_1\})\le v_{i^*}(\{e_1\})$.
Let ${\bf Y}^{(1)}=(Y_1^{(1)},\ldots,Y_n^{(1)})$ be an allocation achieving the optimal USW of instance $I^{(1)}$.
It holds that
$USW({\bf Y}^{(1)})\ge USW({\bf S})=\cup_{i\in N}v_i^{(1)}(S_i)=\cup_{i\in N, i\ne i^*,i\ne i^o}v_i^{(1)}(S_i)+v_{i^o}^{(1)}(S_{i^o})+v_{i^*}^{(1)}(S_{i^*})\ge USW({\bf Y})-2v_{i^*}(\{e_1\})$.
By lemma 1 in \cite{lehmann2001combinatorial}, $v_{i^*}^{(1)}$ is also submodular.
So, by constructing the new instances $I^{(2)},\ldots, I^{(k-1)}$ and inducting each round based on the items that arrive, we have 
$USW({\bf A}^{k})=USW({\bf A}^{k'})=USW({\bf D}^{k})=USW({\bf Y}^{k})\ge \frac{1}{2}USW({\bf Y}).$
\end{proof}

\begin{theorem}
\label{the: martroid_rank_negative_results}
    For any $\epsilon>0$, with submodular binary valuation functions, no deterministic online algorithm can achieve 
    \begin{itemize}
        \item $(\frac{1}{2}+\epsilon)$-EF1 and non-wastefulness,
        \item or $(\frac{1}{2}+\epsilon)$-MMS and non-wastefulness,
        \item or $(\frac{1}{2}+\epsilon)$-max-USW and non-wastefulness.
    \end{itemize}
\end{theorem}

\begin{proof}
 \begin{figure}
 \label{1/2ef1+1/2-mms}
     \centering
     \includegraphics[width=0.6\textwidth]{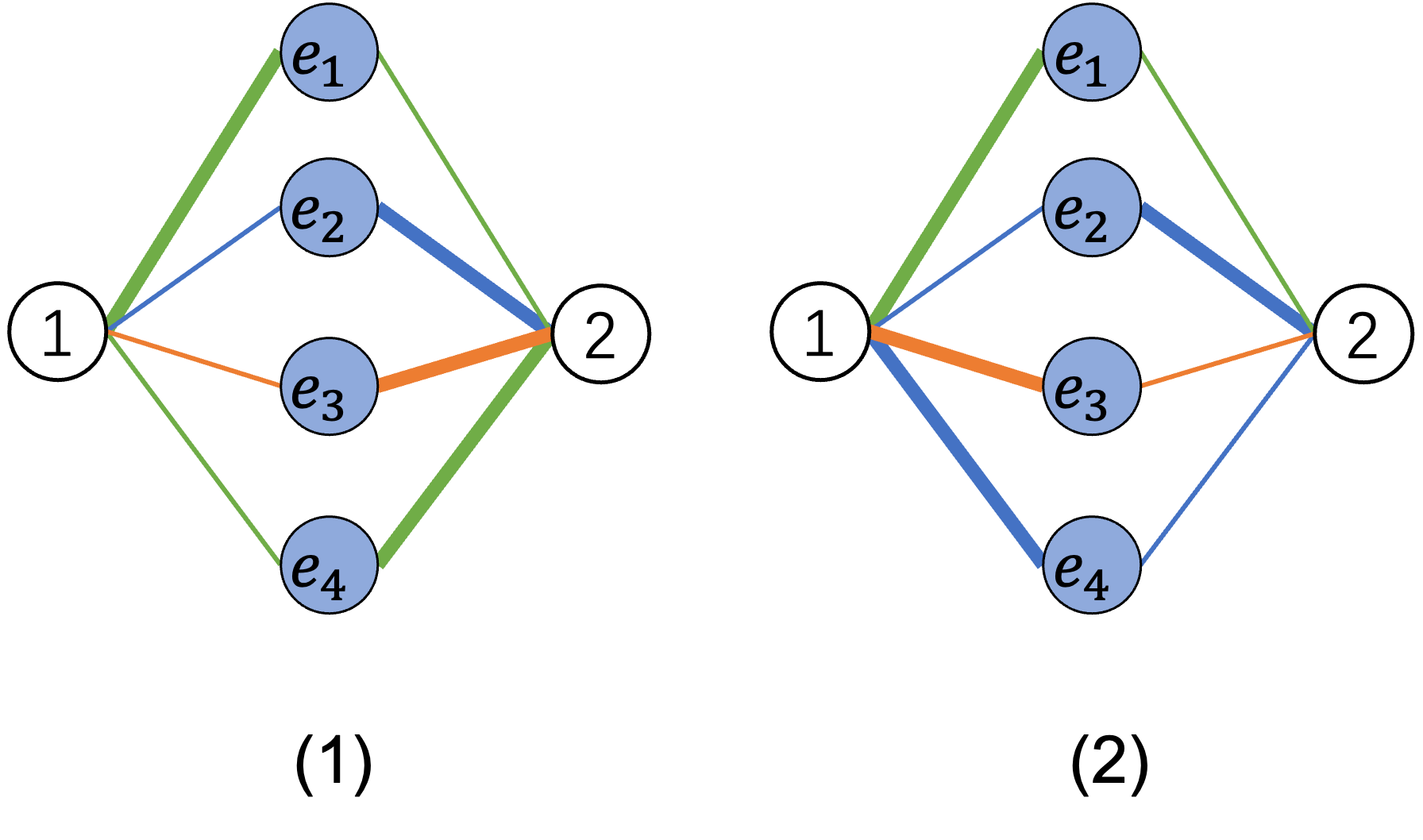}
     \caption{An instance with $\frac{1}{2}$-EF1 and $\frac{1}{2}$-MMS allocations}
     \label{fig1}
 \end{figure}
First, let us consider EF1 or MMS. Consider an online allocation instance, where there are two agents $\{1,2\}$ with identical valuations $v$, and the online items belong to several categories $C_1,\ldots, C_l$.
In Figure \ref{fig1}, 
categories are distinguished by colours, and edges of different colours represent items of different categories.
Denote a partition matroid $\mathcal{M}=(T,\mathcal{F})$, $\mathcal{F}=\{S: S\subseteq T, |S\cap C_k|\le 1, \text{for all}~ k\in [l]\}$. 
The valuation $v$ is the rank function of the matroid $\mathcal{M}$; for any item set $S$, the value of $S$ is the number of items of the maximal independent set of $S$. 
For the first arrived item $e_1$, the marginal values for agent $1$ and agent $2$ are $1$.
Without loss of generality, we allocate item $e_1$ to agent $1$. 
When the second item $e_2$ arrives, in order not to break fairness, we must allocate it to agent $2$.
For the third item $e_3$, the marginal values for agent $1$ and agent $2$ are $1$. 
We will discuss the following two cases.

\noindent\textbf{Case 1}: Allocate $e_3$ to agent $2$, shown in the left graph in Figure \ref{fig1} 
Then, $A^4_1=\{e_1\}$,$A^4_2=\{e_2,e_3,e_4\}$. 
We have that, 
$v(A^4_1)=\frac{1}{2}v(A^4_2\setminus\{e\})$ for any $e\in A^4_2$, and $v(A^4_1)=\frac{1}{2}\mathsf{MMS}_1$.

\noindent\textbf{Case 2}: Allocate $e_3$ to agent $1$, shown in the right graph in Figure \ref{fig1}; we must allocate the fourth item, $e_4$, to agent $1$ due to the requirement of non-wastefulness.
Then, $A^4_1=\{e_1,e_3,e_4\}$,$A^4_2=\{e_2\}$. 
We have that, 
$v(A^4_2)=\frac{1}{2}v(A^4_1\setminus\{e\})$ for any $e\in A^4_1$, and $v(A^4_2)=\frac{1}{2}\mathsf{MMS}_2$.

Next, let us consider USW. Consider an online allocation instance, where there are two agents $\{1,2\}$.
    For the first arrived item $e_1$, the marginal values for agents $1$ and $2$ are both $1$, $v_1(\{e_1\})=v_2(\{e_1\})=1$. 
    Without loss of generality, we allocate $e_1$ to agent $1$.
    When the second item $e_2$ arrives, $v_1(\{e_2\})=v_1(\{e_1,e_2\})=1$,$v_2(\{e_2\})=0$,
    the marginal values for agents $1$ and $2$ are both $0$.
    No matter which agent gets $e_2$, $USW(A^2)=1$. However, the max-USW is $2$ by allocating $e_1$ to agent $2$ and $e_1$ to agent $1$.
\end{proof}

 For an independent interest, we find that there is a side result of the Marginal-Greedy Algorithm, where it can also work for additive binary valuation functions to compute an NW allocation that is EF1, MMS, and max-USW, which demonstrates the robustness of our algorithm.
\subsection{Additive Personalized Bi-valued Valuations}

In this part, we consider the additive personalized bi-valued valuation function, which is often considered in the literature, such as \cite{ebadian2022fairly,aziz2023fair}. 
For any agent $i \in N$, her valuation function $v_i$ is additive personalized bi-valued if $v_i$ is additive and $v_i(e)\in \{a_i, b_i\}$ for any $e\in T$, where $0<a_i\leq b_i$.

It is worth investigating whether an NW allocation satisfying EF1 or MMS is achievable with additive personalized bi-valued valuation functions, as the binary case is a special type of bi-valued function.
However, we find that even for two agents with additive personalized bi-valued valuations, an NW allocation satisfying EF1 or MMS cannot be guaranteed.

\subsubsection{Two Agents}
We investigate the setting of additive personalized bi-valued valuations for two agents and establish the optimal non-trivial result for simultaneously approximating both EF1 and MMS.
To achieve this objective, we propose the Adapted Envy-Graph Procedure (Algorithm \ref{alg: two_agents_ef1_mms}), which requires sophisticated and subtle case analysis due to the uncertainty of future item values.

The Adapted Envy-Graph Procedure (AEGP) consists of three sub-algorithms: the Envy-Graph Procedure (EGP), the Preliminary-Breaking-Cycle (PBC), and the Deep-Breaking-Cycle (DBC), which are detailed in the Appendix.
The core structure of our algorithm employs the EGP as the primary procedure.
However, the EGP has a critical requirement: the input allocation is EF1 and must not contain any envy cycle.

When an envy cycle is detected, we execute the PBC and DBC algorithms as auxiliary procedures in subsequent rounds to allocate future items more effectively.
Once the output allocation from either the PBC or DBC algorithm satisfies the EGP algorithm's precondition, the EGP algorithm resumes execution.
The execution flow of this integrated process is illustrated in Figure \ref{fig: cases_illustration}.

\begin{figure}[tb]
     \centering
\includegraphics[width=0.8\textwidth]{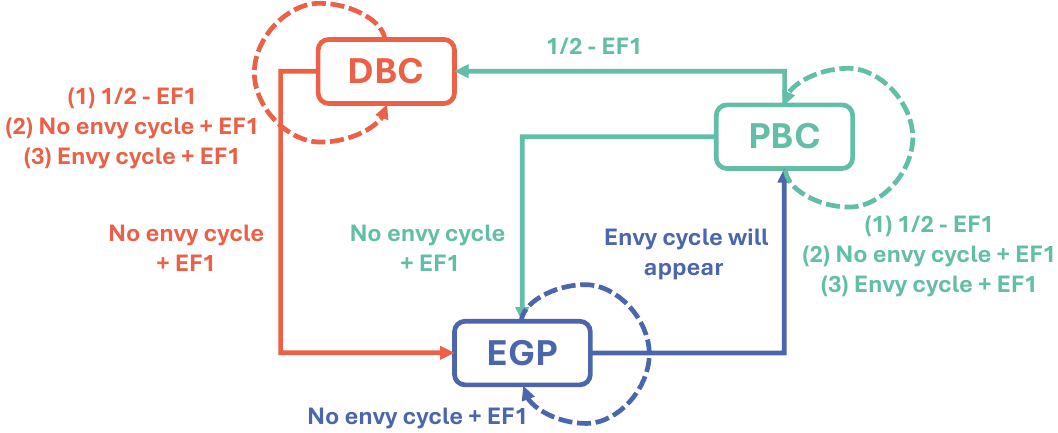}
     \caption{The execution flow of EGP, PBC, and DBC algorithms.  
     The dashed line represents the execution of the corresponding algorithm.
     The solid line means that when some specific cases occur, we execute another algorithm.}
     \label{fig: cases_illustration}
 \end{figure}

The main challenge lies in addressing envy cycles effectively.
This issue is straightforward to resolve in offline settings but becomes significantly more complex in online environments where reallocation is infeasible.
To overcome this obstacle, we first introduce the PBC algorithm, which is developed through careful analysis of item values over the next two rounds.
Specifically, the PBC algorithm can be executed consecutively for at most two rounds.
After these two rounds, the PBC algorithm successfully addresses most cases, achieving EF1 allocation without an envy cycle.
However, if the allocation remains be $\frac{1}{2}$-EF1, a more strategic approach is required in subsequent rounds to ensure fairer distribution.
To address this limitation, we introduce the DBC algorithm.
In the DBC algorithm, we establish allocation rules based on the parity of consecutive execution counts.
Unlike the PBC algorithm, the DBC algorithm can be executed consecutively across multiple rounds while maintaining the $\frac{1}{2}$-EF1 approximation guarantee, before its output allocation satisfies the precondition of the EGP.
In summary, both the PBC and DBC algorithms establish specific allocation rules for future items when an envy cycle emerges.
These rules not only ensure EF1 approximation throughout their execution but also continuously monitor whether the EGP precondition can be satisfied and actively work to achieve it when feasible.


\begin{definition}[Envy Graph for Goods]
Given an allocation ${\bf A}^k$, the corresponding envy graph is defined as $G = (V,E)$, where the vertex set $V$ corresponds to agent set $N$, and a directed edge $(i,j)\in E$ iff agent $i$ envies agent $j$, i.e., $v_i(A_i^k) < v_i(A_j^k)$. Additionally, the directed cycle in the envy graph is called an envy cycle.
\end{definition}

 \begin{algorithm}[tb]
\caption{Adapted Envy-Graph Procedure}
\label{alg: two_agents_ef1_mms}
\KwIn{An instance $(T, N, (v_1,v_2))$ with additive personalized bi-valued valuation functions}
\KwOut{An NW approximate EF1 and approximate MMS allocation ${\bf A}$}

Let ${\bf A} = (\emptyset, \ldots, \emptyset)$; 

Initialize $\alpha = 1$; \tcp{Algorithm Indicator}

Initialize $(\lambda_1, \mu_1)$ and  $(\lambda_2, \mu_2) $, where $\lambda_1=\lambda_2=1 $ and $\mu_1= \mu_2 =0$; \tcp*{$\lambda$: the times of consecutive execution for the  PBC or DBC algorithm, and $\mu$: the case label in the PBC or DBC algorithm.}

Build an envy graph $G$ for two agents;

\SetKwProg{Def}{when}{ do}{}
\Def{item $e \in T$ arrives}{
\uIf{$\alpha = 1$}{

${\bf A}, \alpha \leftarrow \mathsf{EGP}(N, {\bf A}, \alpha, e)$;

\If{$\alpha=2$}{
${\bf A}, \alpha \leftarrow \mathsf{PBC}(N, {\bf A}, \alpha, e, (\lambda_1, \mu_1))$ \tcp*{Envy cycle will appear.}

}

}
\uElseIf{$\alpha = 2$}{
${\bf A}, \alpha, (\lambda_1, \mu_1) \leftarrow \mathsf{PBC}(N, {\bf A}, \alpha, e, (\lambda_1, \mu_1))$;

}
\ElseIf{$\alpha=3$}{${\bf A}, \alpha, (\lambda_2, \mu_2) \leftarrow  \mathsf{DBC}(N, {\bf A}, \alpha, e, (\lambda_2, \mu_2))$;
}

Update the envy graph $G$;

}
\Return ${\bf A}$;
\end{algorithm}

\begin{theorem}
\label{the: two_agents_bivalued_ef1_mms}
 For the deterministic allocation of indivisible goods for two agents with additive personalized bi-valued valuations, the AEGP  algorithm (Algorithm \ref{alg: two_agents_ef1_mms}) computes an NW allocation that satisfies $\frac{1}{2}$-EF1 and $\frac{1}{3}$-MMS.   
\end{theorem}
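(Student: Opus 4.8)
The plan is to argue by induction on the round index $k$, maintaining the invariant that the allocation at the end of round $k$ is non-wasteful, $\tfrac12$-EF1 and $\tfrac13$-MMS, and bookkeeping which of the three routines (the base Adapted Envy-Graph Procedure, the Preliminary-Breaking-Cycle algorithm of Algorithm~\ref{alg: two_agents_cycle}, and the Deep-Breaking-Cycle algorithm of Algorithm~\ref{alg: two_agents_cycle_cases}) processes the item $e_k$. Non-wastefulness is immediate: the valuations are additive with $v_i(e)\in\{a_i,b_i\}$, $0<a_i\le b_i$, so every item has strictly positive marginal value for both agents and no item is ever discarded; hence condition~(1) of Definition~\ref{def: non-wastefulness} reduces to $v_i(e)>0$ and condition~(2) is vacuous. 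In particular the allocation is always complete in value, $v_i(A_i^k)+v_i(A_j^k)=v_i(T^k)$, which is used below.

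For $\tfrac12$-EF1 I would first handle the two routine base-algorithm transitions. If round $k$ starts with no envy between the agents, the item is added to some agent $i$; since $v_i(A_i^{k-1})\ge v_i(A_j^{k-1})$ beforehand, agent $i$ still does not envy $j$, and any envy $j\to i$ that appears disappears once the new item is deleted from $A_i^k$, so the round ends (fully) EF1. If round $k$ starts with exactly one envy edge, say $i$ envies $j$, then in the two-agent cycle-free state $i$ is also the unique unenvied agent, so the base algorithm gives $e_k$ to $i$; if this removes the edge we are back in the no-envy case, and otherwise control passes to the cycle-breaking routines. The heart of the EF1 proof is thus to show that Algorithms~\ref{alg: two_agents_cycle} and~\ref{alg: two_agents_cycle_cases} keep the allocation $\tfrac12$-EF1 at the end of every round they process and terminate, returning control to the base algorithm in a cycle-free, $\tfrac12$-EF1 state. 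This is where bi-valuedness is essential: whenever agent $i$ is the one lagging, every single item of $A_j^k$ is worth at most $b_i$ to $i$, so $v_i(A_j^k\setminus\{e'\})\ge v_i(A_j^k)-b_i$, and one shows --- via the finite case split on whether the items involved in the conflict are $a$- or $b$-valued for each agent, as organised in Figure~\ref{fig: cases_illustration} --- that the algorithm always leaves agent $i$ with $v_i(A_i^k)\ge\tfrac12\big(v_i(A_j^k)-b_i\big)$, and symmetrically for $j$.

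For $\tfrac13$-MMS I would derive it from the $\tfrac12$-EF1 inequality already in hand. For two agents $\mathsf{MMS}_i^k\le\tfrac12 v_i(T^k)=\tfrac12\big(v_i(A_i^k)+v_i(A_j^k)\big)$; combined with $v_i(A_i^k)\ge\tfrac12\big(v_i(A_j^k)-b_i\big)$, i.e.\ $v_i(A_j^k)\le 2v_i(A_i^k)+b_i$, this gives $v_i(T^k)\le 3v_i(A_i^k)+b_i$ and hence $\mathsf{MMS}_i^k\le\tfrac32 v_i(A_i^k)+\tfrac12 b_i$. So $v_i(A_i^k)\ge\tfrac13\mathsf{MMS}_i^k$ holds as soon as $v_i(A_i^k)\ge\tfrac13 b_i$, which is true whenever $A_i^k$ contains a $b$-item or at least three $a$-items. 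The residual cases ($A_j^k=\emptyset$, or $A_i^k$ tiny) have to be dismissed directly: when $A_j^k=\emptyset$ we have $\mathsf{MMS}_i^k\le\tfrac12 v_i(A_i^k)$, and when $A_i^k$ is a single low item while $\tfrac12$-EF1 caps $v_i(A_j^k)$ at $2a_i+b_i$, the MMS partition of $T^k$ is itself forced to be small, so $\tfrac13$-MMS (indeed more) still holds. This boundary bookkeeping, not the main inequality, is what fixes the constant at $\tfrac13$ rather than $\tfrac12$.

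The main obstacle I expect is the cycle-breaking analysis. Unlike the offline envy-graph procedure, Algorithm~\ref{alg: two_agents_ef1_mms} cannot reallocate, so Algorithms~\ref{alg: two_agents_cycle} and~\ref{alg: two_agents_cycle_cases} must absorb a potential two-cycle purely by steering where the next few arriving items go, with no information about the future; one must verify both that this terminates (returns to the base algorithm) and that every intermediate allocation stays $\tfrac12$-EF1 and $\tfrac13$-MMS. I expect this to be handled by an exhaustive but finite case split on the $\{a_i,b_i\}$-types of the conflicting items --- the branches drawn in Figure~\ref{fig: cases_illustration} --- with the bi-valued assumption guaranteeing the split closes.
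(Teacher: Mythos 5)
Your high-level architecture matches the paper's (induction, with the two cycle-breaking subroutines absorbing a potential two-cycle and a reduction from $\tfrac12$-EF1 to $\tfrac13$-MMS), but the proposal has a genuine gap: the entire substance of the theorem is the exhaustive case analysis of Algorithms~\ref{alg: two_agents_cycle} and~\ref{alg: two_agents_cycle_cases}, and you defer it wholesale ("one shows via the finite case split \dots"). In the paper this is Observations~\ref{obs: pre_breaking_cycle}--\ref{obs: deep_breaking_cycle} and Lemmas~\ref{lem: preliminary_breaking_cycle}--\ref{lem: deep_breaking_cycle}: one must track the state variables $\lambda,\mu$, verify for every branch which agent receives the item, show that Algorithm~\ref{alg: two_agents_cycle_cases} (which can loop) re-establishes the quantitative invariants of Observation~\ref{obs: deep_breaking_cycle} each time it re-enters $\lambda=1$, and show that control only returns to Algorithm~\ref{alg: two_agents_ef1_mms} when the allocation is \emph{exactly} EF1 and cycle-free. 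None of this is carried out, so the proposal is a plan rather than a proof.

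Two of your stated invariants are also not strong enough to close the induction. First, you propose to return control to the base algorithm in a "cycle-free, $\tfrac12$-EF1 state"; the paper's argument needs exact EF1 without a cycle at hand-back (e.g.\ Observation~\ref{obs: pre_breaking_cycle} deduces $v_i(e)=a_i$ for the triggering item precisely because the current allocation is EF1, so the envy gap is at most $b_i$), and with only $\tfrac12$-EF1 at hand-back that deduction fails and the case analysis cannot be re-applied. Second, the per-round inequality you aim for, $v_i(A_i^k)\ge\tfrac12\bigl(v_i(A_j^k)-b_i\bigr)$, is implied by $\tfrac12$-EF1 but does not imply it: when every item of $A_j^k$ is worth only $a_i$ to agent $i$, the definition requires $v_i(A_i^k)\ge\tfrac12\bigl(v_i(A_j^k)-a_i\bigr)$, which is strictly stronger. (Your MMS derivation, which only uses $v_i(A_j^k)\le 2v_i(A_i^k)+b_i$ together with $\mathsf{MMS}_i^k\le\tfrac12 v_i(T^k)$, is essentially a self-contained re-derivation of the implication $\tfrac12$-EF1 $\Rightarrow\tfrac13$-MMS that the paper simply cites; that part is fine modulo tightening the small-bundle cases.)
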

The following two lemmas are crucial in the proof of Theorem \ref{the: two_agents_bivalued_ef1_mms}. 

\begin{algorithm}[t]
\caption{EGP (Envy-Graph Procedure)}
\label{alg: egp}
\KwIn{$N, A, \alpha, e$}
\KwOut{An NW allocation $\textbf{A}$ and $\alpha$}

\eIf{there is no envy between the two agents
}{
\eIf{$\exists i \in N$, $v_i(e) = b_i$ }{
Choosing the agent $i$ who values $e$ at $b_i$ (breaking ties arbitrarily); 
\label{line: choose the high value}
}
{
Arbitrarily choose one agent $i$;
}
}
{
Let $i \in N$ be an unenvied agent and $j = 2-i$;

\If{$v_i(A_i \cup \{e\})<v_i(A_j)$}
{
$\alpha=2$;

\Return $\textbf{A}$  and $\alpha$;
\label{line: terminated_jump}
}
}
$A_i = A_i \cup \{e\}$;

\Return $\textbf{A}$ and $\alpha$;
\end{algorithm}

\begin{lemma}
\label{lem: egp}
    The EGP algorithm (Algorithm \ref{alg: egp}) computes an NW allocation that is EF1 without an envy cycle.
\end{lemma}

\begin{proof}
    Note that there are three kinds of input allocation.
    The first one is the initial empty allocation, which is EF1 without an envy cycle.
    The second one is the output allocation of the PBC algorithm.
    The last one is the output allocation of the DBC algorithm. 
    It suffices to show that the last two kinds of input allocation are EF1 without an envy cycle.
    That is because if the input allocation of the EGP algorithm is EF1 without an envy cycle, by the description of the algorithm and induction, it is easy to check that the property of EF1 without an envy cycle is always maintained.
    By Lemma \ref{lem: pbc_dbc}, we can derive that when the PBC or DBC algorithm outputs $\alpha=1$, implying that the EGP algorithm is ready to be executed, the output allocation of the PBC or DBC algorithm is EF1 without an envy cycle.
    Therefore, our claim holds.   
\end{proof}

\begin{lemma}
\label{lem: pbc_dbc}
    The PBC or DBC algorithm (Algorithm \ref{alg: two_agents_cycle} or \ref{alg: two_agents_cycle_cases}) computes an NW allocation satisfying 
    \begin{itemize}
        \item $\frac{1}{2}$-EF1,
        \item or EF1 with an envy cycle,
        \item or EF1 without an envy cycle.
    \end{itemize}
\end{lemma}
We split the proof of Lemma \ref{lem: pbc_dbc} into two parts, where we show that the claim holds for both the PBC and DBC algorithms.
\begin{observation}
\label{obs: pre_breaking_cycle}
    When the PBC algorithm (Algorithm \ref{alg: two_agents_cycle})  is executed, for the first arriving item $e$, we have $v_i(e) =a_i$.
\end{observation}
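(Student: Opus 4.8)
\textbf{Proof proposal for Observation \ref{obs: pre_breaking_cycle}.}

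The plan is to trace back to the moment Algorithm \ref{alg: two_agents_cycle} is invoked and argue that the triggering item $e$ could not have had high value $b_i$ for the agent $i$ who is about to receive a bundle worth less than the other agent's bundle. First I would recall the invocation condition from line~\ref{line: terminated_jump} of Algorithm \ref{alg: two_agents_ef1_mms}: the procedure is called exactly when, at the arrival of item $e$, there is currently no envy between the two agents, $i$ is the (unenvied) agent selected to receive $e$, $j = 2-i$, and yet $v_i(A_i \cup \{e\}) < v_i(A_j)$. Because the additive valuations are bi-valued with $v_i(e) \in \{a_i, b_i\}$, the inequality $v_i(A_i) + v_i(e) < v_i(A_j)$ must hold; I want to show the only way this is consistent with the state maintained by the base algorithm is $v_i(e) = a_i$.

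The key step is to invoke the invariant that the base algorithm maintains an EF1 (indeed, I expect, a stronger ``no envy'' or ``EF1 with a structural control on which item causes the envy'') allocation just before $e$ arrives, together with the selection rule on lines~\ref{line: choose the high value} and the arbitrary-choice branch: when there is no envy, the agent with the higher marginal value for the incoming item is the one who picks it (ties broken arbitrarily). So I would argue by contradiction: suppose $v_i(e) = b_i$. Then $v_i(A_i) + b_i < v_i(A_j)$. Since before $e$ there was no envy, $v_i(A_i) \ge v_i(A_j)$ is false only mildly — more precisely no envy between the two agents means $v_i(A_i) \ge v_i(A_j) - (\text{one item})$ in the EF1 sense, but combined with the fact that $i$ is \emph{unenvied}, one direction is tight; I then derive that $v_i(A_j)$ exceeds $v_i(A_i) + b_i$, which forces $A_j$ to contain at least two items that $i$ values at $b_i$ beyond what EF1/non-envy permits, contradicting the maintained invariant. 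The cleanest route is: no envy before $e$ gives $v_i(A_i) \ge v_i(A_j)$ is not required, but \emph{unenvied} status of $i$ plus EF1 of $j$ toward $i$ is not what we need — rather we use that the \emph{base algorithm never lets $i$ fall behind by more than the largest single item}, i.e. $v_i(A_j) - v_i(A_i) \le b_i$; then $v_i(A_i \cup \{e\}) = v_i(A_i) + v_i(e) \ge v_i(A_i) + a_i$, and if $v_i(e) = b_i$ we would get $v_i(A_i \cup \{e\}) \ge v_i(A_i) + b_i \ge v_i(A_j)$, contradicting the invocation condition. Hence $v_i(e) = a_i$.

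The main obstacle I anticipate is pinning down the precise invariant that the base algorithm guarantees at the end of every round \emph{before} any cycle-breaking is needed — in particular showing $v_i(A_j) - v_i(A_i) \le b_i$ holds for the unenvied agent $i$. This should follow from an induction on rounds: whenever envy is absent the selection rule and the update $A_i \leftarrow A_i \cup \{e\}$ cannot increase the gap beyond one item's worth, and whenever envy is present the algorithm has either just terminated into Algorithm \ref{alg: two_agents_cycle} or is maintaining EF1 by construction. I would state this gap bound as a small auxiliary claim (or fold it into the running invariant used for Theorem \ref{the: two_agents_bivalued_ef1_mms}), prove it by the same induction, and then the observation is immediate. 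A secondary subtlety is handling the very first round and the case $A_i = A_j = \emptyset$, where the hypothesis $v_i(A_i \cup \{e\}) < v_i(A_j) = 0$ is vacuously impossible, so Algorithm \ref{alg: two_agents_cycle} is never invoked on an empty prior allocation — this edge case actually makes the statement easier, not harder, and I would dispatch it in one line.
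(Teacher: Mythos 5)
Your proposal is correct and follows essentially the same route as the paper: assume $v_i(e)=b_i$, use the maintained EF1 invariant to get the gap bound $v_i(A_j)-v_i(A_i)\le b_i$, and contradict the strict invocation condition $v_i(A_i\cup\{e\})<v_i(A_j)$ of line~\ref{line: terminated_jump}. The only slip is your description of the state at invocation --- Algorithm~\ref{alg: two_agents_cycle} is triggered inside the branch where envy \emph{does} exist (the unenvied agent $i$ envies $j$), not when there is no envy --- but since your final derivation uses only the EF1 gap bound, this does not affect the argument.
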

\begin{proof}
    The reason why the PBC algorithm is executed is that for the arriving item $e$, if we add it to agent $i$'s bundle, there will be an envy cycle between two agents. i.e., $v_i(A_i \cup \{e\}) < v_i(A_j)$ and $v_j(A_j)<v_i(A_i \cup \{e\})$.
    Suppose for the contradiction that $v_i(e) = b_i$.
    Since the input allocation $A$ is EF1, for agent $i$, we have $v_i(A_j) -v_i(A_i) \leq b_i$.
    Thus, allocating item $e$ to agent $i$ eliminates the envy and cannot lead to an envy cycle, which is a contradiction.
    Therefore, for the arriving item $e$, it holds that $v_i(e)=a_i$.
\end{proof}

\begin{observation}
     When the PBC algorithm (Algorithm \ref{alg: two_agents_cycle})  is executed, we have $|A_i| \geq 1$ and $|A_j|\geq 1$.
\end{observation}

\begin{proof}
    Suppose that $|A_i|=0$.
    In that case, we have $|A_j|=1$.
    Otherwise, there would be an envy cycle.
    After allocating $e$, there will be an envy cycle, i.e., $v_i(e)=a_i<v_i(A_j)=b_i$ and $v_j(A_j)<v_j(e)$.
    If the second inequality holds, we have $v_j(A_j)=a_j$ but $v_i(A_j)=b_i$, which contradicts the line \ref{line: choose the high value} in the EGP algorithm (Algorithm \ref{alg: egp}).
    Suppose that $|A_j|=0$.
    In that case, we have $|A_i|=0$, which contradicts the fact that after allocating $e$, an envy cycle appears. \end{proof}

\begin{lemma}
\label{lem: preliminary_breaking_cycle}
   The PBC algorithm (Algorithm \ref{alg: two_agents_cycle}) computes an NW allocation satisfying:
   \begin{itemize}
       \item $\frac{1}{2}$-EF1,
       \item or EF1 with an envy cycle,
       \item or EF1 without an envy cycle.
   \end{itemize}
\end{lemma}
\begin{proof}
When the PBC algorithm is implemented ($\alpha=2$), the input allocation $\textbf{A}=(A_1, A_2)$ output by the EGP algorithm (Algorithm \ref{alg: egp}) is EF1 without an envy cycle.
By the description of the PBC algorithm, it is executed in at most two consecutive rounds.

\bm{$\lambda_1 = 1$}. 
For the first arriving item $e_1$, by Observation \ref{obs: pre_breaking_cycle}, we have $v_i(e_1) = a_i$.
If $\lvert A_j \rvert = 1$, we have $v_i(A_j)=b_i$ since agent $i$'s bundle is EF1.
Then, agent $j$ picks this item.
For agent $i$, her bundle is EF1 since we have $\lvert A_i \lvert \geq 1$ and $v_i(e_1)=a_i$.
For agent $j$, her new bundle is EF since she does not envy agent $i$.
Thus, this allocation is EF1 without an envy cycle, and the algorithm is terminated.
If $\lvert A_j \rvert > 1$, let us consider two cases about the valuation of item $e_1$.
\begin{itemize}
    \item If $v_j(e_1) = b_j$, item $e$ is allocated to agent $j$.
    Then, we set $\lambda_1=2$ and $\mu_1 = 1$.
In this case, for agent $j$, her new bundle is EF since $v_j(A_j \cup \{e_1\})> v_j(A_j)\geq v_j(A_i)$ and for agent $i$, her bundle is $\frac{1}{2}$-EF1, i.e, $v_i(A_i)\geq \frac{1}{2}v_i(A_j \cup \{e_1\} \setminus \{g\})$ holds for $g \in A_j \cup \{e_1\}$,  since we have $v_i(A_i) \geq v_i(A_j \setminus \{g\})$ for some $g \in A_j$ and $v_i(A_i) \geq a_i=v_i(e_1)$.
Thus, this allocation is $\frac{1}{2}$-EF1.
\item If $v_j(e_1)=a_j$, item $e$ is allocated to agent $i$.
Then, we set $\lambda_1=2$ and $\mu_1 = 2$.
In this case, for agent $i$, her new bundle is still EF1 since she is an unenvied agent in Algorithm \ref{alg: two_agents_ef1_mms} and the input allocation $A$ is EF1, and for agent $j$, her bundle is EF1 since we have $v_j(A_j)\geq v_j(A_i)$ but $v_j(A_j)< v_j(A_j \cup \{e_1\})$.
Thus, this allocation is EF1 with an envy cycle.
\end{itemize}

It is clear that any possible allocation from the above cases does not satisfy EF1 without an envy cycle.
Thus, we keep $\alpha=2$.

\bm{$\lambda_1 = 2$}.
The second item $e_2$ arrives.
Let us consider two cases about $\mu_1$.
\begin{itemize}
    \item If $\mu_1 = 1$, agent $i$ picks it.
In this case, for agent $j$, her bundle is still EF since we have $v_j(A_j)\geq v_j(A_i)$ and $v_j(e_1)=b_j\geq v_j(e_2)$.
For agent $i$, her new bundle is EF1 since we have $v_i(A_i)\geq v_i(A_j \setminus\{g\})$ for some $g \in A_j$ and $v_i(e_2)\geq v_i(e_1)=a_i$.
Thus, this allocation is EF1 without an envy cycle, and we set $\alpha=1$, $\lambda_1=1$, and $\mu_1=0$.
\item If $\mu_1 = 2$, let us consider two cases.
When $v_i(e_2)=a_i$, agent $j$ picks it.
For agent $i$, her bundle is still EF1 since we have $v_i(A_i) \geq v_i(A_j \setminus \{g\})$ for some $g\in A_j$ and $v_i(e_1)=v_i(e_2)=a_i$.
For agent $j$, her new bundle is EF because $v_j(A_j) \geq v_j(A_i)$ and $v_j(e_2)=v_j(e_1)=a_j$.
Hence, the allocation is EF1 without an envy cycle, and we set $\alpha=1$, $\lambda_1=1$, and $\mu_1=0$. 
When $v_i(e_2)=b_i$, it is allocated to agent $i$.
For agent $i$, her new bundle is EF since it holds that $v_i(A_i \cup \{e_1,e_2\}) = v_i(A_i)+a_i+b_i \geq v_i(A_j \setminus \{g\})+v_i(g)+a_i$ for some $g \in A_j$ where the inequality follows from the fact that $v_i(g)\in\{a_i, b_i\}$.
For agent $j$, her bundle is $\frac{1}{2}$-EF1, i.e., $v_j(A_j) \geq \frac{1}{2}v_j(A_i \cup\{e_1,e_2\}\setminus\{g\})$ holds for some $e \in A_i\cup \{e_1,e_2\}$, since $v_j(A_j)\geq v_j(A_i)$ and $v_j(A_j) \geq v_j(e_1) =\min \{v_j(e_1),v_j(e_2)\}$.
Thus, this allocation is $\frac{1}{2}$-EF1, and we set $\alpha=3$, $\lambda_1=1$, and $\mu_1=0$.
\end{itemize}
Combining the above cases, it is not hard to see that our claim is correct.
\end{proof}

\begin{observation}
\label{obs: deep_breaking_cycle}
    When the DBC algorithm (Algorithm \ref{alg: two_agents_cycle_cases}) is executed, there are two properties about input allocation $\bf A$:
    \begin{itemize}
        \item $v_i(A_i) \geq v_i(A_j)+a_i$;
        \item Let $e_1$ and $e_2$ denote two new allocated items in the PBC algorithm, we have $v_j(A_j) \geq \frac{1}{2} v_j(A_i \setminus \{e\})$, where $e \in \arg\max_{ g\in \{e_1,e_2\}}v_j(g)$.
    \end{itemize}
\end{observation}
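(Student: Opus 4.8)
The plan is to trace exactly which branch of the Preliminary-Breaking-Cycle Algorithm (Algorithm \ref{alg: two_agents_cycle}) hands control to the Deep-Breaking-Cycle Algorithm (Algorithm \ref{alg: two_agents_cycle_cases}), and then read off the two claimed inequalities from the state of the bundles at that moment. By the case analysis in the proof of Lemma \ref{lem: preliminary_breaking_cycle}, Algorithm \ref{alg: two_agents_cycle} terminates with an EF1 allocation (cycle or no cycle) in every branch \emph{except} the one with $\mu = 2$, i.e.\ the branch where the first extra item $e_1$ satisfies $v_j(e_1) = a_j$ and is given to agent $i$, leaving an EF1 allocation \emph{with} an envy cycle; the algorithm then does not terminate, and it is precisely this residual envy cycle that Algorithm \ref{alg: two_agents_cycle_cases} is invoked to resolve. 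So throughout I may assume we are in the $\mu = 2$ sub-branch, with $e_1$ added to agent $i$'s bundle in round $\lambda = 1$, and $e_2$ added (to whichever agent) in round $\lambda = 2$.

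For the first bullet, recall that an envy cycle between the two agents means simultaneously $v_j(A_j) < v_j(A_i)$ and $v_i(A_i) < v_i(A_j)$; but the reason control passed to Algorithm \ref{alg: two_agents_cycle_cases} rather than terminating is that the $\mu=2$, $v_i(e_2)=b_i$ sub-case of Lemma \ref{lem: preliminary_breaking_cycle} has fired and left a persistent cycle. In that sub-case $e_1$ and $e_2$ are both given to agent $i$ with marginal values $a_i$ and $b_i$, so agent $i$'s bundle grew by $a_i + b_i$ while she was already (weakly) un-envied; hence $v_i(A_i) \geq v_i(A_j) + a_i + b_i \geq v_i(A_j) + a_i$, which is the first claimed property (I will need to be careful to write $A_i$ for the bundle as it stands when Algorithm \ref{alg: two_agents_cycle_cases} starts, i.e.\ after $e_1,e_2$ have been absorbed). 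For the second bullet, the final lines of the proof of Lemma \ref{lem: preliminary_breaking_cycle} already established that in exactly this sub-case agent $j$'s bundle is $\frac{1}{2}$-EF1 against agent $i$'s augmented bundle, via $v_j(A_j) \geq v_j(A_i^{\mathrm{old}})$ together with $v_j(A_j) \geq \min\{v_j(e_1), v_j(e_2)\}$; dropping the $\min$-valued item from $A_i \cup \{e_1,e_2\}$ leaves a set of value at most $v_j(A_i^{\mathrm{old}}) + \max\{v_j(e_1),v_j(e_2)\} \leq 2 v_j(A_j)$, which is exactly $v_j(A_j) \geq \frac12 v_j(A_i \setminus \{e\})$ for $e \in \arg\max_{g \in \{e_1,e_2\}} v_j(g)$.

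The main obstacle is bookkeeping rather than mathematics: I must pin down precisely which sub-branch of Algorithm \ref{alg: two_agents_cycle} actually feeds Algorithm \ref{alg: two_agents_cycle_cases} — only the $\mu=2$ branch reaches round $\lambda=2$ without terminating, and among its sub-cases only $v_i(e_2)=b_i$ exits with a surviving cycle while $v_i(e_2)=a_i$ terminates cleanly — so I should state this explicitly and then invoke the corresponding computation already done inside the proof of Lemma \ref{lem: preliminary_breaking_cycle} rather than redo it. A secondary subtlety is notational consistency: the symbols $A_i, A_j$ in the statement of the observation refer to the bundles \emph{as handed to} Algorithm \ref{alg: two_agents_cycle_cases}, which already include $e_1$ (and possibly $e_2$), so when I reuse the EF1 facts about the pre-Algorithm-\ref{alg: two_agents_cycle} allocation I must translate them through the two insertions; I would set up names like $A_i^{0}$ for the bundle before Algorithm \ref{alg: two_agents_cycle} and note $A_i = A_i^{0} \cup \{e_1, e_2\}$, $A_j = A_j^{0}$, to keep the inequalities honest.
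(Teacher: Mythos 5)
Your overall plan — trace that Algorithm~\ref{alg: two_agents_cycle_cases} is only reached from the $\mu=2$ branch of Algorithm~\ref{alg: two_agents_cycle} with $v_i(e_2)=b_i$, so that both $e_1$ (values $a_i,a_j$) and $e_2$ (value $b_i$ to $i$) have been added to agent $i$'s bundle, and then read the two bullets off the corresponding case of Lemma~\ref{lem: preliminary_breaking_cycle} — is exactly what the paper's one-line proof does, and your branch identification is correct. But your written derivation of the first bullet does not work. Writing $A_i^{0},A_j^{0}$ for the bundles before Algorithm~\ref{alg: two_agents_cycle} (so $A_i=A_i^{0}\cup\{e_1,e_2\}$, $A_j=A_j^{0}$), you argue that agent $i$ ``was already (weakly) un-envied'' and conclude $v_i(A_i)\geq v_i(A_j)+a_i+b_i$. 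In the paper ``unenvied'' means nobody envies $i$, i.e. $v_j(A_j^{0})\geq v_j(A_i^{0})$; it says nothing about $i$'s own valuation. In fact agent $i$ is the \emph{envious} agent here: the trigger for Algorithm~\ref{alg: two_agents_cycle} (line~\ref{line: terminated_jump} of Algorithm~\ref{alg: two_agents_ef1_mms}, cf.\ Observation~\ref{obs: pre_breaking_cycle}) is $v_i(A_i^{0}\cup\{e_1\})<v_i(A_j^{0})$, i.e. $v_i(A_i^{0})+a_i<v_i(A_j^{0})$. Hence $v_i(A_i)=v_i(A_i^{0})+a_i+b_i<v_i(A_j)+b_i$, so your intermediate inequality $v_i(A_i)\geq v_i(A_j)+a_i+b_i$ is \emph{false} whenever this branch is reached. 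The correct route (the one used inside Lemma~\ref{lem: preliminary_breaking_cycle}) is EF1 of the pre-cycle allocation: $v_i(A_i^{0})\geq v_i(A_j^{0}\setminus\{g\})\geq v_i(A_j^{0})-b_i$ for some $g\in A_j^{0}$, whence $v_i(A_i)=v_i(A_i^{0})+a_i+b_i\geq v_i(A_j)+a_i$ — only $+a_i$, not $+a_i+b_i$.

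For the second bullet you cite the right facts ($v_j(A_j)\geq v_j(A_i^{0})$ and $v_j(A_j)\geq\min\{v_j(e_1),v_j(e_2)\}=a_j$), but the sentence mixes up which item is removed: dropping the \emph{min}-valued item leaves $A_i^{0}$ plus the max-valued one, and the bound $v_j(A_i^{0})+\max\{v_j(e_1),v_j(e_2)\}\leq 2v_j(A_j)$ does not follow from those two facts (the max may be $b_j$, which can exceed $v_j(A_j)$). The observation removes the \emph{max}-valued item, leaving value $v_j(A_i^{0})+a_j\leq v_j(A_j)+v_j(A_j)$, which is exactly what your cited inequalities give. Both issues are repairable — indeed your stated fallback of simply invoking the $\mu=2$, $v_i(e_2)=b_i$ computation in Lemma~\ref{lem: preliminary_breaking_cycle} would avoid them — but as written the first bullet rests on a misreading of ``unenvied'' and a provably false inequality, so the derivation needs to be replaced by the EF1-based one.
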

\begin{proof}
When the DBC algorithm is ready to be executed, the PBC algorithm is terminated for $\lambda_1=\mu_1 = 2$. 
Then, by the proof of Lemma \ref{lem: preliminary_breaking_cycle}, it is easy to see that the statements are correct.
\end{proof}

\begin{lemma}
\label{lem: deep_breaking_cycle}
   The DBC algorithm (Algorithm \ref{alg: two_agents_cycle_cases}) computes an NW allocation satisfying
   \begin{itemize}
       \item $\frac{1}{2}$-EF1,
       \item or EF1 with an envy cycle,
       \item or EF1 without an envy cycle.
   \end{itemize}
\end{lemma}
\begin{proof}
By Observation \ref{obs: deep_breaking_cycle}, the input allocation $\textbf{A}=(A_1, A_2)$ is $\frac{1}{2}$-EF1, where $v_i(A_i)\geq v_i(A_j) +a_i$ and $v_j(A_j) \geq \frac{1}{2}v_j(A_i\setminus\{e\})$ for some $e \in A_i$.
Unlike the PBC algorithm, this algorithm may be executed in many rounds.
Let $g_1$ and $g_2$ denote the two online items that are allocated to agent $i$ in the PBC algorithm and we have $\min\{v_j(g_1),v_j(g_2)\}=a_j$.
Currently, we have $\alpha=3$.

\bm{$\lambda_2 =1$}. For the arriving item $e_1$ when $\lambda_2 = 1$, let us consider three cases of agents' valuations.
\begin{itemize}
    \item When $v_i(e_1) = a_i$, agent $j$ picks it.
    There are two subcases about $v_j(e_1)$. If $v_j(e_1)=b_j$, for agent $i$, her bundle is EF since we have $v_i(A_i) \geq v_i(A_j)+a_i=v_i(A_j \cup \{e_1\})$.
For agent $j$, her new bundle is EF1 since we have $v_j(A_j) \geq v_j(A_i \setminus \{g_1,g_2\})$ and $v_j(e_1)=b_j \geq \max\{v_j(g_1),v_j(g_2)\}$.
Thus, this allocation is EF1 without an envy cycle.
If $v_j(e_1)=a_j$, for agent $i$, her bundle is EF since we have $v_i(A_i) \geq v_i(A_j)+a_i=v_i(A_j \cup \{e_1\})$.
For agent $j$, her new bundle is EF1 since 
we have $v_j(A_j) \geq v_j(A_i \setminus \{g_1,g_2\})$ and $v_j(e_1)=a_i=\min\{v_j(g_1),v_j(g_2)\}$.
Thus, this allocation is EF1 without an envy cycle. 
Combining the above two cases, we set $\alpha=1$, $\lambda_2=1$, and $\mu_2=1$, implying that the EGP algorithm is ready to be executed.
\item When $v_i(e_1)=b_i$ and $v_j(e_1)=b_j$, it is allocated to agent $j$ and we set $\mu_2 = 1$.
For agent $i$, her bundle is EF1 since after removing $e_1$ from $A_j$, we have $v_i(A_i)\geq v_i(A_j)+a_i$.
For agent $j$, her new bundle is also EF1 since we have $v_j(A_j) \geq v_j(A_i \setminus \{g_1,g_2\})$ and $v_j(e_1)=b_j \geq \max\{v_j(g_1),v_j(g_2)\}$.
Thus, this allocation is EF1 with an envy cycle, and we set $\lambda_2=2$.
\item When $v_i(e_1)=b_i$ and $v_j(e_1)=a_j$, it is allocated to agent $i$ and we set $\mu_2 = 2$.
For agent $i$, her new bundle is clearly EF since $v_i(A_i \cup \{e_1\})>v_i(A_i) \geq v_i(A_j)+a_i$.
For agent $j$, her bundle is $\frac{1}{2}$-EF1 since we have$\frac{v_j(A_j)}{v_j(A_i \setminus \{g_1,g_2\})+v_j(e_1)+\min \{v_j(g_1), v_j(g_2)\}}\geq\frac{v_j(A_j)}{v_j(A_j)+2a_j}\geq \frac{1}{2}$, where the inequalities follow from the fact that $v_j(A_j) \geq v_j(A_i \setminus \{g_1,g_2\})$, $\lvert A_j\rvert \geq 2$ and $v_j(e_1)=\min \{v_j(g_1), v_j(g_2)\}=a_j$.
Thus, this allocation is $\frac{1}{2}$-EF1,  and we set $\lambda_2=2$.
\end{itemize}

\bm{$\lambda_2 = 2$}.
For the arriving item $e_2$ when $\lambda_2 = 2$, let us consider two cases about $\mu_2$.
\begin{itemize}
\item When $\mu_2 = 1$, we need to consider two subcases.
If $v_i(e_2)=b_i$, agent $i$ picks it.
For agent $i$, her new bundle is EF since we have $v_i(A_i) \geq v_i(A_j)+a_i$ and $v_i(e_2)=b_i=v_i(e_1)$.
For agent $j$, her bundle is $\frac{1}{2}$-EF1 since we have $v_j(A_j) \geq \frac{1}{2}v_j(A_i \setminus \{g\}) $ for some $g \in A_j$ and $v_j(e_1)=b_j\geq v_j(e_2)$.
Thus, this allocation is $\frac{1}{2}$-EF1 and satisfies the properties in Observation \ref{obs: deep_breaking_cycle}. 
Additionally, we set $\lambda_2=1$, $\mu_2=0$, and keep $\alpha=3$, implying that the DBC algorithm continues to be executed.
If $v_i(e_2)=a_i$, it is allocated to agent $j$.
For agent $i$, her bundle is EF1 since after removing $e_1$ from $A_j$,we have $v_i(A_i) \geq v_i(A_j)+a_i=v_i(A_j \cup \{e_2\})$.
For agent $j$, her new bundle is trivially EF since we have $v_j(A_j)\geq v_j(A_i \setminus \{g_1,g_2\})$, $v_j(e_1)=b_j \geq v_j(g_2)$ and $v_j(e_2)\geq a_i =v_j(g_1)$.
Thus, this allocation is EF1 without an envy cycle, and we set $\alpha=1$, $\lambda_2=1$,  and $\mu_2=0$, implying the EGP algorithm is ready to be executed.

\item When $\mu = 2$, it is directly allocated to agent $j$.
For agent $i$, her bundle is EF since $v_i(A_i)\geq v_i(A_j)+a_i$ and $v_i(e_1)=b_i\geq v_i(e_2)$.
For agent $j$, her new bundle is $\frac{1}{2}$-EF1 since $v_j(A_j) \geq \frac{1}{2}v_j(A_i \setminus \{g\})$ for some $g \in A_i$ and $v_j(e_2)\geq v_j(e_1)=a_j$.
Thus, this allocation is $\frac{1}{2}$-EF1 and satisfies the properties in Observation \ref{obs: deep_breaking_cycle}. 
Additionally, we set $\lambda_2=1$, $\mu_2=0$, and keep $\alpha=3$, implying that the DBC algorithm continues to be executed.
\end{itemize}
Combining the above cases, our claim holds.
\end{proof}

\begin{algorithm}[t]
\caption{PBC (Preliminary-Breaking-Cycle)}
\label{alg: two_agents_cycle}
\KwIn{$N, A, \alpha, e, (\lambda_1, \mu_1)$}
\KwOut{An NW allocation $\textbf{A}$, $\alpha$, and $(\lambda_1, \mu_1)$}

\eIf {$\lambda_1 = 1$}{

\eIf{$\lvert A_j \rvert =1$}{

$A_j = A_j \cup \{e\}$;

$\alpha=1$, $\lambda_1=1$, and $\mu_1=0$;
}
{

\eIf{$v_j(e) = b_j$}{

$A_j = A_j \cup \{e\}$;

$\lambda_1 = 2$ and $\mu_1 = 1$;
}
{
$A_i = A_i\cup \{e\}$;

$\lambda_1 = 2$ and $\mu_1 = 2$;
}

}

}
{

\eIf{$\mu_1 = 1$ 
}{

$A_i = A_i \cup \{e\}$;

}
{

\eIf{$v_i(e) = a_i$}{

$A_j = A_j \cup \{e\}$;

}
{

$A_i = A_i \cup \{e\}$;

$\alpha=3$, $\lambda_1=1$, and $\mu_1=0$;

}
}

$\alpha=1$, $\lambda_1=1$, and $\mu_1=0$;

}

\Return $\textbf{A}$, $\alpha$ and $(\lambda_1, \mu_1)$;
\end{algorithm}

 \begin{algorithm}[tb]
\caption{DBC (Deep-Breaking-Cycle)}
\label{alg: two_agents_cycle_cases}
\KwIn{$N, A, \alpha, e, (\lambda_2, \mu_2)$}
\KwOut{An NW allocation $\textbf{A}$, $\alpha$, and $(\lambda_2, \mu_2)$}

\uIf{$\lambda_2 \mod  2 =1$}{

\uIf{$v_i(e) = a_i$}{
$A_j = A_j \cup \{e\}$;

$\alpha=1$, $\lambda_2=1$, and $\mu_2=0$;

}
\uElseIf{$v_i(e) = b_i$ and $v_j(e) = b_j$}{

$A_j = A_j \cup \{e\}$;

$\lambda_2 = \lambda_2+1$ and $\mu_2 = 1$;

}
\Else{
$A_i = A_i \cup \{e\}$;

$\lambda_2 = \lambda_2+1$ and $\mu_2 = 2$;

}
}
\ElseIf{$\lambda_2 \mod  2 =0$}{
\uIf{ 
$\mu_2 = 1$
}{
\eIf{$v_i(e) =b_i$}{

$A_i = A_i \cup \{e\}$;

$\lambda_2 = \lambda_2+1$ and $\mu_2=0$;
}
{
$A_j = A_j \cup \{e\}$;

$\alpha=1$, $\lambda_2=1$, and $\mu_2=0$;

}
}
\Else{

$A_j = A_j \cup \{e\}$;

$\lambda_2 = \lambda_2+1 $ and $\mu_2=0$;

}
}
\Return $\textbf{A}$, $\alpha$, $(\lambda_2, \mu_2)$;

\end{algorithm}

\noindent\textit{Proof of Theorem \ref{the: two_agents_bivalued_ef1_mms}.}
In the AEGP, it can be seen that for different values of the algorithm indicator $\alpha$, we need to choose the corresponding algorithm, i.e., $\alpha=1$: EGP, $\alpha=2$: PBC, or $\alpha=3$: DBC, to execute. 
By Lemmas \ref{lem: egp} and \ref{lem: pbc_dbc}, it can be seen that the output allocation is NW and $\frac{1}{2}$-EF1.
Regarding MMS, by the implication between EF1 and MMS \cite{amanatidis2018comparing}, i.e., $\alpha$-EF1 $\Rightarrow$ $\frac{\alpha}{(n-1)\alpha+1}$-MMS, where $n$ is the number of agents, it is clear that the output allocation is $\frac{\frac{1}{2}}{\frac{1}{2}+1}=\frac{1}{3}$-MMS.\qed

For an independent interest, in the Appendix, we show that if each online item has a deadline of one period, i.e., each arriving item can wait for at most one round to be allocated, EF1 allocation can be found in the above setting.
At last, we present the impossible results to show that our approximation guarantees for EF1 and MMS are tight, and the approximate EF1 (or MMS) is incompatible with USW.
\begin{theorem}
\label{the: ef1_bi_valued_impossiblity_results}
    For two agents with additive personalized bi-valued valuation functions, no deterministic online algorithm can compute an NW allocation that is $(\frac{1}{2}+\epsilon)$-EF1 or $(\frac{1}{3}+\epsilon)$-MMS for any $\epsilon>0$.
\end{theorem}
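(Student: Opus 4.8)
The plan is to build one adaptive adversary that, against any fixed deterministic online algorithm, produces a sequence of bi-valued items whose final allocation violates $(\tfrac12+\epsilon)$-EF1, and, in a parallel way, $(\tfrac13+\epsilon)$-MMS. All items will take values in $\{1,b\}$ (the same range for both agents), where $b\ge 2$ is a constant we may fix after seeing $\epsilon$. The guiding idea is that the online, irrevocable nature of the decisions lets the adversary punish ``balancing'' moves: whenever the algorithm would assign an incoming item to the agent who values it less, the adversary will have arranged the current bundles so that this very move pushes the \emph{other} agent's EF1 ratio down to $\tfrac12$; hence the algorithm is forced into ``anti-balancing'' moves, which the adversary uses to steer the allocation toward a terminal bad configuration.

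Concretely I would open with $e_1=(b,b)$ (assigned, w.l.o.g., to agent~$1$) and then $e_2=(b,1)$: giving $e_2$ to agent~$1$ leaves $v_2(A_2)=0$ while $v_2(A_1\setminus\{e\})>0$, an immediate $0$-EF1 and $0$-MMS violation, so $e_2$ must go to agent~$2$. From this imbalanced state the adversary alternately offers items of the two asymmetric types $X=(1,b)$ (cheap for agent~$1$, dear for agent~$2$) and $Y=(b,1)$, maintaining the invariant that one agent holds a bundle that is cheap to itself but worth a great deal to the other. The heart of the argument — and the step I expect to be the main work — is an induction on a ``slack'' potential, such as $\big(v_1(A_2)-\max_{e\in A_2}v_1(e)\big)-2v_1(A_1)$ together with its mirror image, showing that for each of the algorithm's two possible responses the adversary can either declare victory immediately (the response already gives some agent EF1 ratio $<\tfrac12+\epsilon$) or continue from a state with strictly larger slack; since the best achievable ratio is squeezed toward $\tfrac12$ as the slack grows, after boundedly many rounds (the bound depending on $b$, hence on $\epsilon$) the adversary reaches a configuration where one final item $(b,b)$ — wherever the algorithm places it — leaves the other agent with $v_i(A_i)\le\tfrac12\big(v_i(A_j)-\max_{e\in A_j}v_i(e)\big)$, i.e.\ an EF1 ratio of at most $\tfrac12$.

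For the $(\tfrac13+\epsilon)$-MMS bound I would run essentially the same adversary while instead tracking $\mathsf{MMS}_i$ of the current item multiset; since every value is $1$ or $b$, $\mathsf{MMS}_i$ is computed exactly by sorting and greedily balancing, and one checks that the ``trapped'' agent ends with a bundle of value at most about a third of its maximin share (intuitively, extending the EF1 sequence with a few extra copies of that agent's cheap type inflates its maximin share to roughly three times its allocated value without disturbing the EF1 picture). This matches the tight direction of the EF1/MMS relationship invoked in the proof of Theorem~\ref{the: two_agents_bivalued_ef1_mms}: for two agents $\tfrac12$-EF1 corresponds to exactly $\tfrac13$-MMS, so forcing the MMS ratio arbitrarily close to $\tfrac13$ would in turn re-prove the EF1 impossibility. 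The main obstacle throughout is the branching: because the adversary must have a winning continuation for \emph{both} assignments at every round, the construction cannot be a single fixed instance but must be a genuinely adaptive case analysis, and the delicate point is to keep its depth finite while guaranteeing the slack strictly increases along every non-terminating branch.
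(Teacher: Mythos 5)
Your overall strategy -- an adaptive adversary over bi-valued items with a large high/low ratio, forcing the algorithm into a state where one agent holds a bundle cheap to herself but dear to the other, and punishing any balancing move -- is indeed the same spirit as the paper's argument. The problem is that you have only described a plan for the decisive part and not executed it. The entire content of the theorem lies in the branching case analysis you defer: you posit a slack potential $\bigl(v_1(A_2)-\max_{e\in A_2}v_1(e)\bigr)-2v_1(A_1)$ and its mirror, and assert that against either response the adversary can either win outright or strictly increase the slack, terminating in boundedly many rounds. Nothing in the proposal verifies this invariant (note it starts far negative after your opening $e_1=(b,b)$, $e_2=(b,1)$), specifies the adversary's item in each state, or shows the depth is finite; you yourself flag this as ``the delicate point.'' The paper shows no such machinery is needed: a fixed adaptive instance of at most five items with values in $\{\epsilon,1\}$ (low--low, high--low, low--low, then one of two short continuations) already pins every branch at EF1 ratio at most $\tfrac12$ and MMS ratio at most $\tfrac13$, which is all the statement requires -- you do not need ratios ``squeezed toward'' $\tfrac12$ over many rounds, only one terminal state at exactly $\tfrac12$.

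The MMS half of your plan has a concrete flaw as stated. You propose to ``extend the EF1 sequence with a few extra copies of the trapped agent's cheap type'' to inflate her maximin share threefold ``without disturbing the EF1 picture,'' but the adversary controls only values, not placements: since every bi-valued item has positive value to both agents, non-wastefulness forces each extra item to be allocated, and the algorithm may hand them to the trapped agent, raising her bundle value in step with her MMS. The paper avoids this by computing the MMS ratios directly in each branch of the same short instance (e.g.\ the trapped agent ends with value $\epsilon$ against $\mathsf{MMS}=3\epsilon$, giving exactly $\tfrac13$). Your fallback remark that forcing MMS ``arbitrarily close to $\tfrac13$'' re-proves the EF1 bound is also imprecise: via the contrapositive of $\alpha$-EF1 $\Rightarrow\frac{\alpha}{\alpha+1}$-MMS you need the final MMS ratio to be at most $\tfrac13$, not merely near it, and in any case you would still owe an unconditional construction forcing that bound. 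As it stands, the proposal identifies the right ingredients but leaves the actual adversary, its termination, and the MMS verification unproven.
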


\begin{proof}
 \begin{table}[tb]
 \begin{minipage}{\linewidth}
    \centering
    \begin{tabular}{c|c|c|c|c}
    \toprule
      & $e_1$ & $e_2$ & $e_3$& $e_4$    \\
      \midrule
        agent $1$  & \textcolor{red}{$\epsilon$} & $1$ &  \textcolor{red}{$\epsilon$} & 1  \\
        \midrule
        agent 2 & $\epsilon$ & \textcolor{red}{$\epsilon$} & $\epsilon$ & 1\\
        \bottomrule
    \end{tabular}
    \caption*{Case 1}
\end{minipage}

\begin{minipage}{\linewidth}
\centering
    \begin{tabular}{c|c|c|c|c|c}
    \toprule
      & $e_1$ & $e_2$ & $e_3$& $e_4$ & $e_5$   \\
      \midrule
        agent $1$  & \textcolor{red}{$\epsilon$} & $1$ &  $\epsilon$ & \textcolor{orange}{$\epsilon$} & $1$ \\
        \midrule
        agent 2 & $\epsilon$ & \textcolor{red}{$\epsilon$} & \textcolor{blue}{$\epsilon$} & 1 & $1$\\
        \bottomrule
    \end{tabular} 
    \caption*{Case 2-1}
\end{minipage}

\begin{minipage}{\linewidth}
    \centering
    \begin{tabular}{c|c|c|c|c}
    \toprule
      & $e_1$ & $e_2$ & $e_3$& $e_4$    \\
      \midrule
        agent $1$  & \textcolor{red}{$\epsilon$} & $1$ &  $\epsilon$ & $\epsilon$ \\
        \midrule
        agent 2 & $\epsilon$ & \textcolor{red}{$\epsilon$} & \textcolor{blue}{$\epsilon$} & \textcolor{green}{1}\\
        \bottomrule
    \end{tabular}
    \caption*{Case 2-2}
\end{minipage}
\caption{The impossibility results of EF1 and MMS for additive personalized bi-valued valuations for two agents.}
    \label{tab: upper_bound_bi_valued_ef1}
\end{table}
Let $a_1=a_2=\epsilon$ and $b_1=b_2=1$, where $0<\epsilon<1$.
 Consider the following online instance for $n=2$, where the value of each item is shown in Table \ref{tab: upper_bound_bi_valued_ef1}.
 When $t = 1$, the item with the value of $\epsilon$ for both agents arrives.
 Without loss of generality, agent 1 picks it.
 When $t = 2$, the arriving item has the value of 1 for agent 1 and $\epsilon$ for agent 2.
 If it is allocated to agent 1, agent 2's bundle is empty, resulting in 0-EF1 and 0-MMS. 
 Thus, it should be allocated to agent 2.
 Then, when $t = 3$, the arriving item has the value of $\epsilon$ for both agents.
 Let us consider two different choices of this item.

\textbf{Case 1.} The arriving item in round $t=3$ is allocated to agent 1. Then, in round $t=4$, the item that has the value of 1 for both agents arrives.
Since we consider non-wastefulness, one agent must pick it.
If it is allocated to agent 1, for agent 1, her bundle is EF1 and MMS to her, and for agent 2, her bundle is $\frac{\epsilon}{1+\epsilon}$-EF1 and $\frac{1}{3}$-MMS.
If it is allocated to agent 2, for agent 1, her bundle is $2\epsilon$-EF1 and $\frac{2\epsilon}{1+\epsilon}$-MMS, and for agent 2, her bundle is EF1 and MMS.
Therefore, we can derive that the upper bound of EF1 is $2\epsilon$ and MMS is $\frac{1}{3}$ respectively.

\textbf{Case 2.} The arriving item in round $t=3$ is allocated to agent 2. Then, in round $t = 4$, the item has the value of $\epsilon$ for agent 1 and 1 for agent 2. 
Let us consider two choices for allocating this item.
\begin{itemize}
    \item If it is allocated to agent 1, this allocation is EF1 and $\frac{2}{3}$-MMS.
    At last, in round $t = 5$, the arriving item has the value of 1 for both agents.
    To guarantee non-wastefulness, it must be allocated to some agent.
    It is not hard to check that no matter which agent picks it, the final allocation is $\frac{2\epsilon}{1+\epsilon}$-EF1 and $\frac{2\epsilon}{1+\epsilon}$-MMS.
    \item If it is allocated to agent 2, there is no further arriving item.
    For agent 1, her bundle is $\frac{1}{2}$-EF1 and $\frac{1}{3}$-MMS, and for agent 2, her bundle is EF and MMS.
    Thus, this allocation is $\frac{1}{2}$-EF1 and $\frac{1}{3}$-MMS.
\end{itemize}
Therefore, we can derive that the upper bound of EF1 is $\frac{1}{2}$ and MMS is $\frac{1}{3}$ respectively.
Combining the above two cases, it is clear that no algorithm can guarantee an NW allocation that is $\alpha$-EF1 or $\beta$-MMS for any $\alpha>\frac{1}{2}$ and $\beta>\frac{1}{3}$.
\end{proof}

 \begin{proposition}
 \label{prop: bivalued_no_efficiency_goods}
Given any $\alpha, \beta >0$, for the deterministic allocation of indivisible goods, no online algorithm can compute an NW allocation satisfying 
\begin{itemize}
    \item $\alpha$-EF1 and $\beta$-max-USW,
    \item or $\alpha$-MMS and $\beta$-max-USW,
\end{itemize}
even for two agents with additive personalized bi-valued valuations.
\end{proposition}

\begin{proof}
    Without loss of generality, we assume that $v_i \in \{1, \frac{1}{\epsilon}\}$ and $v_2 \in \{\epsilon^2,\epsilon\}$, where $0<\epsilon< 1$.
    When $t=1$, the first item $e_1$ that has the value of 1 for agent 1 and $\epsilon$ for agent 2 arrives.
    If it is allocated to agent 2, no further items arrive.
    Hence, this allocation is clearly EF1, MMS, and $\epsilon$-max-USW.
    If not, agent $1$ picks it, and this allocation is EF1, MMS, and max-USW.
    Then, when $t=2$, the second arriving item $e_2$ has the value of 1 for agent 1 and $\epsilon^2$ for agent 2.
    If it is allocated to agent 1, this allocation is 0-EF1, 0-MMS, and max-USW.
    If it is allocated to agent 2, this allocation is EF1, MMS, and $\frac{\epsilon^2+1}{1+\frac{1}{\epsilon}}=\frac{\epsilon^3+\epsilon}{1+\epsilon}$-max-USW.
    For any $\beta>0$, let $\epsilon<\beta^2$, we can derive that no online algorithm can guarantee an NW allocation that is $\alpha$-EF1 (or $\alpha$-MMS) and $\beta$-max-USW for any $\alpha,\beta>0$    
\end{proof}

\subsubsection{The Combination with Additive Binary Valuations}
In this part, we focus on the setting where only one agent (agent $n$) has an additive personalized bi-valued valuation function, while all other agents share identical additive binary valuation functions, formed by $(T, N, (v_i)_{i\in N\setminus\{n\}}, v_n)$.
We show that an NW allocation that is EF1 and MMS always exists and can be computed by the Adapted-Picking Algorithm (Algorithm \ref{alg: binary_bivalued_ef1_mms}).
The high-level idea of our algorithm is to allocate the arriving items that all agents value at non-zero values ``evenly'', according to the requirement that agent $n$ has the top priority of picking the item with the value of $b_n$ and the lowest priority of picking the item with the value of $a_n$.
 In that case, agent $n$ takes advantage of the item options, and then we can guarantee that EF1 and MMS hold simultaneously. 
 In Table \ref{tab: binary_bivalued_ef1_mms}, we illustrate how our algorithm works for two agents.
 

 \begin{algorithm}[tb]
\caption{Adapted-Picking Algorithm}
\label{alg: binary_bivalued_ef1_mms}
\KwIn{An instance $(T, N, (v_i)_{i \in N\setminus\{n\}}, v_n)$ 
with identical $(n-1)$ additive binary valuation functions $(v_1, \ldots, v_n)$ and one additive personalized bi-valued valuation function $v_n$
}
\KwOut{An NW, EF1, and MMS allocation ${\bf A}$}

Let ${\bf A} = (\emptyset, \ldots, \emptyset)$;

\SetKwProg{Def}{when}{ do}{}
\Def{item $e \in T$ arrives}{

Let $i_{min} \in \arg \min_{i \in [n-1]}|A_i|$;

\eIf{$v_{i_{min}}(e) = 0$}{

$A_n = A_n \cup \{e\}$;
}
{

\eIf{$v_n(e) =b_n$}{

\eIf{$ v_{i_{min}}(A_n) \leq v_{i_{min}}(A_{i_{min}})  $}{
$A_n = A_n \cup \{e\}$;
}
{

$A_{i_{min}} = A_{i_{min}} \cup \{e\}$;

}

}
{
\eIf{$  v_{i_{min}}(A_n) > v_{i_{min}}(A_{i_{min}})  $}{

$A_{i_{min}} = A_{i_{min}} \cup \{e\}$;

}
{
$A_n = A_n \cup \{e\}$;

}
}

}
}
\Return ${\bf A}$;
\end{algorithm}

 \begin{table}[t]
\centering
    \begin{tabular}{c|c|c|c|c|c|c}
    \toprule
      & $e_1$ & $e_2$ & $e_3$& $e_4$ & $e_5$& $\ldots$   \\
      \midrule
        agent $1$  & \textcolor{red}{$1$} & $1$ &  $0$ & 1 & \textcolor{red}{1}&$\ldots$\\
        \midrule
        agent 2 & $a_2$ & \textcolor{red}{$a_2$} &  \textcolor{red}{$a_2$} & \textcolor{red}{$b_2$} & $a_2$&$\ldots$\\
        \bottomrule
    \end{tabular} 
    \caption{A simple illustration for two agents of Algorithm \ref{alg: binary_bivalued_ef1_mms}.}
    \label{tab: binary_bivalued_ef1_mms}
\end{table}

\begin{theorem}
\label{the: binary_bivalued_goods}
 For the deterministic allocation of indivisible goods for $(n-1)$ agents with identical additive binary valuation functions and one agent with the additive personalized bi-valued valuation function, the Adapted-Picking Algorithm (Algorithm \ref{alg: binary_bivalued_ef1_mms}) computes an NW, EF1, and MMS allocation.  
\end{theorem}
\begin{proof}
Given an online fair allocation instance $I=(T, N, (v_i)_{i\in N})$,
Fix an arbitrary round $k$, let $T=\{e_1,\ldots,e_k\}$ be the set of items that have already arrived, denoted by $\textbf{A}^k=(A^k_1,\ldots, A^k_n)$, the allocation at the end of round $t$ in Algorithm \ref{alg: binary_bivalued_ef1_mms}.  
 \paragraph{EF1.} We show that the allocation in any round is EF1 by induction.  
 For the base case, the empty allocation is trivially EF1.
 For the induction step, after allocating item $e_k$, the allocation $\textbf{A}^k$ is EF1.
 Next, we show that after allocating item $e_{k+1}$, the allocation $\textbf{A}^{k}$ is still EF1.
 When item $e_{k+1}$ arrives, let us consider several cases about the allocation of this item.
 \begin{itemize}
     \item  If $v_{i_{min}}(e_{k+1})=0$, it is allocated to agent $n$. 
     For any agent $i \in [n-1]$, she values it at zero, so her bundle is still EF1 to her.
     For agent $n$, the value of her bundle increases, so her bundle is also EF1 to her.
     \item If $v_n(e_{k+1})=b_n$, let us consider two subcases.
     When $  v_{i_{min}}(A_n^k) \leq v_{i_{min}}(A_{i_{min}}^k) $, agent $n$ picks it.
     For agent $n$, her new bundle is EF.
     That is because she has the priority of picking the item with the value of $b_n$ and before allocating item $e_{k+1}$, if she envies other agents $j \in [n-1]$, in her perspective, the gap between the value of agent $j$'s bundle and her bundle is at most $a_n$, i.e.,$v_n(A_j) - v_n(A_n) \leq a_n$.
     Now, fix an arbitrary agent $i \in [n-1]$.
     Since agent $i$ has the same additive binary valuation function and she only picks the item with the non-zero value, we have $v_i(A_i^{k+1}) = v_i(A_i^{k}) \geq v_i(A_n^k) = v_i(A_n^{k+1}\setminus\{e_{k+1}\})$, implying her bundle is also EF1.
     When $  v_{i_{min}}(A_n^k) > v_{i_{min}}(A_{i_{min}}^k)$. agent $i_{min}$ picks it.
     Clearly, for agent $i_{min}$, her new bundle is still EF1.
     For any other agent $i \in N \setminus \{i_{min},n\}$, we have $v_i(A_i^{k+1}) = v_i(A_i^{k}) \geq v_i(A_{i_{min}}^k) = v_i(A_{i_{min}}^{k+1}\setminus\{e_{k+1}\})$, implying that her bundle is still EF1.
     For agent $n$, based on the description of our algorithm, she must already pick one item that has the value of $b_n$, and in her perspective, the number of items in her bundle is at least one more than that of agent $i_{min}$'s bundle, which means that agent $n$ does not envy agent $i_{min}$ and in her perspective, the value of her bundle is at least $b_n$ greater than agent $i_{min}$'s bundle.
     Therefore, allocating $e_{k+1}$ to agent $i_{min}$ will not make agent $n$ envy agent $i_{min}$.
     \item If $v_n(e_{k+1}) =a_n$,
     let us consider two subcases.
     When $  v_{i_{min}}(A_n^k) \leq v_{i_{min}}(A_{i_{min}}^k) $, agent $i_{min}$ picks it.
     It is trivial that for agent $i_{min}$, her new bundle is EF1.
      For any other agent $i \in N \setminus \{i_{min},n\}$, we have $v_i(A_i^{k+1}) = v_i(A_i^{k}) \geq v_i(A_{i_{min}}^k) = v_i(A_{i_{min}}^{k+1}\setminus\{e_{k+1}\})$, implying that her bundle is still EF1.
      For agent $n$, before allocating $e_{k+1}$, she does not envy agent $i_{min}$ based on the fact that the number of items in her bundle is at least the same as that of agent $i_{min}$'s bundle and our algorithm description.
      When $ v_{i_{min}}(A_n^k) > v_{i_{min}}(A_{i_{min}}^k) $, agent $n$ picks it.
      For agent $n$, her new bundle is actually EF since she has the priority of picking the item with the value of $b_n$ and before allocating $e_{k+1}$, she envies other agent $j \in [n-1]$; in her perspective, the gap between the value of agent $j$'s bundle and her bundle is at most $a_n$, i.e.,$v_n(A_j) - v_n(A_n) \leq a_n$.
      Now, fix an arbitrary agent $i \in [n-1]$.
      Since agent $i$ has the same additive binary valuation function and she only picks the item with the non-zero value, we have $v_i(A_i^{k+1}) = v_i(A_i^{k}) \geq v_i(A_n^k) = v_i(A_n^{k+1}\setminus\{e_{k+1}\})$, implying EF1 also holds.  
 \end{itemize}
Combining the above cases, it can be concluded that in round $k+1$, the allocation is EF1. 
This completes the induction and establishes the correctness of our claim.

\paragraph{MMS.}
Fix an arbitrary round $k$.
Pick an arbitrary agent $i \in [n-1]$.
For agent $i$, assume that the number of items with the value of one that arrive between round $t=1$ and $k$ is $s_i^{k}$.
By the description of our algorithm, items are almost evenly allocated to agents with identical additive binary valuations.
Thus, it is trivial that $v_i(A_i) = |v_i(A_i)| \geq \mathsf{MMS}_i = \lfloor \frac{s_i^k}{n}\rfloor$.
For agent $n$, if she picks item $e_k$, by the above proof of EF1, her bundle is EF, implying that her bundle is also MMS; 
If she does not pick item $e_k$, by the above proof of EF1, the value of her bundle is at most $a_n$ less than that of the envied agent's bundle. 
Considering that she has the priority of picking the item with the value of $b_n$ and the number of items in her bundle is at most one less than that of the other agent's bundle, it is not hard to check that her bundle is still MMS. 
\end{proof}

At last, we complement the above result by showing the incompatibility of EF1 (or MMS) and USW.
\begin{proposition}
\label{prop: binary_bivalued_no_efficiency_goods}
Given any $\alpha>0$, for the deterministic allocation of indivisible goods, no online algorithm can compute an NW allocation satisfying
\begin{itemize}
    \item EF1 and $\alpha$-max-USW,
    \item or MMS and $\alpha$-max-USW,
\end{itemize}
even for two agents, where one agent has the additive binary valuation function, and the other agent has the additive personalized bi-valued valuation function.
\end{proposition}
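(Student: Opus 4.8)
The plan is to construct an adversarial online instance with two agents, where agent $1$ has an additive binary valuation function and agent $2$ has an additive bi-valued valuation function, and to show that any deterministic online algorithm is forced into a dilemma: either it fails fairness (both EF1 and MMS), or it is forced to discard items that it should have handed out, thereby wrecking the max-USW ratio. Concretely, I would let agent $1$ value items in $\{0,1\}$ and agent $2$ value items in $\{\epsilon^2,\epsilon\}$ (or some similar scaling), for a small parameter $0<\epsilon<1$ to be chosen at the end. The first item will have value $1$ for agent $1$ and a small positive value for agent $2$, so that non-wastefulness \emph{permits} giving it to either agent, but whichever choice the algorithm makes, the adversary then releases the remaining items to exploit that commitment.

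The key steps, in order, are: (i) present the first item and branch on the algorithm's (deterministic) decision; (ii) in the branch where agent $1$ receives it, continue with a sequence of items that are worthless to agent $1$ (value $0$) but positive for agent $2$ — non-wastefulness forces all of these to agent $2$ (or to be discarded, which is itself ruled out when agent $2$ values them positively), and after enough such items one shows either that agent $1$'s EF1/MMS is fine but the max-USW is tiny because agent $1$'s useful contribution is capped while the optimum could have split the high-value-to-agent-$1$ item differently, or that forcing the single "good" item to the wrong agent destroys one of the fairness ratios; (iii) in the branch where agent $2$ receives the first item, release a second item also worthless to agent $1$: now if it must go to agent $2$ by non-wastefulness, agent $1$ ends up with an empty bundle while agent $2$ accumulates value, and $0$-EF1 / $0$-MMS results unless the algorithm had instead given item $1$ to agent $1$; combining the branches yields the contradiction. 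Finally, (iv) choose $\epsilon$ small enough (e.g. $\epsilon<\alpha$, or $\epsilon$ small relative to any fixed $\alpha>0$) so that the achievable USW ratio drops below $\alpha$ in the fairness-respecting branch, establishing the claim for every $\alpha>0$. This mirrors the structure of Proposition~\ref{prop: bivalued_no_efficiency_goods}, but must be adapted so that one agent's valuation is genuinely binary rather than bi-valued.

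The main obstacle I anticipate is arranging the numbers so that \emph{non-wastefulness itself} does the forcing work. With a binary agent, an item of value $0$ to that agent \emph{cannot} be given to her without violating part (1) of Definition~\ref{def: non-wastefulness} (the marginal must be positive), and it also cannot be discarded if the bi-valued agent values it positively (part (2)). So every "$0$ for agent $1$, positive for agent $2$" item is rigidly routed to agent $2$; the art is to also make the \emph{single} pivotal item (the one that is positive for both, or high for agent $1$) such that the algorithm's early choice about it is exactly what gets punished, while keeping the optimal USW allocation large enough (by reassigning that pivotal item) that the ratio the algorithm attains is forced below $\alpha$. I would expect to need two carefully chosen item types for agent $2$ ("tiny" $\epsilon^2$ and "small" $\epsilon$) precisely so that the USW comparison in the fairness-respecting branch comes out as a quantity like $\frac{\epsilon^2+1}{1+\epsilon^{-1}}=\frac{\epsilon^3+\epsilon}{1+\epsilon}\to 0$, and then set $\epsilon$ small depending on $\alpha$. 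Verifying that EF1 and MMS genuinely hold in the "good" branch (so that the algorithm has no escape except via the USW failure) is the routine-but-delicate bookkeeping that finishes the proof.
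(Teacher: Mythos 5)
You have chosen the right valuation classes ($v_1\in\{0,1\}$, $v_2\in\{\epsilon^2,\epsilon\}$) and the right adversarial flavor, but the concrete dilemma you sketch does not close, because your punishing items are chosen to be \emph{worthless to the binary agent}, and such items create neither fairness tension nor welfare loss. In your branch (ii), every item with $v_1(e)=0$ and $v_2(e)>0$ is indeed routed to agent $2$ by non-wastefulness, but the offline optimum routes those items to agent $2$ as well, so the USW ratio stays at $1$; and agent $1$ cannot envy a bundle she values at $0$, so EF1 and MMS are untouched. In your branch (iii), the claimed ``$0$-EF1 / $0$-MMS'' is false: with $A_1=\emptyset$ and $A_2=\{e_1,e_2\}$ where $v_1(e_2)=0$, removing $e_1$ leaves $v_1(A_2\setminus\{e_1\})=0$, so agent $1$'s EF1 condition holds vacuously, and her maximin share over item values $\{1,0\}$ split into two bundles is $0$, so MMS also holds. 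Hence neither branch of your case analysis forces the stated contradiction.

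The paper's construction instead makes every punishing item worth $1$ to the binary agent: the sequence is $(1,\epsilon),(1,\epsilon^2),(1,\epsilon)$. The $\alpha$-dependent welfare loss arises already at round $1$: if the algorithm gives $e_1$ to agent $2$, the adversary stops and the USW ratio is $\epsilon$, which is below any fixed $\alpha$ once $\epsilon<\alpha$. If instead $e_1$ goes to agent $1$, then at round $2$ giving $e_2$ (worth $1$ to agent $1$ and $\epsilon^2$ to agent $2$) to agent $1$ leaves agent $2$ --- who, being bi-valued with $a_2>0$, values \emph{every} item positively --- with an empty bundle facing two positively-valued items, i.e.\ $0$-EF1 and $0$-MMS; so a fairness-preserving algorithm must give $e_2$ to agent $2$, and the round-$3$ item continues the squeeze so that exact EF1/MMS cannot be maintained along that branch. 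The missing idea in your plan is therefore twofold: the tension must be created by items the \emph{binary} agent values at $1$ (so that fairness for the bi-valued agent, not the binary one, becomes binding, and so that sending them to agent $2$ costs welfare relative to the optimum that gives them to agent $1$), and the vanishing USW ratio comes from the very first allocation decision rather than from a quantity like $\frac{\epsilon^2+1}{1+\epsilon^{-1}}$ --- that expression belongs to the all-bi-valued case of Proposition~\ref{prop: bivalued_no_efficiency_goods}, where agent $1$ has the high value $\frac{1}{\epsilon}$; here agent $1$'s values are capped at $1$, so no such term can appear.
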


\begin{proof}
    Without loss of generality, we assume that $v_1\in \{0,1\}$ and $v_2 \in \{\epsilon^2,\epsilon\}$, where $0<\epsilon< 1$.
    When $t=1$, the first item $e_1$ that has the value of 1 for agent 1 and $\epsilon$ for agent 2 arrives.
    If it is allocated to agent 2, no further items arrive.
    Hence, this allocation is clearly EF1, MMS, and $\epsilon$-max-USW.
    If not, agent $1$ picks it, and this allocation is EF1, MMS, and max-USW.
    Then, when $t=2$, the second arriving item $e_2$ has the value of 1 for agent 1 and $\epsilon^2$ for agent 2.
    If it is allocated to agent 1, this allocation is 0-EF1, 0-MMS, and max-USW.
    If it is allocated to agent 2, this allocation is EF1, MMS, and $\frac{\epsilon^2+1}{2}\approx \frac{1}{2}$-max-USW.
    When $t =3$, the item $e_3$ that has the value of 1 for agent 1 and $\epsilon$ for agent 2 arrives.
    If it is allocated to agent 1, this allocation is $\epsilon$-EF1, $\epsilon$-MMS, and $\frac{2+\epsilon^2}{3}\approx \frac{2}{3}$-max-USW.
    If it is allocated to agent 2, this allocation is $\frac{1}{2}$-EF1, $\frac{1}{2}$-MMS, and $\frac{1+\epsilon^2+\epsilon}{3}\approx\frac{1}{3}$-max-USW.
    In other words, no allocation is EF1 or MMS.
    Combining the above cases, it can be seen that if an online algorithm outputs an NW allocation that is EF1 or MMS in the setting of additive binary and additive personalized bi-valued valuations, it may not guarantee $\alpha$-max-USW for any given $\alpha>0$.      \end{proof}

\section{Allocation of Chores}
\label{sec: chores_setting}
In this section, we study the setting of indivisible chores.
Because ``online'' chores can be regarded as daily compulsory tasks or duties, it is unreasonable to throw them away.
In other words, for each arriving chore, there must be some agent taking it. 
Therefore, we first introduce the following constraint.
\begin{definition}[Completeness]
\label{def: completeness}
An allocation ${\bf A}^k$ is complete if, for any chore $e$ that has arrived by round $k$, it is not discarded.
\end{definition}
Here are the fairness and efficiency definitions used in the chores setting.
\begin{definition}[Envy-freeness up to One Chore]
\label{def: ef1_chores}
For any $\alpha \in [1,+\infty)$, an allocation ${\bf A}^k$ is $\alpha$-approximate envy-free up to one chore ($\alpha$-EF1) if, for every pair of agents $i, i^{\prime} \in N$ with $A_{i}^k \neq \emptyset$, it holds that $c_i(A_i^k \setminus \{e\}) \leq \alpha \cdot c_i(A_{i^{\prime}}^k)$ for some $e \in A_{i}^k$.  
\end{definition}
\begin{definition}
[Minimax Share Fairness]
\label{def: mms_chores}
For any agent $i$, her minimax share $\mathsf{MMS}_{i}^k$ by round $k$ is defined as:
\[
\mathsf{MMS}_{i}^k = \min_{{\bf B}^k \in \Pi_{n}(T^k)} \max_{i^{\prime} \in N}
c_i(B_{i^{\prime}}^k),
\]
where $\Pi_{n}(T^k)$ is the set of allocations of $T^k$.
For any $\alpha \in [ 1, +\infty)$, an allocation ${\bf A}^k$ is $\alpha$-approximate minimiax share fair ($\alpha$-MMS) if for any agent $i \in N$, it holds that $c_i(A_i^k) \leq \alpha \cdot \mathsf{MMS}_i^k$.
\end{definition}
\begin{definition}[Utilitarian Social Cost]
\label{def: usc}
The utilitarian social cost $USC$ of  allocation ${\bf A}^k$ is given by $USC({\bf A}^k) = \sum_{i \in N}c_i(A_i^k)$.
For any $\alpha \in [1, +\infty)$, an allocation ${\bf A}^k$ is $\alpha$-min-USC
if it holds that $USC({\bf A}^k) \leq \alpha \cdot \min_{{\bf B}^k \in \Pi_{n}(T^k)}USC({\bf B}^k)$, where $\Pi_{n}(T^k)$ is the set of allocations of $T^k$.
\end{definition}

\subsection{The Impossibility of Approximate EF1}

In this part, we show the following negative result about the approximation of EF1 for additive personalized tri-valued cost functions.
For any agent $i \in N$,  her cost function $c_i$ is additive personalized tri-valued if $c_i$ is additive and $c_i(e)\in \{a_i, b_i,z_i\}$ for any $e\in T$, where $0<a_i\leq b_i\leq z_i$.
\begin{theorem}
\label{the: trivalued_ef1_chores}
 No deterministic online algorithm can compute a complete allocation that guarantees $\alpha$-EF1 for any $\alpha \geq 1$, even for two agents with
additive personalized tri-valued cost functions.
\end{theorem}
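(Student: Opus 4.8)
The plan is to adapt the goods impossibility of Theorem~\ref{the: trivalued_ef1_mms_goods} to chores: I build a three-item, two-agent additive instance in which every cost lies in $\{\epsilon,1,1/\epsilon\}$ for a parameter $\epsilon\in(0,1)$, and show that after the third item \emph{any} completeness-respecting online algorithm holds an allocation whose EF1 ratio is at least $1/\epsilon$; letting $\epsilon\to 0$ then destroys every finite guarantee. The first item $e_1$ has cost $1$ for both agents, and by completeness it must be assigned, say to agent~$1$ (the other case is symmetric after exchanging the roles of the agents in what follows). The second item $e_2$ has cost $\epsilon$ for agent~$1$ and $1/\epsilon$ for agent~$2$. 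The third item $e_3$ has cost $1/\epsilon$ for both agents. Thus agent~$1$ uses the three values $1,\epsilon,1/\epsilon$ and agent~$2$ uses $1$ and $1/\epsilon$ (the latter being tri-valued with $b_2=z_2$, which the definition permits).

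The first key step is that $e_2$ must go to agent~$2$. If it went to agent~$1$ instead, then $A_1=\{e_1,e_2\}$ with $c_1(e_1)=1>0$ and $c_1(e_2)=\epsilon>0$ while $A_2=\emptyset$; deleting either single chore from $A_1$ leaves a strictly positive cost against $c_1(A_2)=0$, so that allocation fails $\alpha$-EF1 for \emph{every} $\alpha$, and the adversary may simply stop. Hence an algorithm aiming for a finite EF1 guarantee must assign $e_2$ to agent~$2$, after which $A^2=(\{e_1\},\{e_2\})$ is (trivially) $1$-EF1. The second key step is the round-$3$ trap. If $e_3$ is given to agent~$1$, then $A_1=\{e_1,e_3\}$ and removing the costliest chore ($e_3$, of cost $1/\epsilon$) leaves agent~$1$ with cost $c_1(e_1)=1$ against $c_1(A_2)=c_1(e_2)=\epsilon$, an EF1 ratio of $1/\epsilon$. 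If $e_3$ is given to agent~$2$, then $A_2=\{e_2,e_3\}$ and removing one of the two cost-$1/\epsilon$ chores leaves agent~$2$ with cost $1/\epsilon$ against $c_2(A_1)=c_2(e_1)=1$, again an EF1 ratio of $1/\epsilon$. So in either branch the final allocation is at best $(1/\epsilon)$-EF1.

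To conclude: given any claimed guarantee $\alpha\ge 1$, run the instance with $\epsilon<1/\alpha$; by the above the algorithm's final (complete) allocation has EF1 ratio at least $1/\epsilon>\alpha$, contradicting $\alpha$-EF1. As $\alpha$ was arbitrary, no deterministic online algorithm can guarantee $\alpha$-EF1 for any $\alpha\ge 1$, even for two agents with additive tri-valued cost functions.

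The conceptual difficulty, and the reason the chores construction cannot simply mirror the goods one, is that for chores the EF1 constraint protects the \emph{over-loaded} agent and is discharged by deleting that agent's \emph{most expensive} chore; a single heavy chore is therefore harmless, and one must instead corner some agent into a bundle that remains far costlier than its rival's \emph{even after the best single deletion}. The mechanism is the deliberately asymmetric $e_2$: making it cheap for whichever agent already holds $e_1$ forces it onto the other agent, and then the shared heavy item $e_3$ cannot be placed anywhere without leaving the recipient, post-deletion, with cost $1$ against a rival cost of $\epsilon$ (if the recipient is agent~$1$) or cost $1/\epsilon$ against a rival cost of $1$ (if it is agent~$2$). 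The one auxiliary check is that the forced round-$2$ move is itself $\alpha$-EF1 (it is, being an allocation of singletons), so that the forcing argument is genuine rather than vacuous.
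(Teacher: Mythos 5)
Your proposal is correct and uses essentially the same construction as the paper: the identical three-item instance ($e_1$ with cost $1$ for both, $e_2$ with costs $\epsilon$ and $1/\epsilon$ forcing it onto agent $2$, then a shared cost-$1/\epsilon$ item trapping either recipient at ratio $1/\epsilon$), followed by taking $\epsilon<1/\alpha$. Your case analysis even fixes a minor agent-labeling slip in the paper's write-up, but the argument itself is the same.
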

\begin{proof}
\begin{table}[b]
\centering
    \begin{tabular}{c|c|c|c}
    \toprule
      & $e_1$ & $e_2$ & $e_3$   \\
      \midrule
        agent $1$  & \textcolor{red}{$1$} & $\epsilon$ &  $\frac{1}{\epsilon}$ \\
        \midrule
        agent 2 & $1$ & \textcolor{red}{$\frac{1}{\epsilon}$} &  $\frac{1}{\epsilon}$\\
        \bottomrule
    \end{tabular} 
    \caption{The impossibility result of EF1 for additive personalized tri-valued cost functions.}
    \label{tab: impossiblility_ef1_three_values_chores}
\end{table}
    Let $a_1=a_2 = \epsilon$, $b_1=b_2=1$, and $z_1=z_2=\frac{1}{\epsilon}$, where $0<\epsilon <1 $.
    Consider the following online instance for $n=2$, where the cost of each item is shown in Table \ref{tab: impossiblility_ef1_three_values_chores}.
    When $t =1$, the arriving item has the cost of 1 for both agents.
    Without loss of generality, agent 1 picks it.
    When $t = 2$, the item that has the cost of $\epsilon$ for agent 1 and $\frac{1}{\epsilon}$ for agent 2 arrives.
    In order to avoid the approximation ratio of EF1 being infinite, this item must be allocated to agent 2.
    At last, in round $t = 3$, the arriving item has the cost of $\frac{1}{\epsilon}$ for both agents.
    In this case, if this item is allocated to agent 1, for agent 2, the approximation ratio of EF1 is $\frac{1}{\epsilon}$;
    if this item is allocated to agent 2, for agent 1, the approximation ratio of EF1 is $\frac{1}{\epsilon}$.
    Therefore, for any $\alpha > 0$, let $\epsilon < \frac{1}{\alpha}$, and we can conclude that no algorithm can guarantee a complete allocation that is $\alpha$-EF1 for any given $\alpha > 0$. 
    This impossibility result can be generalized to more than two agents, and we defer the proof to Proposition \ref{prop: no_ef1_chores_three_agents}.
\end{proof}

\begin{proposition}
\label{prop: no_ef1_chores_three_agents}
    For general additive cost functions, no deterministic online algorithm can compute a complete allocation that guarantees $\alpha$-EF1 for any $\alpha \geq 1$ for more than two agents with
additive personalized tri-valued cost functions.
\end{proposition}
\begin{proof}

\begin{table}[tb]
\centering
    \begin{tabular}{c|c|c|c|c|c|c|c}
    \toprule
      & $e_1$ & $e_2$ & $\ldots$&$e_{n-2}$ & $e_{n-1}$ & $e_n$ & $e_{n+1}$   \\
      \midrule
        agent $1$  & \textcolor{red}{$\frac{1}{\epsilon}$} & $\frac{1}{\epsilon}$ &  $\ldots$ &$\frac{1}{\epsilon}$ & $1$ & $\epsilon$ & $\frac{1}{\epsilon}$ \\
        \midrule
       agent $2$  & $\frac{1}{\epsilon}$ & \textcolor{red}{$\frac{1}{\epsilon}$} &  $\ldots$&$\frac{1}{\epsilon}$ & $1$ & $\epsilon$ & $\frac{1}{\epsilon}$\\
       \midrule
       $\ldots$& $\ldots$ & $\ldots$ & $\ldots$ & $\ldots$ & 
       $\ldots$ &$\ldots$ & $\ldots$\\
       \midrule
       agent $n-2$  & $\frac{1}{\epsilon}$ & $\frac{1}{\epsilon}$ &  $\ldots$&\textcolor{red}{$\frac{1}{\epsilon}$} & $1$ & $\epsilon$ & $\frac{1}{\epsilon}$\\
       \midrule
       agent $n-1$  & $\frac{1}{\epsilon}$ & $\frac{1}{\epsilon}$ &  $\ldots$&$\frac{1}{\epsilon}$ & \textcolor{red}{$1$} & $\epsilon$ & $\frac{1}{\epsilon}$\\
       \midrule
       agent $n$  & $\frac{1}{\epsilon}$ & $\frac{1}{\epsilon}$ &  $\ldots$&$\frac{1}{\epsilon}$ & $1$ & \textcolor{red}{$\frac{1}{\epsilon}$} & $\frac{1}{\epsilon}$\\
        \bottomrule
    \end{tabular} 

    \caption{The impossibility result of EF1 for additive personalized tri-valued cost functions when there are more than two agents.}
    \label{tab: impossiblility_ef1_three_values_chores_general}
\end{table}

We extend the impossibility result of approximate EF1 for chores to $n\geq 3$ agents.
    Consider the following instance with $n\geq 3$ agents, where the cost of each item is shown in Table \ref{tab: impossiblility_ef1_three_values_chores_general}.
    Before each agent's bundle has one agent, we should allocate the arriving item to the agent with an empty bundle to avoid the approximation ratio of EF1 being infinite.
    For $t \leq n-2$, each arriving item has the cost of $\frac{1}{\epsilon}$ for each agent.
    Without loss of generality, assume that agent $i \in [n-2]$ picks the item that arrives at round $i$.
    When $t = n-1$, the arriving item has the cost of 1 for all agents.
    Assume that it is allocated to agent $n-1$.
    When $t = n$, the arriving item has the cost of $\frac{1}{\epsilon}$ for agent $n$ and $\epsilon$ for the remaining agents.
    It is clear that agent $n$ picks it.
    In the last round $t = n+1$, the item that has the cost of $\frac{1}{\epsilon}$ for all agents arrives.
    If it is allocated to agent $i \in [n-2]$, we can derive that the approximation ratio of EF1 is $\frac{1}{\epsilon}$.
    If it is allocated to agent $n-1$ or $n$, the allocation is also $\frac{1}{\epsilon}$-EF1.
    Thus, the allocation in this round is $\frac{1}{\epsilon}$-EF1.
    Given any $\alpha>1$, let $\epsilon>\frac{1}{\alpha}$, we can conclude that no online algorithm can compute a complete allocation that is $\alpha$-EF1 for any $\alpha>1$.  
\end{proof}

By the above theorem, no approximation of EF1 can be guaranteed for additive personalized tri-valued cost functions. Like the negative result of goods setting, we cannot make further progress in the approximation of EF1 beyond additive personalized tri-valued cost functions if there is no prior information about future items. Therefore, in the following parts, we focus on additive binary, supermodular binary, and additive personalized bi-valued cost functions, and study whether there are positive results about the approximation of EF1.

\subsection{Additive/Supermodular Binary Costs}
In this part, we consider (additive) binary and supermodular binary cost functions in the chores setting.
\subsubsection{Additive Binary Costs}
For additive binary cost functions, where for any agent $i \in N$, $c_i$ is additive and $c_i(e)\in \{0,1\}$ for any $e \in T$, we propose the Compelled-Greedy Algorithm (Algorithm \ref{alg: indivisible_suppermodular+binary}) to compute a complete allocation that satisfies EF1, MMS, and min-USC.


\begin{algorithm}[tb]
\caption{Compelled-Greedy Algorithm}
\label{alg: indivisible_suppermodular+binary}
\KwIn{An instance $(T, N, (c)_{i \in N})$ with additive binary cost functions}
\KwOut{An complete EF1 allocation ${\bf A}$}
Initialize: let $\pi=(\pi_1,\ldots, \pi_n)$ be an arbitrary order of $n$ agents;

Let ${\bf A} = (\emptyset, \ldots, \emptyset)$;

\SetKwProg{Def}{when}{ do}{}
\Def{item $e \in T$ arrives}{

\For{$i=1$ to $n$}
{

\If{$v_{\pi_i}(\{e\})=0$}
{
 Let $A_{\pi_i}=A_{\pi_i}\cup \{e\}$;

 break;
}
}
$A_{\pi_1}=A_{\pi_1}\cup \{e\}$;

$\pi=(\pi_2,\ldots,\pi_{i-1},\pi_{i+1},\ldots,\pi_n,\pi_1)$;
}

\Return ${\bf A}$;

\end{algorithm}

\begin{theorem}
\label{the: indivisible_binary_chore}
 For additive binary cost functions, the Compelled-Greedy Algorithm (Algorithm \ref{alg: indivisible_suppermodular+binary}) computes a complete allocation that satisfies EF1, MMS, and min-USC.  
\end{theorem}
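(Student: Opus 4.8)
The plan is to prove the three guarantees by first characterising, for each agent $i$ and round $k$, exactly which chores in $A_i^k$ the agent $i$ actually pays for. Completeness is immediate: when a chore $e$ arrives, either some agent in the priority order has cost $0$ for $e$ and receives it through the \texttt{for}-loop, or no agent does and $e$ is handed to the front agent $\pi_1$; either way $e$ is assigned. Call a chore $e$ \emph{forced} if $c_j(e)=1$ for every $j\in N$. The key observation is that agent $i$ receives a chore $e$ with $c_i(e)=1$ \emph{only when $e$ is forced and $i$ was the front agent at $e$'s arrival}: if $i$ gets $e$ via the \texttt{for}-loop then $c_i(e)=0$, and if $i$ gets $e$ via the fallback then no zero-cost agent existed (so $e$ is forced) and $i=\pi_1$. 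Consequently $c_i(A_i^k)$ equals the number of forced chores that $i$ received by round $k$, and every non-forced chore was handed by the \texttt{for}-loop to some agent valuing it $0$; in particular which zero-cost agent gets a non-forced chore is irrelevant to all three properties, since that chore contributes $0$ to its recipient's cost.

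\textbf{min-USC.} Writing $m_k$ for the number of forced chores in $T^k$, the previous paragraph gives $USC(A^k)=m_k$. On the other hand, every forced chore contributes $1$ to the cost of whoever holds it, so $USC(B^k)\ge m_k$ for \emph{any} allocation $B^k$ of $T^k$. Hence $A^k$ attains the minimum utilitarian social cost, i.e.\ it is min-USC.

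\textbf{EF1 and MMS.} Next I would establish the round-robin invariant: the priority order $\pi$ is altered only when a forced chore is assigned, and then the (front) recipient is rotated to the back; therefore the recipients of the forced chores, in order of arrival, cycle through the initial permutation of $N$, and after $m_k$ forced chores each agent has received either $\lfloor m_k/n\rfloor$ or $\lceil m_k/n\rceil$ of them. In particular $c_i(A_i^k)\le\lceil m_k/n\rceil$, while for any $i'\ne i$ the agent $i'$ received at least $\lfloor m_k/n\rfloor$ forced chores, each of which costs $i$ exactly $1$, so $c_i(A_{i'}^k)\ge\lfloor m_k/n\rfloor\ge\lceil m_k/n\rceil-1$. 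If $c_i(A_i^k)=0$ the EF1 condition is trivial; otherwise $A_i^k$ contains a forced chore $e$, and removing it gives $c_i(A_i^k\setminus\{e\})=c_i(A_i^k)-1\le\lceil m_k/n\rceil-1\le c_i(A_{i'}^k)$, which is EF1. For MMS, from $i$'s perspective the chores in $T^k$ of cost $1$ number some $R_i^k\ge m_k$ (forced chores are among them) and the rest cost $0$, so distributing the costly ones as evenly as possible over the $n$ bundles shows $\mathsf{MMS}_i^k=\lceil R_i^k/n\rceil$; combining with $c_i(A_i^k)\le\lceil m_k/n\rceil\le\lceil R_i^k/n\rceil$ yields $c_i(A_i^k)\le\mathsf{MMS}_i^k$, i.e.\ $1$-MMS.

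\textbf{Main obstacle.} The delicate point is the round-robin invariant, and within it the fact that the priority order must stay put on non-forced assignments: if $\pi$ were rotated whenever \emph{any} agent receives a chore, a cost-$0$ chore could push an agent to the back and ``reset'' the round-robin balance of the forced chores, and one checks that already for $n=2$ this would let one agent's perceived cost of another's bundle fall short by more than a single chore. Beyond that, the only things to pin down carefully are the exact value $\mathsf{MMS}_i^k=\lceil R_i^k/n\rceil$ and the monotonicity $m_k\le R_i^k$; the rest is routine bookkeeping with the additive binary structure of the $c_i$.
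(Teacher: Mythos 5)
Your proposal is correct and follows essentially the same route as the paper's proof: both reduce everything to the ``forced'' chores (those costing $1$ to every agent), observe that the algorithm hands these out round-robin while zero-cost chores contribute nothing to their recipient and leave the priority order untouched, and then count to obtain min-USC, EF1, and the bound $c_i(A_i^k)\le\lceil c_i(T^k)/n\rceil=\mathsf{MMS}_i^k$. Your explicit $\lfloor m_k/n\rfloor$ versus $\lceil m_k/n\rceil$ bookkeeping is merely a more direct packaging of the paper's contradiction argument for EF1 and its averaging step for MMS.
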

\begin{proof}
Given an online fair allocation instance $I=(T, N, (c_i)_{i\in N})$, where $c_i$ is additive and binary for all $i\in N$.
 Fix an arbitrary round $k$, let $T=\{e_1,\ldots,e_k\}$ be the set of items that have already arrived, denoted by ${\bf A^k}=(A^k_1,\ldots, A^k_n)$ the allocation at the end of round $k$.
 For every item $e\in T$, if there exists one agent such that the cost about $e$ of some agent is $0$, the algorithm must allocate $e$ to some agent $a$ with $v_a(e)=0$.
 Otherwise, the costs of all agents are $1$.
 Regardless of how it is allocated, the increment brought to utilitarian social welfare is $1$, so $A^k$ is a min-USC allocation.
 
 Next, we show that ${\bf A^k}$ is EF1.
 Assume that there exists $i,j\in N$ such that, for any item $e\in A^k_i$, 
 \begin{equation}
 \label{non-ef1}
   c_i(A^k_j)\le c_i(A^k_i\setminus\{e\}). 
 \end{equation} 
  There must exist an item $e^i\in A^k_i$ such that $c_i(A^k_i)-c_i(A^k_i\setminus\{e\})=1$. 
 Without loss of generality, we assume $e^i$ is the last item with $1$ cost, and it arrives at $\ell$-th round.
  For any agent, the cost of $e^i$ is $1$, and agent $i$'s
 priority is higher than $j$.
 Due to the algorithm's order in each round, the number of items with a $1$ cost of agent $j$ must be at least as large as the number of items with a $1$ cost of agent $j$.
 It contradicts inequality (\ref{non-ef1}).

 Finally, we show that ${\bf A^k}$ is MMS.
 For any agent $j
 \in N$ and any item $e\in A^k_j$, if the cost of $c_j(\{e\})$ is $1$, $c_i(\{e\})=1$ for all $i\in N$. 
 So, $c_i(A^{r}_j)\ge c_j(A^{r}_j)$ for any $r$-th round.
 $e^i$ is the last item with $1$ cost and it arrives at $\ell$-th round, 
 let $T_{\ell}$ be the set arrived after 
$\ell$-th round.
 In the $\ell-1$-th round, 
$c_i(A^{\ell}_i)=c_i(A^{\ell-1}_i)+1\le \frac{\sum_{j\in N}c_i(A^{\ell-1}_j)}{n}+1=\frac{c_i(T_{\ell-1})}{n}+1$.
Since $c_i(A^{\ell}_i)$ is an integer, 
we have $c_i(A^k_i)=c_i(A^{\ell}_i)\le \lceil\frac{c_i(T_{\ell})}{n}\rceil\le\lceil\frac{c_i(T)}{n}\rceil=\mathsf{MMS}_i$.
\end{proof}

\subsubsection{Supermodular Binary Costs}
For supermodular binary cost functions, where for any agent $i \in N$, $c_i$ is supermodular and the marginal cost of any item is 0 or 1, we first obtain a significant negative result regarding the approximation of fairness and efficiency. 
This contrasts starkly with the findings in offline settings, where it is possible to efficiently obtain an allocation satisfying EF1, MMS, or min-USC \cite{barman2023suppermodular}.

\begin{lemma}(Lemma 2 in \cite{barman2023suppermodular})
\label{lem: supmodular_binary_chore}
A cost function $c$ is supermodular and binary if and only if the function $g$, $g(S)=|S|-c(S)$ for any set $S$, is a matroid rank function.
\end{lemma}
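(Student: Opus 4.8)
The plan is to reduce both directions of the biconditional to the classical characterization of matroid rank functions: a set function $r: 2^{T}\to\mathbb{Z}_{\ge 0}$ is the rank function of a matroid on $T$ if and only if $r(\emptyset)=0$, $r$ is submodular, and $r(S\cup\{e\})-r(S)\in\{0,1\}$ for every $S\subseteq T$ and $e\in T$. (These conditions already force $0\le r(S)\le|S|$ and monotonicity, and recover the independent-set description given above via $\mathcal{F}=\{Y\subseteq T: r(Y)=|Y|\}$.) The structural point is that $g(S)=|S|-c(S)$ is ``dual'' to $c$: since $|S|+|S'|=|S\cup S'|+|S\cap S'|$, all cardinality terms cancel in inclusion--exclusion expressions, so submodularity of $g$ is equivalent to supermodularity of $c$, and the unit-increment property of $g$ is equivalent to every marginal of $c$ lying in $\{0,1\}$.

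For the forward direction, assume $c$ is supermodular and binary; recall that $c(\emptyset)=0$, that $c$ is monotone, and that $c(S)\le|S|$ (since each added element raises $c$ by at most one), so $g\ge 0$. I would first note $g(\emptyset)=0$. Then for submodularity I would expand $g(S)+g(S')-g(S\cup S')-g(S\cap S')$; the cardinality terms cancel, leaving exactly $\big(c(S\cup S')+c(S\cap S')\big)-\big(c(S)+c(S')\big)\ge 0$ by supermodularity. For the unit-increment axiom, $g(S\cup\{e\})-g(S)=0$ when $e\in S$ and equals $1-\big(c(S\cup\{e\})-c(S)\big)\in\{0,1\}$ when $e\notin S$, because the marginal cost of $e$ is $0$ or $1$. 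Hence $g$ satisfies all three axioms and is a matroid rank function.

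The reverse direction runs the same computation backwards: assuming $g$ is a matroid rank function, set $c(S)=|S|-g(S)$; then $c(\emptyset)=0$, supermodularity of $c$ follows from submodularity of $g$ by the identical cancellation, and $c(S\cup\{e\})-c(S)\in\{0,1\}$ (in particular $c$ is monotone) follows from the unit-increment property of $g$. There is no genuine obstacle here; the only point needing care is to invoke the correct axiomatization --- in particular that submodularity is not implied by normalization and unit increments alone --- and to use the paper's convention that ``binary'' constrains every marginal rather than just singleton values, which is precisely the condition dual to the unit-increment axiom. This reproduces the short argument of Lemma 2 in \cite{barman2023suppermodular}.
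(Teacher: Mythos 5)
Your proof is correct: the paper itself gives no proof of this lemma (it is imported verbatim as Lemma~2 of \cite{barman2023suppermodular}), and your argument is the standard one underlying that citation --- reduce to the axiomatization of rank functions by normalization, unit increments, and submodularity, and observe that the cardinality terms cancel so that supermodularity of $c$ is equivalent to submodularity of $g$ and binary marginals of $c$ are equivalent to unit increments of $g$. You also correctly flag the two points that need care, namely that submodularity must be checked separately and that ``binary'' here constrains all marginals (as the paper's definition states), not just singleton costs.
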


\begin{theorem}
\label{the: supermodular_binary_negative_results}
    For any $\alpha \geq  1$, with supermodular binary cost functions, no deterministic online algorithms can achieve
    \begin{itemize}
        \item $\alpha$-EF1 and completeness,
        \item or $\alpha$-MMS and completeness,
        \item or $\alpha$-min-USC and completeness. 
    \end{itemize} 
\end{theorem}

\begin{proof}
\begin{figure}
     \centering
     \includegraphics[width=0.7\textwidth]{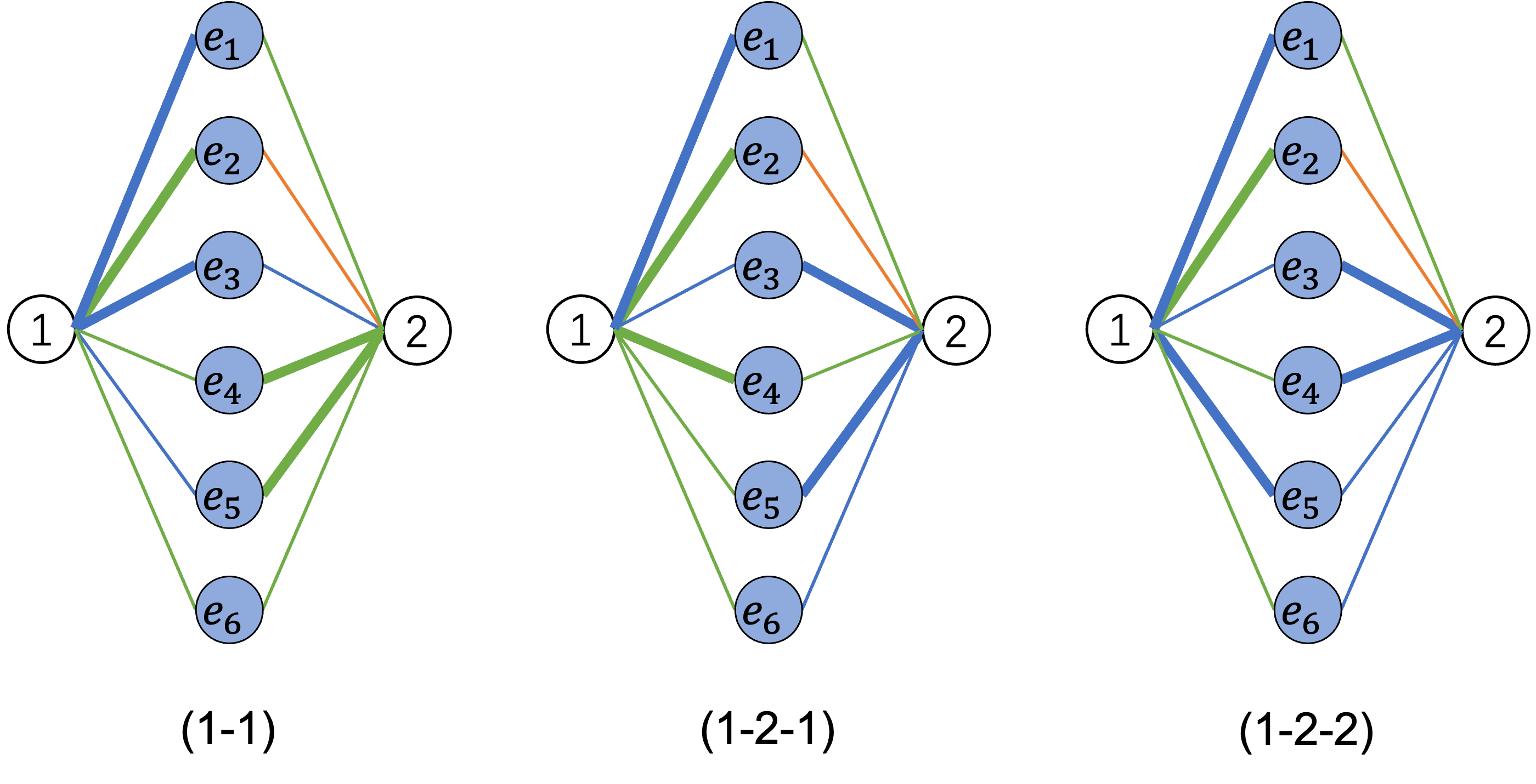}
     
     \includegraphics[width=0.5\textwidth]{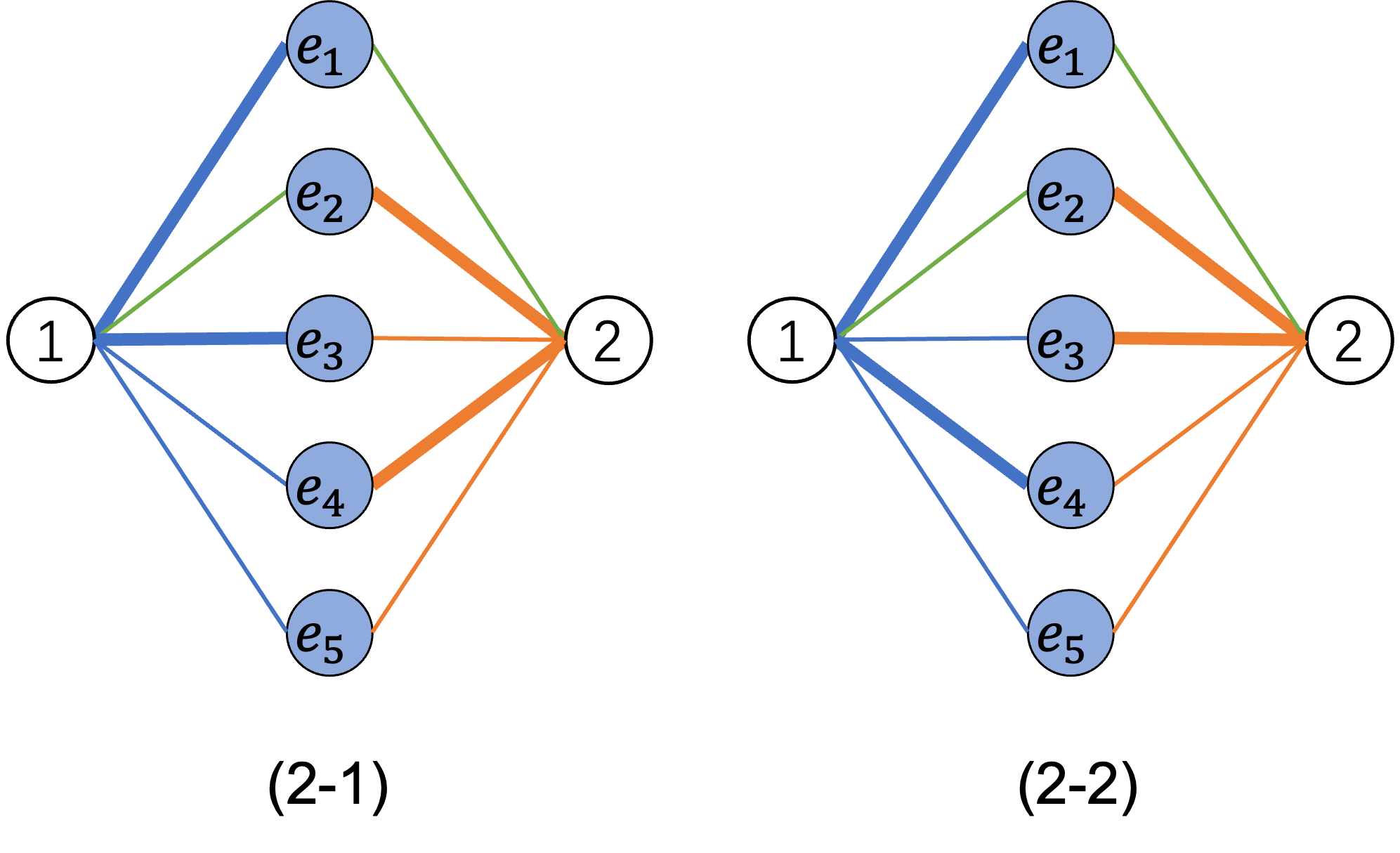}
     \caption{An instance with $\frac{1}{0}$-EF1 allocations}
     \label{fig3}
 \end{figure}
First, let us consider EF1. Consider an online allocation instance; there are two agents $\{1,2\}$ with supermodular binary cost functions $c_1$ and $c_2$.

Based on the relationship between the supermodular and binary functions and the mathematical rank function in Lemma \ref{lem: supmodular_binary_chore}, we construct cost functions for the instance by constructing a matroid rank function.
For $i\in \{1,2\}$, the online items in $T$ belongs to several categories $\{C^i_1,\ldots,C^i_l\}$.
In Figure \ref{fig3}, categories are distinguished by colours, and edges of different colours represent items of different categories for adjacent agents.
For agent $i\in \{1,2\}$, consider the corresponding partition matroid $\mathcal{M}_i=\{T,\mathcal{F}_i\}$, $\mathcal{F}_i=\{S: S\subseteq T, |S\cap C_k|\le 1, \text{for all}~ k\in [l]\}$. 
Let the function $r_i$ be the rank function of matroid $\mathcal{M}_i$, the cost function $c_i(\cdot)$ is defined as $c_i(S)=|S|-r_i(S)$ for any item set $S\in T$.
For the first arrived item $e_1$, the marginal utilities for agent $1$ and agent $2$ are $0$.
W.l.o.g.,
we allocate item $e_1$ to agent $1$. 
When the second item $e_2$ arrives, the marginal utilities for agent $1$ and agent $2$ are $0$.
We will discuss the following two cases. 

\noindent\textbf{Case 1}: we allocate $e_2$ to agent $1$.
For the third item $e_3$, the marginal utility for agent $1$ is $1$, and the marginal utility for agent $2$ is $0$. 
Regardless of how it is allocated, it satisfies EF1, so we continue to discuss it in two sub-cases.
\textbf{Case 1-1}: we allocate $e_3$ to agent $1$.
When the fourth item $e_4$ arrives, the marginal utility for agent $1$ is $1$, and the marginal utility for agent $2$ is $0$. 
To not break the requirement of EF1, we must allocate item $e_4$ to agent $2$.
When the fifth item $e_5$ arrives, the marginal utility for agent $1$ is $1$, and the marginal utility for agent $2$ is $0$. 
To avoid breaking the requirement of EF1, we must allocate item $e_5$ to agent $2$.
When the sixth item $e_6$ arrives, the marginal utility for agent $1$ and agent $2$ is $1$. 
If we allocate $e_6$ to agent $1$, $A^6_1=\{e_1, e_2, e_3,e_6\}$, $A^6_2=\{e_4, e_5\}$, then we have $c_1(A^6_2)=0<1=c_1(A^6_1\setminus \{e\})$.
 Else, we allocate $e_6$ to agent $2$, $A^6_1=\{e_1, e_2, e_3\}$, $A^6_2=\{e_4, e_5,e_6\}$, then we have $c_2(A^6_1)=0<1=c_2(A^6_2\setminus \{e\})$.
\textbf{Case 1-2}: we allocate $e_3$ to agent $2$.
When the fourth item $e_4$ arrives, the marginal utility for agent $1$ is $1$, and the marginal utility for agent $2$ is $0$. 
Regardless of how it is allocated, it satisfies EF1, so we continue to discuss it in two sub-cases.
\textbf{Case 1-2-1}: we allocate $e_4$ to agent $1$.
When the fifth item $e_5$ arrives, the marginal utilities for agent $1$ and agent $2$ are $1$. 
To avoid breaking the requirement of EF1, we must allocate item $e_5$ to agent $2$.
When the sixth item $e_6$ arrives, the marginal utility for agent $1$ and agent $2$ is $1$. 
If we allocate $e_6$ to agent $1$, $A^6_1=\{e_1, e_2, e_4,e_6\}$, $A^6_2=\{e_3,e_5\}$, then we have $c_1(A^6_2)=0<1=c_1(A^5_1\setminus \{e\})$ for any $e\in A^5_1$.
 Else, we allocate $e_6$ to agent $2$, $A^6_1=\{e_1, e_2, e_4\}$, $A^6_2=\{e_3, e_5,e_6\}$, then we have $c_2(A^6_1)=0<1=c_2(A^6_2\setminus \{e\})$ for any $e\in A^6_2$.
\textbf{Case 1-2-2}: we allocate $e_4$ to agent $2$.
When the fifth item $e_5$ arrives, the marginal utilities for agent $1$ and agent $2$ are $1$. 
To avoid breaking the requirement of EF1, we must allocate item $e_5$ to agent $1$.
When the sixth item $e_6$ arrives, the marginal utility for agent $1$ and agent $2$ is $1$. 
If we allocate $e_6$ to agent $1$, $A^6_1=\{e_1, e_2, e_5,e_6\}$, $A^6_2=\{e_3,e_4\}$, then we have $c_1(A^6_2)=0<1=c_1(A^5_1\setminus \{e\})$ for any $e\in A^5_1$.
 Else, we allocate $e_6$ to agent $2$, $A^6_1=\{e_1, e_2, e_5\}$, $A^6_2=\{e_3, e_4,e_6\}$, then we have $c_2(A^6_1)=0<1=c_2(A^6_2\setminus \{e\})$ for any $e\in A^6_2$.

\noindent\textbf{Case 2}: we allocate $e_2$ to agent $2$.
For the third item $e_3$, the marginal utilities for agent $1$ and agent $2$ are $0$. Regardless of how it is allocated, it satisfies EF1, so we continue to discuss it in two sub-cases.
\textbf{Case 2-1}: we allocate $e_3$ to agent $1$.
For the fourth item $e_4$,  
the marginal utilities for agent $1$ and agent $2$ are $1$. 
To avoid breaking the requirement of EF1, we must allocate item $e_4$ to agent $2$.
When the fifth item $e_5$ arrives, 
the marginal utilities for agent $1$ and agent $2$ are $1$. 
If we allocate $e_5$ to agent $1$, $A^5_1=\{e_1, e_3, e_5\}$, $A^5_2=\{e_2, e_4\}$, then we have $c_1(A^5_2)=0<1=c_1(A^5_1\setminus \{e\})$.
 Else, we allocate $e_5$ to agent $2$, $A^5_1=\{e_1, e_3\}$, $A^5_2=\{e_2, e_4,e_5\}$, then we have $c_2(A^5_1)=0<1=c_2(A^5_2\setminus \{e\})$.
\textbf{Case 2-2}: we allocate $e_3$ to agent $2$.
For the fourth item $e_4$,  
the marginal utilities for agent $1$ and agent $2$ are $1$. 
To avoid breaking the requirement of EF1, we must allocate item $e_4$ to agent $1$.
When the fifth item $e_5$ arrives, 
the marginal utilities for agent $1$ and agent $2$ are $1$. 
If we allocate $e_5$ to agent $1$, $A^5_1=\{e_1, e_4, e_5\}$, $A^5_2=\{e_2, e_3\}$, then we have $c_1(A^5_2)=0<1=c_1(A^5_1\setminus \{e\})$.
 Else, we allocate $e_5$ to agent $2$, $A^5_1=\{e_1, e_4\}$, $A^5_2=\{e_2, e_3,e_5\}$, then we have $c_2(A^5_1)=0<1=c_2(A^5_2\setminus \{e\})$.

\begin{figure}
     \centering
     \includegraphics[width=0.8\textwidth]{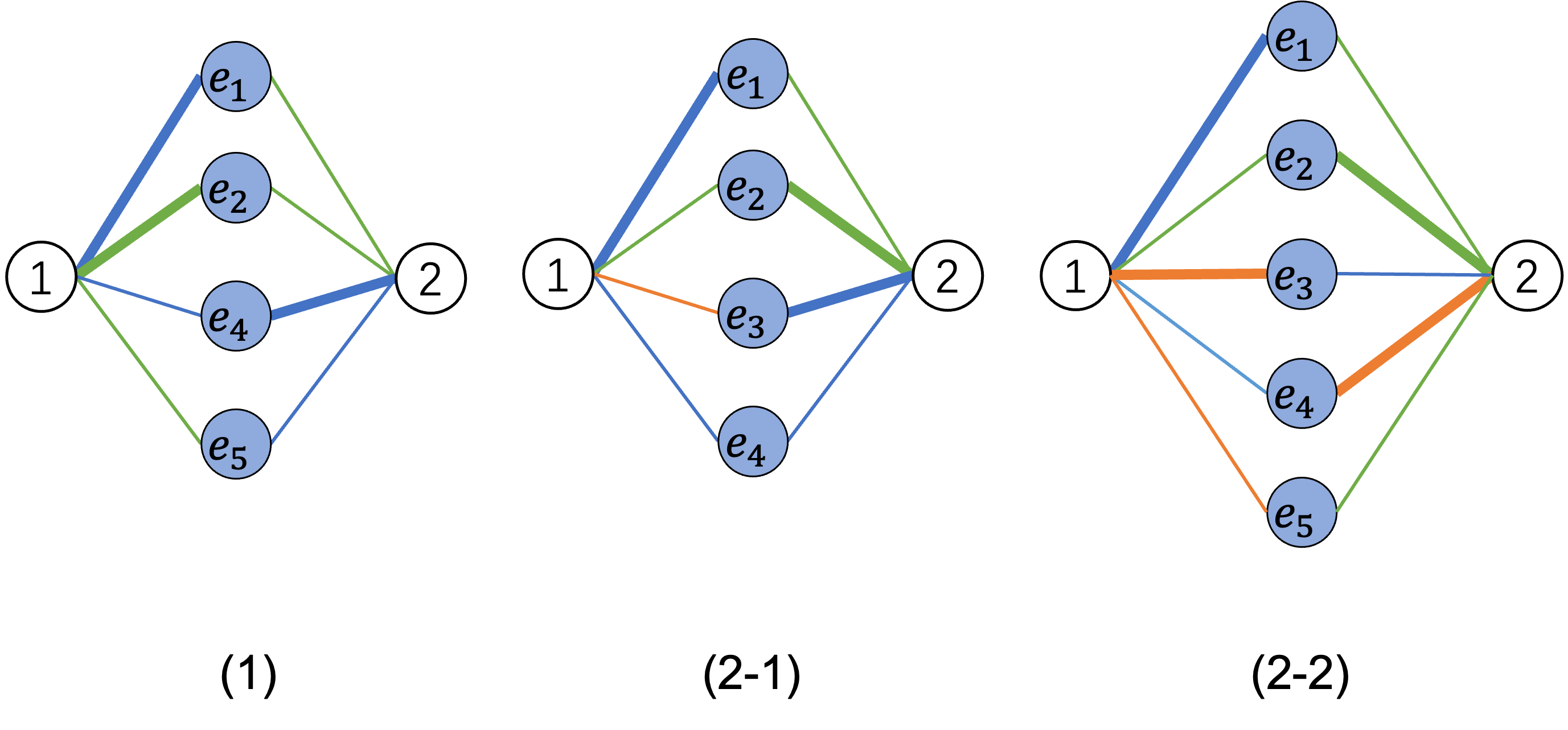}
     \caption{An instance with $\frac{1}{0}$-MMS allocations}
     \label{fig2}
 \end{figure}
Next, let us consider MMS or USC. Consider an online allocation instance; there are two agents $\{1,2\}$ with supermodular binary cost functions $v_1 v_2$.
For $i\in \{1,2\}$, the online items in $T$ belongs to several categories $C^i_1,\ldots,C^i_l$.
In Figure \ref{fig2}, categories are distinguished by colours, and edges of different colours represent items of different categories for adjacent agents.
For agent $i\in \{1,2\}$, consider the corresponding partition matroid $\mathcal{M}_i=\{T,\mathcal{F}_i\}$, $\mathcal{F}_i=\{S: S\subseteq T, |S\cap C_k|\le 1, \text{for all}~ k\in [l]\}$. 
Let the function $r_i$ be the rank function of matroid $\mathcal{M}_i$, the cost function $c_i(\cdot)$ is defined as $c_i(S)=|S|-r_i(S)$ for any item set $S\in T$.
For the first arrived item $e_1$, the marginal utilities for agent $1$ and agent $2$ are $0$.
We allocate item $e_1$ to agent $1$. 
When the second item $e_2$ arrives, the marginal utilities for agent $1$ and agent $2$ are $0$.
We will discuss the following two cases. 

\noindent\textbf{Case 1}: we allocate $e_2$ to agent $1$.
For the third item $e_3$, the marginal utility for agent $1$ is $1$, and the marginal utility for agent $2$ is $0$. In order not to break the requirement of $\mathsf{MMS}_{1}(S)$, we must allocate item $e_3$ to agent $2$.
When the fourth item $e_4$ arrives, the marginal utilities for agent $1$ and agent $2$ are $1$. 
No matter which agent is assigned, the cost of its allocation is $1$, but $\mathsf{MMS}_{1}(T)=\mathsf{MMS}_{2}(T)=0$, $\min USC(T)=0$.

\noindent\textbf{Case 2}: we allocate $e_2$ to agent $2$.
For the third item $e_3$, the marginal utilities for agent $1$ and agent $2$ are $0$. Continue to discuss in two sub-cases
\textbf{Case 2-1}: we allocate $e_3$ to agent $2$.
For the fourth item $e_4$, the marginal utilities for agent $1$ and agent $2$ are $1$. 
No matter which agent is assigned, the cost of its allocation is $1$, but $\mathsf{MMS}_{1}(T)=\mathsf{MMS}_{2}(T)=0$, $\min USC(T)=0$.
\textbf{Case 2-2}: we allocate $e_3$ to agent $1$.
For the fourth item $e_4$, the marginal utilities for agent $1$ and agent $2$ are $0$. 
In order not to break the requirement of $\mathsf{MMS}_{1}(S)$, we must allocate item $e_4$ to agent $2$.
When the fifth item $e_5$ arrives, 
the marginal utilities for agent $1$ and agent $2$ are $0$. 
In order not to break the requirement of $\mathsf{MMS}_{1}(S)$, we must allocate item $e_5$ to agent $2$.
When the sixth item $e_6$ arrives, the marginal utilities for agent $1$ and agent $2$ are $1$. 
No matter which agent is assigned, the cost of its allocation is $1$, but $\mathsf{MMS}_{1}(T)=\mathsf{MMS}_{2}(T)=0$, $\min USC(T)=0$.
\end{proof}

\subsection{Additive Personalized Bi-valued Costs}
In this part, we consider additive personalized bi-valued cost functions. For any agent $i \in N$, her cost function $c_i$ is additive personalized bi-valued if $c_i$ is additive and $c_i(e)\in \{a_i, b_i\}$ for any $e\in T$, where $0<a_i\leq b_i$. Similar to the results of goods setting, even for two agents, an EF1 or MMS allocation cannot be guaranteed.

\subsubsection{Two Agents}
First, we study the additive personalized bi-valued cost functions for two agents and propose the Adapted Chores Envy-Graph Procedure (Algorithm \ref{alg: two_agents_ef1_mms_chores}) to return an allocation that achieves the tight approximation guarantee for EF1 and has a good approximation for MMS.
Using techniques similar to the case of goods, we take the Chores Envy-Graphs procedure (Algorithm \ref{alg: chores_egp}) as the base algorithm.
When it may not be further executed (an envy cycle will appear), we skip to the Chores Preliminary-Breaking-Cycle algorithm (Algorithm \ref{alg: two_agents_cycle_chores}), where we carefully allocate the future items and guarantee the desired approximation.
During the execution of the Chores Preliminary-Breaking-Cycle (CPBC) algorithm, in one complicated case ($(\lambda_1, \mu_1)=(2,2)$), we introduce the Chores Deep-Breaking-Cycle algorithm (Algorithm \ref{alg: two_agents_cycle_cases_chores}) to address it.
The Chores Deep-Breaking-Cycle (CDBC) algorithm may be consecutively executed in many iterations, but the CPBC algorithm is at most consecutively executed in two iterations.
Note that the CPBC and CDBC algorithms are used to remove the possible envy cycle and guarantee that in some further iteration, we can return to the CEGP algorithm.

\begin{algorithm}[tb]
\caption{(ACEGP) Adapted Chores Envy-Graph Procedure}
\label{alg: two_agents_ef1_mms_chores}
\KwIn{An instance $(T, N, (c_1,c_2))$ with additive personalized bi-valued cost functions}
\KwOut{A complete approximate EF1 and approximate MMS allocation ${\bf A}$}

Let ${\bf A} = (\emptyset, \ldots, \emptyset)$; 

Initialize $\alpha = 1$; \tcp{Algorithm Indicator}

Initialize $(\lambda_1, \mu_1)$ and  $(\lambda_2, \mu_2) $, where $\lambda_1=\lambda_2=1 $ and $\mu_1= \mu_2 =0$; \tcp*{$\lambda$: the times of consecutive execution for the  CPBC or CDBC algorithm, and $\mu$: the case label in the CPBC or CDBC algorithm.}

Build an envy graph $G$ for two agents;

\SetKwProg{Def}{when}{ do}{}
\Def{item $e \in T$ arrives}{
\uIf{$\alpha = 1$}{

$A, \alpha \leftarrow \mathsf{CEGP}(N, A, \alpha, e)$;

\If{$\alpha=2$}{
$A, \alpha \leftarrow \mathsf{CPBC}(N, A, \alpha, e, (\lambda_1, \mu_1))$;
\tcp*{Envy cycle will appear.}

}

}
\uElseIf{$\alpha = 2$}{
$A, \alpha, (\lambda_1, \mu_1) \leftarrow \mathsf{CPBC}(N, A, \alpha, e, (\lambda_1, \mu_1))$;

}
\Else{
$A, \alpha, (\lambda_2, \mu_2) \leftarrow  \mathsf{CDBC}(N, A, \alpha, e, (\lambda_2, \mu_2))$;
}

Update the envy graph $G$;

}
\Return ${\bf A}$;
\end{algorithm}

\begin{algorithm}[tb]
\caption{(CEGP) Chores Envy-Graph Procedure}
\label{alg: chores_egp}
\KwIn{$N, {\bf A}, \alpha, e $ }
\KwOut{A complete allocation ${\bf A}$ and $\alpha$}

\eIf{
there is no envy between the two agents
}
{
\eIf{$\exists i \in N$, $c_i(e) = a_k$ }
{
Choosing the agent $i$ who values it at $a_i$ (breaking ties arbitrarily);

}
{
Arbitrarily choose one agent $i$;
}
}
{
Let $i \in N$ be an agent who does not envy the other agent and $j = 2-i$;

\If{$c_j(A_j)>c_j(A_i \cup \{e\})$}{
$\alpha=2$;

\Return ${\bf A}$ and $\alpha$;
}

}
$A_i = A_i \cup \{e\}$;

\Return ${\bf A}$ and $\alpha$;
\end{algorithm}

\begin{algorithm}[tb]
\caption{(CPBC) Chores Preliminary-Breaking-Cycle}
\label{alg: two_agents_cycle_chores}
\KwIn{$N, {\bf A}, \alpha, e, (\lambda_1, \mu_1)$}
\KwOut{A complete allocation ${\bf A}$, $\alpha$ and $(\lambda_1, \mu_1)$}

\eIf {$\lambda_1 = 1$}{

\eIf{$|A_j|=1$}{

$A_j = A_j \cup \{e\}$;

$\alpha=1$, $\lambda_1=1$, and $\mu_1=0$;

}
{

\eIf{$c_i(e) = b_i$}{

$A_j = A_j \cup \{e\}$;

$\lambda_1=2$ and $\mu_2 = 1$;
}
{
$A_i = A_i\cup \{e\}$;

$\lambda_1=2$ and $\mu_2 = 2$;
}
}

}
{

\eIf{$\mu_2 = 1$ 
}{

$A_i = A_i \cup \{e\}$;

}
{

\eIf{$c_j(e) = a_j$}{

$A_j = A_j \cup \{e\}$;

}
{

$A_i = A_i \cup \{e\}$;

$\alpha=3$, $\lambda_1=1$, and $\mu_2=0$;
}
}

$\alpha=1$, $\lambda_1=1$, and $\mu_2=0$;
}

\Return ${\bf A}$, $\alpha$ and $(\lambda_1, \mu_1)$;
\end{algorithm}

\begin{algorithm}[tb]
\caption{(CDBC) Chores Deep-Breaking-Cycle}
\label{alg: two_agents_cycle_cases_chores}
\KwIn{$N, {\bf A}, \alpha, e, (\lambda_2, \mu_2)$}
\KwOut{A complete allocation ${\bf A}$, $\alpha$, and $(\lambda_2, \mu_2)$}

\If{$\lambda_2 \mod 2 =1$}{

\uIf{$c_j(e) = a_j$}{
$A_j = A_j \cup \{e\}$;

$\alpha=1$, $\lambda_2=1$, and $\mu_2 = 0$;
}
\uElseIf{$c_j(e) = b_j$ and $c_i(e) = b_i$}{

$A_j = A_j \cup \{e\}$;

$\lambda_2= \lambda_2+1$ and $\mu_2 = 1$;

}
\Else{
$A_i = A_i \cup \{e\}$;

$\lambda_2= \lambda_2+1$ and $\mu_2 = 2$;
}

}
\uElseIf{$\lambda_2 \mod 2 =0$}{
\uIf { 
$\mu_2 = 1$
}{
\eIf{$c_j(e) =b_j$}{

$A_i = A_i \cup \{e\}$;

$\lambda_2 = \lambda_2+ 1$ and $\mu_2=0$;
}
{
$A_j = A_j \cup \{e\}$;

$\alpha=1$, $\lambda_2 = 1$ and $\mu_2=0$;

}

}
\Else{

$A_j = A_j \cup \{e\}$;

$\lambda_2 = \lambda_2+1$ and $\mu_2=0$;

}
}

\Return ${\bf A}$, $\alpha$, and $(\lambda_2, \mu_2)$;

\end{algorithm}

\begin{definition}[Envy Graph for Chores]
Given an allocation ${\bf A}^k$, the corresponding envy graph is defined as $G = (V,E)$, where the vertex set $V$ corresponds to agent set $N$, and a directed edge $(i,j)\in E$ iff agent $i$ envies agent $j$, i.e., $c_i(A_i^k) > c_i(A_j^k)$. Additionally, the directed cycle in the envy graph is called an envy cycle.
\end{definition}

\begin{theorem}
\label{the: two_agents_bivalued_ef1_mms_chores}
 For the deterministic allocation of indivisible chores for two agents with additive personalized bi-valued cost functions, the Adapted Chores Envy-Graph Procedures (Algorithm \ref{alg: two_agents_ef1_mms_chores}) computes a complete allocation that satisfies $2$-EF1 and $\frac{5}{3}$-MMS.  
 \end{theorem}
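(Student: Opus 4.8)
The plan is to transport the three-algorithm argument behind Theorem~\ref{the: two_agents_bivalued_ef1_mms} to the chores setting, where ``handing an agent a good she values $b_i$'' is replaced by ``handing an agent a chore she costs $a_i$'', and the offline swap that eliminates an envy cycle is simulated online by Algorithms~\ref{alg: two_agents_cycle_chores} and~\ref{alg: two_agents_cycle_cases_chores}. First I would pin down the invariant that the base algorithm (Algorithm~\ref{alg: two_agents_ef1_mms_chores}) maintains while it is in control: at the end of each round the current allocation is EF1 and its envy graph is acyclic, which for two agents just means that at most one agent envies the other. The empty allocation satisfies this. If there is no envy, assigning the new chore to a single agent keeps EF1 --- deleting that chore restores the old envy-free situation for that agent --- and cannot create a $2$-cycle since only one bundle grew by one item; the specific choice of the $a_i$-coster matters only for later invariants. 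If there is envy, let $i$ be the agent who does not envy $j$ (so $j$ envies $i$); if $c_j(A_j)\le c_j(A_i\cup\{e\})$ then giving $e$ to $i$ removes the envy, and EF1 holds for $i$ because $c_i(A_i)\le c_i(A_j)$ already held, whereas if $c_j(A_j)>c_j(A_i\cup\{e\})$ a cycle would appear and we jump to Algorithm~\ref{alg: two_agents_cycle_chores}.

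Next I would record the structural facts that drive the two auxiliary routines, dual to Observations~\ref{obs: pre_breaking_cycle}--\ref{obs: deep_breaking_cycle}. The key one: whenever Algorithm~\ref{alg: two_agents_cycle_chores} is invoked, the triggering chore $e$ satisfies $c_j(e)=a_j$ for the envying agent $j$ --- otherwise $c_j(e)=b_j$, and EF1 of the input bundle gives $c_j(A_j)\le c_j(A_i)+b_j=c_j(A_i\cup\{e\})$, contradicting the branching condition --- and moreover $|A_i|\ge 1$ and $|A_j|\ge 1$. With these in hand I would prove the chores analogue of Lemma~\ref{lem: preliminary_breaking_cycle}: Algorithm~\ref{alg: two_agents_cycle_chores} always returns a complete allocation that is EF1 (possibly with a cycle) or $2$-EF1, and it returns control to Algorithm~\ref{alg: two_agents_ef1_mms_chores} exactly when the allocation is EF1 and cycle-free. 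This is a bounded case analysis over the flags $\lambda\in\{1,2\}$ and $\mu\in\{1,2\}$, which remember the cost type ($a$ versus $b$) of the last one or two chores seen; in the single branch ($\lambda=\mu=2$) where neither bundle can be made EF1 in one step, one extracts an invariant of the form ``$c_i(A_i)+a_i\le c_i(A_j)$ and $j$'s bundle is $2$-EF1 after deleting one of the two new chores'' (the dual of Observation~\ref{obs: deep_breaking_cycle}) and passes to Algorithm~\ref{alg: two_agents_cycle_cases_chores}, for which the analogue of Lemma~\ref{lem: deep_breaking_cycle} is proven in the same style over $\mu\in\{1,2,3,4\}$. Chaining these two lemmas with the base-algorithm invariant by induction on rounds shows the overall output is always complete (no chore is ever discarded by any of the three algorithms) and $2$-EF1. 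I expect this bookkeeping --- checking in each $(\lambda,\mu)$ case that the ``behind'' bundle is within a factor $2$ after removing one chore and that $\lambda,\mu$ are reset consistently --- to be the main obstacle, exactly as in the goods proof.

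Finally, for the $\tfrac53$-MMS guarantee I would derive it from $2$-EF1 rather than re-argue combinatorially. Fix agent $i$; if $A_i=\emptyset$ then $c_i(A_i)=0\le\tfrac53\mathsf{MMS}_i$, so assume $A_i\neq\emptyset$ and pick $e\in A_i$ with $c_i(A_i\setminus\{e\})\le 2c_i(A_j)$. Because every partition of $T^k$ into two parts places each chore in some part, $\mathsf{MMS}_i\ge c_i(e')$ for all $e'\in T^k$ and $\mathsf{MMS}_i\ge\tfrac12 c_i(T^k)$. Writing $x=c_i(A_j)$, completeness gives $c_i(A_i)=c_i(T^k)-x\le 2\mathsf{MMS}_i-x$, while $2$-EF1 gives $c_i(A_i)\le 2x+c_i(e)\le 2x+\mathsf{MMS}_i$. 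The quantity $\min\{2\mathsf{MMS}_i-x,\,2x+\mathsf{MMS}_i\}$ is maximised at $x=\tfrac13\mathsf{MMS}_i$, where it equals $\tfrac53\mathsf{MMS}_i$, hence $c_i(A_i)\le\tfrac53\mathsf{MMS}_i$. Together with the $2$-EF1 part this proves the theorem; the matching impossibility results of Theorem~\ref{the: ef1_bi_valued_impossiblity_results_chores} show that the EF1 factor is optimal and the MMS factor is close to optimal (the lower bound being $\tfrac32$).
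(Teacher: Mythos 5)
Your plan follows essentially the same route as the paper's proof: the same invariant (the base algorithm maintains an EF1 allocation with no envy cycle), the same triggering observations (the chore that causes termination has $c_j(e)=a_j$ for the envying agent and both bundles are nonempty), the same two case-analysis lemmas for the preliminary and deep breaking-cycle routines indexed by the $(\lambda,\mu)$ flags, and the same induction chaining them to get completeness and $2$-EF1. The only divergence is the final step: the paper gets $\tfrac{5}{3}$-MMS by invoking the general implication $\alpha$-EF1 $\Rightarrow \frac{n\alpha+n-1}{n-1+\alpha}$-MMS from \cite{sun2021connections}, whereas you re-derive it directly for $n=2$, $\alpha=2$ from completeness together with $\mathsf{MMS}_i\ge\tfrac12 c_i(T^k)$ and $\mathsf{MMS}_i\ge\max_{e}c_i(e)$; your calculation is correct and self-contained, but it amounts to the standard proof of the same cited bound.
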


Before we show the above Theorem, we first list some lemmas used in the proof.
 \begin{lemma}
 \label{lem: cegp}
    The CEGP algorithm (Algorithm \ref{alg: chores_egp}) computes an EF1 allocation without an envy cycle.
\end{lemma}

\begin{proof}
     Note that there are three kinds of input allocation.
    The first one is the initial empty allocation, which is EF1 without an envy cycle.
    The second one is the output allocation of the CPBC algorithm.
    The last one is the output allocation of the CDBC algorithm. 
    It suffices to show that the last two kinds of input allocation are EF1 without an envy cycle.
    That is because if the input allocation of the CEGP algorithm is EF1 without an envy cycle, by the description of the algorithm and induction, it is easy to check that the property of EF1 without an envy cycle is always maintained.
    By Lemmas \ref{lem: prelinary_cycle_chores} and \ref{lem: deep_cycle_chores}, we can derive that when the CPBC or CDBC algorithm outputs $\alpha=1$, implying that the CEGP algorithm is ready to be executed, the output allocation of the CPBC or CDBC algorithm is EF1 without an envy cycle.
    Therefore, our claim holds.   
\end{proof}

 \begin{observation}
 \label{obs: chores_preliminary_cycle}
     When the CPBC algorithm (Algorithm \ref{alg: two_agents_cycle_chores}) is executed, for the first arriving item $e$, we have $c_j(e)=a_j$.
 \end{observation}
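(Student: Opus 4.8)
The plan is to argue by contradiction, exactly parallel to the proof of Observation \ref{obs: pre_breaking_cycle} for goods. First I would pin down precisely when Algorithm \ref{alg: two_agents_cycle_chores} is launched: inside Algorithm \ref{alg: two_agents_ef1_mms_chores}, with current allocation $A = (A_1, A_2)$, there is envy between the two agents, $i$ is the agent that does not envy the other, $j = 2-i$ is therefore the agent that envies $i$ (so $c_j(A_j) > c_j(A_i)$, and in particular $A_j \neq \emptyset$ since all single-chore costs are positive), and the arriving chore $e$ passes the termination test $c_j(A_j) > c_j(A_i \cup \{e\})$, i.e.\ appending $e$ to the non-envious agent's bundle — the move the base rule would otherwise make — still leaves $j$ envying $i$. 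Throughout I will use the two invariants that Algorithm \ref{alg: two_agents_ef1_mms_chores} maintains on the allocation it hands on: it is $1$-EF1 and its envy graph is cycle-free (the latter being exactly what lets us speak of ``the agent $i$ who does not envy the other'').

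Next I would apply EF1 to the envier. Since $A_j \neq \emptyset$, the EF1 property of $A$ for the ordered pair $(j,i)$ yields a chore $g \in A_j$ with $c_j(A_j \setminus \{g\}) \le c_j(A_i)$, which by additivity of $c_j$ rearranges to $c_j(A_j) \le c_j(A_i) + c_j(g)$. Suppose, for contradiction, that $c_j(e) = b_j$. Because $c_j$ is bi-valued with $a_j \le b_j$, the value $b_j$ is the largest possible single-chore cost for $j$, so $c_j(g) \le b_j = c_j(e)$, and therefore
\[
c_j(A_i \cup \{e\}) \;=\; c_j(A_i) + c_j(e) \;\ge\; c_j(A_i) + c_j(g) \;\ge\; c_j(A_j).
\]
This contradicts the termination test $c_j(A_j) > c_j(A_i \cup \{e\})$ that triggered Algorithm \ref{alg: two_agents_cycle_chores}. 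Hence $c_j(e) = a_j$, as claimed.

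I do not expect a genuine obstacle here; the argument is short once the invocation condition is spelled out. The only points requiring care are (i) invoking EF1 for the \emph{envious} agent $j$ — whose bundle is nonempty, so Definition \ref{def: ef1_chores} genuinely applies — rather than for $i$, and (ii) using that the incoming allocation is $1$-EF1 (not merely approximately so), which should be available from the invariants established for Algorithm \ref{alg: two_agents_ef1_mms_chores}, mirroring how the goods-side Observation \ref{obs: pre_breaking_cycle} relies on the corresponding facts for Algorithm \ref{alg: two_agents_ef1_mms}.
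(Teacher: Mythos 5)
Your proof is correct and is essentially the paper's own argument: both assume $c_j(e)=b_j$, invoke EF1 of the envious agent $j$'s bundle to get $c_j(A_j)\le c_j(A_i)+c_j(g)\le c_j(A_i)+b_j=c_j(A_i\cup\{e\})$, and contradict the termination test $c_j(A_j)>c_j(A_i\cup\{e\})$. The only difference is that you spell out the invocation conditions and invariants more explicitly than the paper does, which is fine.
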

 \begin{proof}
When the CPBC algorithm is ready to be executed, it means that if the arriving item $e$ is allocated to agent $i$, there will be an envy cycle, i.e, $c_j(A_j)>c_j(A_i \cup \{e\})$ and $c_i(A_i \cup \{e\})>c_i(A_j)$
For agent $j$, her bundle is EF1.
Suppose that $c_j(e)=b_j$.
Then, we have $c_j(A_j)\leq c_j(A_i)+c_j(f)\leq c_j(A_i)+b_j =c_j(A_i \cup \{e\})$ for some $f \in A_j$, which is a contradiction.
Therefore, it must hold that $v_j(e)=a_j$.
\end{proof}

\begin{observation}
     When the CPBC algorithm (Algorithm \ref{alg: two_agents_cycle_chores}) is executed, we have $|A_i|\geq 1$ and $|A_j|\geq 1$.
 \end{observation}
 \begin{proof}
     Suppose that $|A_i|=0$.
     Then, we have $|A_j|\geq 1$ since agent $j$ envies agent $i$.
     After allocating item $e$ to agent $i$, they envies each other, i.e, $c_i(e)>c_i(A_j)$ and $c_j(A_j)>c_j(e)$.
     In that case, we derive that $c_j(A_j)=b_j$ and $c_i(A_j)=a_i$, which contradicts the description of our algorithm.
     Suppose that $|A_j|=0$.
     Then, we have $|A_i|=0$.
     Otherwise, agent $j$ will not envy agent $i$, after allocating item $e$ to agent $i$.
     In that case, the envy cycle will not appear.
     Thus, it hold that $|A_i|\geq 1$ and $|A_j|\geq 1$.
 \end{proof}

  \begin{lemma}
  \label{lem: prelinary_cycle_chores}
     The CPBC algorithm (Algorithm \ref{alg: two_agents_cycle_chores}) computes a complete allocation satisfying:
     \begin{itemize}
       \item 2-EF1,
       \item or EF1 with an envy cycle,
       \item or EF1 without an envy cycle.
   \end{itemize}
 \end{lemma}
 \begin{proof}
     The input allocation $\textbf{A}=(A_1, A_2)$ of the CPBC algorithm is EF1.
     According to the description of our algorithm, it is executed in at most two rounds.
     Then, let us consider different rounds.

     \bm{$\lambda_1=1$}.
     When the first item $e_1$ arrives, we have $c_j(e_1)=a_j$ by Observation \ref{obs: chores_preliminary_cycle}.
     When $|A_j|=1$, we have $c_j(A_j)=b_j$ since she envies agent $i$.
     Then, this item is allocated to agent $j$.
     For agent $i$, her bundle is EF since she does not envy agent $j$.
     For agent $j$, her new bundle is EF1 since $|A_j|=1$ and $c_j(e_1)=a_j$.
     Thus, this allocation is EF1 without an envy cycle, and we set $\alpha=1$, $\lambda_1=1$, and $\mu_1=0$, implying that the CEGP algorithm is ready to be implemented.
     When $|A_j|>1$, let us consider two subcases about the cost of $e_1$.
     \begin{itemize}
         \item If $c_i(e_1)=b_i$, agent $j$ picks it and we set $\lambda_2=2$ and $\mu_1=1$.
         For agent $j$, her new bundle is $2$-EF1 since we have $c_j(A_j \setminus \{f\})\leq c_j(A_i)$ for some $f \in A_j$ and $c_j(A_i) \geq c_j(e_1)=a_j$.
         For agent $i$, her bundle is EF since she does not envy agent $j$.
         Thus, this allocation is $2$-EF1.
         \item If $c_j(e_1)=a_i$, agent $i$ picks it and we set $\lambda_2=2$ and $\mu_2=2$.
         For agent $i$, her new bundle is EF1 since after removing $e_1$ from her bundle, we have $c_j(A_j)\leq c_j(A_i)$.
         For agent $j$, her bundle is EF1.
         Thus, this allocation is EF1 with an envy cycle. 
     \end{itemize}
     In the above cases, the CPBC algorithm continues to be executed.

     \bm{$\lambda_1=2$}. 
     When the second item $e_2$ arrives, let us consider two cases about $\mu_1$.
     \begin{itemize}
         \item If $\mu_1 = 1$, agent $i$ picks it.
         For agent $i$, her new bundle is EF since we have $c_i(A_i \cup \{e_2\})\leq c_i(A_i)+b_i \leq c_i(A_j \cup \{e_1\})$.
         For agent $j$, her bundle is EF1 since we have $v_j(A_j \setminus \{f\})\leq v_j(A_i)$ for some $f \in A_j$ and $c_j(e_1) =a_j \leq c_j(e_2)$.
         Thus, this allocation is EF1 without an envy cycle, and we set $\alpha=1$, $\lambda_1=1$, and $\mu_1=0$, implying that the CEGP algorithm is ready to be executed.
         \item If $\mu_1 = 2$, let us consider two subcases about the cost of $e_2$.
         When $c_j(e_2)=a_j$, agent $j$ picks it.
         For agent $i$, her bundle is EF since we have $c_i(A_i) \leq c_i(A_j)$ and $c_i(e_1)=a_i \leq c_i(e_2)$.
         For agent $j$, her new bundle is EF1 since we have $c_j(A_j \setminus \{f\}) \leq c_j(A_i)$ for some $f \in A_j$ and $c_j(e_2)=a_j \leq c_j(e_1)$. 
         Thus, this allocation is EF1 without an envy cycle, and we set $\alpha=1$, $\lambda_1=1$, and $\mu_1=0$, implying that the CEGP algorithm is ready to be executed.
         When $c_j(e_2)=b_j$, agent $i$ picks it.
         For agent $i$, her new bundle is 2-EF1 since we have $c_i(A_i) \leq c_i(A_j)$ and $c_i(e_1)=a_i\leq c_i(A_j)$.
         For agent $j$, her bundle is EF1 since she does not pick any item in these two rounds.
         Thus, this allocation is 2-EF1, and we set $\alpha=3$, $\lambda_1=1$, and $\mu_1=0$, implying that we need to implement the CDBC algorithm.
     \end{itemize}
     Combining the above cases together, it is not hard to see that the claim holds.
 \end{proof}

 \begin{observation}
 \label{obs: chores_deep_cycle}
     When the CDBC algorithm (Algorithm \ref{alg: two_agents_cycle_cases_chores}) is executed, there are two properties about allocation $\textbf{A}$:
     \begin{itemize}
         \item $c_j(A_j)+a_j\leq c_j(A_i)$;
         \item Let $e_1$ and $e_2$ denote the two new allocated items in the CPBC algorithm, we have $c_i(A_i \setminus \{e\} \leq 2c_i(A_j)$, where $e \in \arg\max_{f \in \{e_1,e_2\}}c_i(f)$.
     \end{itemize}
 \end{observation}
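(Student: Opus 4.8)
The plan is to follow the template of the proof of Observation~\ref{obs: deep_breaking_cycle} in the goods setting: read the two inequalities off from the unique branch of Algorithm~\ref{alg: two_agents_cycle_chores} that hands control to Algorithm~\ref{alg: two_agents_cycle_cases_chores}, namely the case $\lambda=2$, $\mu=2$ in which the second processed item satisfies $c_j(e_2)=b_j$ (this is exactly the sub-case in the proof of Lemma~\ref{lem: prelinary_cycle_chores} where $e_2$ is given to agent $i$ and the resulting allocation is declared $2$-EF1). Tracing the flags backwards, $\mu$ was set to $2$ during $\lambda=1$, which forces the first processed item $e_1$ to be assigned to agent $i$ with $c_i(e_1)=a_i$, while Observation~\ref{obs: chores_preliminary_cycle} gives $c_j(e_1)=a_j$; and $c_j(e_2)=b_j$ by the branch condition. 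Write $(A_i,A_j)$ for the EF1 allocation in force just before Algorithm~\ref{alg: two_agents_cycle_chores} was invoked, so that agent $i$ does not envy agent $j$ (i.e.\ $c_i(A_i)\le c_i(A_j)$), agent $j$'s bundle is EF1 towards agent $i$, and $|A_i|\ge 1$, $|A_j|\ge 1$ by the observation preceding Lemma~\ref{lem: prelinary_cycle_chores}. The allocation actually passed to Algorithm~\ref{alg: two_agents_cycle_cases_chores} is then $(A_i\cup\{e_1,e_2\},A_j)$, and $e_1,e_2$ are precisely the two newly allocated items named in the statement.

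For the first property I would expand $c_j(A_i\cup\{e_1,e_2\})=c_j(A_i)+c_j(e_1)+c_j(e_2)=c_j(A_i)+a_j+b_j$. EF1 of agent $j$ towards agent $i$ gives some $f\in A_j$ with $c_j(A_j\setminus\{f\})\le c_j(A_i)$, hence $c_j(A_j)\le c_j(A_i)+c_j(f)\le c_j(A_i)+b_j$; adding $a_j$ to both sides yields $c_j(A_j)+a_j\le c_j(A_i)+a_j+b_j=c_j\bigl(A_i\cup\{e_1,e_2\}\bigr)$, which is exactly the first claimed inequality (with the $A_i$ in the statement read as the post-hand-off bundle $A_i\cup\{e_1,e_2\}$).

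For the second property, pick $e\in\arg\max_{f\in\{e_1,e_2\}}c_i(f)$. Since $c_i(e_1)=a_i$ is the minimum possible value of $c_i$, we have $\min\{c_i(e_1),c_i(e_2)\}=a_i$, so $c_i\bigl((A_i\cup\{e_1,e_2\})\setminus\{e\}\bigr)=c_i(A_i)+a_i$. Combining $c_i(A_i)\le c_i(A_j)$ with $c_i(A_j)\ge a_i$ (valid because $|A_j|\ge 1$) gives $c_i(A_i)+a_i\le 2c_i(A_j)$, the second claimed inequality. Both bounds coincide with the invariants already verified inside the proof of Lemma~\ref{lem: prelinary_cycle_chores} for this sub-case, so I expect the only genuine obstacle to be bookkeeping: pinning down which snapshot of $(A_i,A_j)$ each inequality refers to and confirming that both $e_1$ and $e_2$ land in agent $i$'s bundle, i.e.\ correctly threading the $\lambda,\mu$ flags through the hand-off between Algorithms~\ref{alg: two_agents_ef1_mms_chores}, \ref{alg: two_agents_cycle_chores} and~\ref{alg: two_agents_cycle_cases_chores}.
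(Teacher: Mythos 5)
Your proposal is correct and follows essentially the same route as the paper: the paper's own proof simply notes that Algorithm~\ref{alg: two_agents_cycle_cases_chores} is reached only via the $\lambda=2$, $\mu=2$ branch of Algorithm~\ref{alg: two_agents_cycle_chores} and defers to the corresponding case in the proof of Lemma~\ref{lem: prelinary_cycle_chores}, which is exactly the case analysis (with $c_i(e_1)=a_i$, $c_j(e_1)=a_j$, $c_j(e_2)=b_j$, the EF1 precondition for agent $j$, and $c_i(A_i^{\mathrm{old}})\le c_i(A_j)$ with $|A_j|\ge 1$) that you spell out. Your version just makes explicit the bookkeeping the paper leaves as "easy to check."
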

 \begin{proof}
     When the CDBC algorithm is executed, the CPBC algorithm is terminated for $\lambda_1=\mu_1=2$.
     Then, by the proof of Lemma \ref{lem: prelinary_cycle_chores}, it is easy to check that statements are correct.
 \end{proof}

\begin{lemma}
\label{lem: deep_cycle_chores}
     The CDBC algorithm (Algorithm \ref{alg: two_agents_cycle_cases_chores}) computes a complete allocation satisfying:
     \begin{itemize}
       \item 2-EF1,
       \item or EF1 with an envy cycle,
       \item or EF1 without an envy cycle.
   \end{itemize}
 \end{lemma}
 
 \begin{proof}
     When the CDBC algorithm is executed, by Observation \ref{obs: chores_deep_cycle}, the input allocation $A = (A_1, A_2)$ is 2-EF1, where $c_j(A_j)+a_j \leq c_j(A_i)$ and $c_i(A_i \setminus \{e\})\leq 2c_i(A_j)$ for some $e \in A_i$.
     Let $f_1$ and $f_2$ denote the two items that are allocated to agent $j$ in the CPBC algorithm and we have $\min \{c_j(e_1),c_j(e_2)\}=a_j$.

     \bm{$\lambda_2=1$}.
     When the item $e_1$ arrives at $\lambda_2=1$, let us consider three cases of cost functions.
     \begin{itemize}
         \item When $c_j(e_1)=a_j$, agent $j$ picks it and we set $\alpha=1$, $\lambda_1=1$, and $\mu_1=0$.
         For agent $i$, her bundle is EF1 since we have $c_i(A_i \setminus \{f_1,f_2\})\leq c_i(A_j)$ and $c_i(f_1)=a_i\leq c_i(e_1)$.
         For agent $j$, her new bundle is EF since we have $c_j(A_j)+a_j \leq c_j(A_i)$.
         Thus, this allocation is EF1 without an envy cycle, implying that the CEGP algorithm is ready to be executed.
         \item When $c_j(e_1)=b_j$ and $c_i(e_1)=b_i$, agent $j$ picks it.
         Then, we set $\lambda_2=1$ and $\mu_2 = 1$.
         For agent $i$, her bundle is EF1 since we have $c_i(A_i \setminus \{f_1,f_2\})\leq c_i(A_j)$ and $c_i(f_1)=a_i\leq c_i(e_1)$.
         For agent $j$, her new bundle is EF1 since we have $c_j(A_j \setminus \{f\})\leq c_j(A_i \setminus \{f_1,f_2\})$ for some $f \in A_j$ and $c_j(e_1)=b_j \leq c_j(f_2)$.
         Thus, this allocation is EF1 with an envy cycle.
          \item When $c_j(e_1)=b_j$ and $c_i(e_1)=a_j$, agent $i$ picks it.
          Then, we set $\lambda_2=2$ and $\mu_2 = 2$.
          For agent $i$, her new bundle is still 2-EF1 since we have $\frac{c_i(A_i \setminus \{f\} \cup \{e_1\})}{c_i(A_j)}\leq\frac{c_i(A_j)+c_i(e_1)}{c_i(A_j)}\leq 2$ for some $f \in \{f_1,f_2\}$ where the inequality follows from the fact that $c_i(A_i \setminus \{f_1,f_2\}) \leq c_i(A_j)$, $c_i(f_1)=a_i$ and $|A_j| \geq 2$.
          Thus, this allocation is 2-EF1.
     \end{itemize}
     It can be seen that in the last two cases, the CDBC algorithm continues to be executed.

     \bm{$\lambda_2=2$}.
     When the item $e_2$ arrives at $\lambda_2=2$, let us consider two cases about $\mu_2$.
     \begin{itemize}
         \item When $\mu_2 = 1$, we need to consider two subcases.
         If $c_j(e_2)=b_j$, agent $i$ picks it.
         For agent $i$, her new bundle is 2-EF1 since we have $c_i(A_i \setminus \{f_1,f_2\}) \leq c_i(A_j)$, $c_i(f_1)=a_i \leq c_i(A_j) $ and $c_j(e_2) \leq c_i(e_1)=b_i$.
         For agent $j$, her bundle is EF since we have $c_j(A_j)\leq c_j(A_i)$ and $c_j(e_1)=b_j=c_j(e_2)$.
         Thus, this allocation is 2-EF1 and satisfies the properties in Observation \ref{obs: chores_deep_cycle}.
         Additionally, we set $\lambda_2=1$ and $\mu_2=0$, implying that the CDBC algorithm continues to be executed.
         If $c_j(e_2)=a_j$, agent $j$ picks it.
         For agent $i$, her bundle is EF since we have $c_i(A_i \setminus \{f_1,f_2\})$, $c_i(f_1)=a_i \leq c_i(e_2)$ and $c_i(f_2) \leq c_i(e_1)$.
         For agent $j$, her new bundle is EF1 since we have $c_j(A_j \setminus \{f\}) \leq c_j(A_i \setminus \{f_1,f_2\})$, $c_j(e_1)=b_j=c_j(f_2)$ and $c_j(e_2)=a_j=c_j(f_1)$.
         Thus, this allocation is EF1 without an envy cycle, and we set $\alpha=1$, $\lambda_2=1$, and $\mu_2=0$, implying that the CEGP algorithm is ready to be executed.
         \item When $\mu_2 = 1$, agent $j$ picks it,
         For agent $i$, her bundle is 2-EF1 since we have $c_i(A_i \setminus \{f_1,f_2\}) \leq c_i(A_j)$, $c_i(e_1)=a_i \leq c_i(e_2) $ and $c_i(f_1)=a_i \leq c_i(A_j)$.
         For agent $j$, her new bundle is EF since we have $c_j(A_j) \leq c_j(A_i) $ and $c_j(e_2)\leq c_j(e_1)=b_j$.
        Thus, this allocation is 2-EF1 and satisfies the properties in Observation \ref{obs: chores_deep_cycle}.
         Additionally, we set $\lambda_2=1$ and $\mu_2=0$, implying that the CDBC algorithm continues to be executed.
     \end{itemize}
Combining the above cases, it can be seen that our statement holds.
 \end{proof}

 \begin{proof}[Proof of Theorem \ref{the: two_agents_bivalued_ef1_mms_chores}]
 In the main body of the CAGEP algorithm \ref{alg: two_agents_ef1_mms_chores}, it can be seen that for different values of the algorithm indicator $\alpha$, we need to choose the corresponding algorithm to execute. 
By Lemmas \ref{lem: cegp}, \ref{lem: prelinary_cycle_chores}, and \ref{lem: deep_cycle_chores}, it can be concluded that the output allocation is complete and $2$-EF1.
Regarding MMS, by the implication between EF1 and MMS \cite{sun2021connections}, i.e., $\alpha$-EF1 $\Rightarrow$ $\frac{n\alpha+n-1}{n-1+\alpha}$-MMS, where $n$ is the number of agents, it is easy to see that the output allocation is $\frac{5}{3}$-MMS.
 \end{proof}
Next, we complement our approximation results by providing the lower bounds below.
\begin{theorem}
\label{the: ef1_bi_valued_impossiblity_results_chores}
    For two agents with additive personalized bi-valued cost functions, no deterministic online algorithm can compute a complete allocation that is $(2-\epsilon)$-EF1 or $(\frac{3}{2}-\epsilon)$-MMS for any $\epsilon >0$.
\end{theorem}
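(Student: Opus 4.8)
The plan is to build, against an arbitrary deterministic online algorithm, an adaptive adversarial instance on two agents with additive bi-valued costs taking values in $\{\epsilon,1\}$ for a small $\epsilon>0$, mirroring the goods construction of Theorem~\ref{the: ef1_bi_valued_impossiblity_results}. Since the impossibility is only required to hold for the final allocation (Definitions~\ref{def: ef1_chores}, \ref{def: mms_chores}, \ref{def: completeness}), the adversary may stop revealing chores at any point and evaluate the committed allocation. First I would reveal a short forced prefix: once one agent holds a single chore while the other's bundle is still empty, completeness forces the next chore onto one of them, and assigning it to the agent who already holds a chore makes that agent's own EF1 constraint $c_i(A_i\setminus\{e\})\le\alpha\,c_i(A_j)=0$ unsatisfiable (here the strict positivity of the small value is exactly what we exploit, as in the goods proof). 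This pins down the first few assignments up to the left/right relabelling symmetry of the two agents.

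After the prefix the adversary branches on the algorithm's choice for each subsequent ``neutral'' chore --- one whose two possible assignments both keep the allocation complete and EF1 --- and along each branch keeps feeding chores designed to drive one agent into an expensive bundle. The target leaf is a configuration where some agent $P$ holds three chores each costing $1$ to her, so that deleting the best removable chore still leaves cost $2$, while agent $Q$ holds a bundle costing only $1$ to $P$; then $\alpha$-EF1 demands $2\le\alpha\cdot 1$, i.e.\ $\alpha\ge 2$, and since I would arrange the instance to contain precisely four chores of cost $1$ to $P$ we also get $\mathsf{MMS}_P=2$ against $c_P(A_P)=3$, giving the $3/2$ bound for MMS at the very same leaf. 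The remaining leaves are closed off by showing the algorithm is stuck even more badly for \emph{both} notions simultaneously --- typically some agent is locked into a bundle whose cost after deleting one chore is a factor $\Theta(1/\epsilon)$ larger than the rival's, because that rival's chores cost only $\epsilon$ to the envier --- so sending $\epsilon\to 0$ rules out any complete $(2-\epsilon)$-EF1 or $(3/2-\epsilon)$-MMS guarantee, matching the positive result of Theorem~\ref{the: two_agents_bivalued_ef1_mms_chores}.

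The hard part, compared with the goods side, is that in the chores world the algorithm has far more escape routes: it can keep routing cheap chores (cost $\epsilon$ to the recipient) onto an agent without ever threatening that agent's EF1, and routing a cost-$1$ chore onto the rival simultaneously inflates the rival's bundle and the envier's view $c_i(A_j)$, relieving the pressure. So the adversary cannot simply repeat one chore type; at every node of the case tree it must pick the next chore's pair of costs so that both assignments are harmful, which I expect forces the analysis to nest a couple of levels deep --- paralleling the auxiliary Algorithms~\ref{alg: two_agents_cycle_chores} and~\ref{alg: two_agents_cycle_cases_chores} on the constructive side --- together with careful tracking of the lower-order $\epsilon$ terms and of the ceiling implicit in $\mathsf{MMS}_i^k$, so that every leaf genuinely certifies the claimed ratio.
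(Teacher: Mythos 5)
Your proposal is a strategy outline rather than a proof: the forced prefix is fine (and matches the paper), but the entire content of the theorem is the explicit adversarial chore sequence and the leaf-by-leaf verification, and these are exactly what you leave open (``I expect forces the analysis to nest a couple of levels deep \dots together with careful tracking of the lower-order $\epsilon$ terms''). For comparison, the paper's argument is short and fully explicit, and it uses \emph{two separate} instances: an adaptive sequence with costs in $\{1,\tfrac{1}{\epsilon}\}$ of at most five chores whose leaves have EF1 ratio $\tfrac{1}{\epsilon}$, $2$, $\tfrac{1}{2\epsilon}+\tfrac12$, or $\tfrac{2}{1+\epsilon}$ (hence $\to 2$), and a three-chore instance with comparable costs $a=1$, $b=2$ (costs $(1,1),(1,1),(2,2)$) whose leaves are $2$-MMS or $\tfrac32$-MMS. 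Nothing in the statement forces you to kill both notions with one instance, and insisting on it makes the task strictly harder.

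That choice is also where your plan has a concrete problem, not just missing detail. With extreme bi-values $\{\epsilon,1\}$ (equivalent, after per-agent scaling, to the paper's $\{1,\tfrac{1}{\epsilon}\}$), the leaves that are bad for EF1 are typically harmless for MMS: the agent who gets stuck is stuck precisely because she absorbed the high-cost chores, and those same chores inflate her minimax share. In the paper's EF1 instance, the leaf $A_1=\{e_1,e_3,e_4\}$, $A_2=\{e_2\}$ has EF1 ratio exactly $2$ but MMS ratio $\tfrac{2+1/\epsilon}{1/\epsilon}=1+2\epsilon\to 1$; an analogous leaf will appear in any construction of this shape, so your claim that every non-target leaf is ``stuck even more badly for both notions simultaneously'' is exactly the step that fails and would need a genuinely new construction (this is why the paper's MMS bound uses the value pair $\{1,2\}$). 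Moreover, your assertion that other leaves are worse for both notions by a factor $\Theta(1/\epsilon)$ cannot be correct for MMS at all: for two agents every complete allocation satisfies $c_i(A_i)\le c_i(T^k)\le 2\,\mathsf{MMS}_i^k$, so no leaf ever exceeds MMS ratio $2$, which indicates the MMS side of your case tree has not been thought through. To repair the proposal, either supply the full adversary trees and check both ratios at every leaf of a single instance, or (simpler, as in the paper) give one explicit instance for the EF1 bound and a second one with comparable values for the $\tfrac32$ MMS bound.
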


\begin{proof}
\begin{table}[tb]
\begin{minipage}{\linewidth}
    \centering
    \begin{tabular}{c|c|c|c}
    \toprule
      & $e_1$ & $e_2$ & $e_3$     \\
      \midrule
        agent $1$  & \textcolor{red}{$1$} & $1$ &  $\frac{1}{\epsilon}$ \\
        \midrule
        agent 2 & $1$ & \textcolor{red}{$\frac{1}{\epsilon}$} & \textcolor{red}{$\frac{1}{\epsilon}$} \\
        \bottomrule
    \end{tabular}
    \caption*{Case 1}
\end{minipage}
 \begin{minipage}{\linewidth}
    \centering
    \begin{tabular}{c|c|c|c|c}
    \toprule
      & $e_1$ & $e_2$ & $e_3$& $e_4$    \\
      \midrule
        agent $1$  & \textcolor{red}{$1$} & $1$ &  \textcolor{blue}{$\frac{1}{\epsilon}$} & \textcolor{blue}{$1$}  \\
        \midrule
        agent 2 & $1$ & \textcolor{red}{$\frac{1}{\epsilon}$} & $\frac{1}{\epsilon}$ & $\frac{1}{\epsilon}$\\
        \bottomrule
    \end{tabular}
    \caption*{Case 2.2}
\end{minipage}

\begin{minipage}{\linewidth}
\centering
    \begin{tabular}{c|c|c|c|c|c}
    \toprule
      & $e_1$ & $e_2$ & $e_3$& $e_4$ & $e_5$   \\
      \midrule
        agent $1$  & \textcolor{red}{$1$} & $1$ &  \textcolor{blue}{$\frac{1}{\epsilon}$} & $1$ & $\frac{1}{\epsilon}$  \\
        \midrule
        agent 2 & $1$ & \textcolor{red}{$\frac{1}{\epsilon}$} & $\frac{1}{\epsilon}$ & \textcolor{green}{$\frac{1}{\epsilon}$} & $\frac{1}{\epsilon}$\\
        \bottomrule
    \end{tabular} 
    \caption*{Case 2.3}
\end{minipage}
\caption{The impossibility results of EF1 for additive personalized bi-valued cost functions for two agents.}
    \label{tab: upper_bound_bi_valued_ef1_chores}
\end{table} 

\begin{table}[tb]
\begin{minipage}{\linewidth}
    \centering
    \begin{tabular}{c|c|c}
    \toprule
      & $e_1$ & $e_2$     \\
      \midrule
        agent $1$  & \textcolor{red}{$1$} & \textcolor{red}{$1$} \\
        \midrule
        agent 2 & $1$ & $1$  \\
        \bottomrule
    \end{tabular}
    \caption*{Case 1}
\end{minipage}
\begin{minipage}{\linewidth}
    \centering
    \begin{tabular}{c|c|c|c}
    \toprule
      & $e_1$ & $e_2$ & $e_3$     \\
      \midrule
        agent $1$  & \textcolor{red}{$1$} & $1$ &  2 \\
        \midrule
        agent 2 & $1$ & \textcolor{blue}{$1$} & $2$ \\
        \bottomrule
    \end{tabular}
    \caption*{Case 2}
\end{minipage}

\caption{The impossibility results of MMS for additive personalized bi-valued cost functions for two agents.}
    \label{tab: upper_bound_bi_valued_mms_chores}
\end{table}  
    Let $a_1=a_2=1$ and $b_1=b_2=\frac{1}{\epsilon}$ where $0<\epsilon<1$.
    Consider the following online instance for $n=2$, where the cost of each item is shown in Table \ref{tab: binary_bivalued_ef1_mms_chores}.
     To guarantee the completeness of the chores allocation, we should guarantee that each arriving item is allocated to one agent.
     \paragraph{EF1.}
    When $t =1$, the arriving item has the cost of 1 for both agents.
    Without loss of generality, assume that agent 1 picks it.
    When $t =2$, the item that has the cost of 1 for agent 1 and $\frac{1}{\epsilon}$ for agent 2 arrives.
    To avoid the approximation ratio of EF1 being infinity, it is allocated to agent 2.
    Then, it is easy to check that the allocation is EF1 and MMS.
    In round $t =3$, the item that has the cost of $\frac{1}{\epsilon}$ for both agents arrives.
    Then, let us consider two cases.

    \textbf{Case 1.} If it is allocated to agent 2, no further item will appear.
    Thus, the allocation is $\frac{1}{\epsilon}$-EF1.

    \textbf{Case 2.} If it is allocated to agent 1, it is clear that this allocation is EF1.
    Next, when $t = 4$, the arriving item has the cost of $1$ for agent 1 and $\frac{1}{\epsilon}$ for agent 2.
    Then, let us consider two subcases.
    If agent 1 picks it, no further item will appear.
    So this allocation is trvially $(1+\frac{1}{\epsilon})$-EF1.
    If agent $2$ picks it, the allocation is EF1.
    At last, in round $t =5$, the arriving item has the cost of $\frac{1}{\epsilon}$ for both agents.
    If it is allocated to agent 1, the allocation is $(\frac{1}{2\epsilon}+\frac{1}{2})$-EF1.
    If it is allocated to agent 2, the allocation is $\frac{2}{1+\epsilon}$-EF1.

    Combining the above cases together, we can derive that the lower bound of EF1 is 2.

    \paragraph{MMS.} The constructed instance is below.
    Let $a_1=a_2=1$, $b_1=b_2 = 2$.
    When $t = 1$, the arriving item has the cost of 1.
    Assume that it is allocated to agent 1.
    Then, in round $t =2$, the item that has the cost of 1 for both agents arrives.
    If it is allocated to agent 1, no further allocation will appear.
    Thus, this allocation is 2-MMS.
    If it is allocated to agent 2, this allocation is MMS.
   In round $t =3$, the arriving item has the cost of 2 for both agents.
   No matter which agent picks it, the final allocation is $\frac{3}{2}$-MMS.
    Combining the above cases together, we can derive that the lower bound of MMS is $\frac{3}{2}$.  
\end{proof}

At last, we examine the compatibility between approximate EF1 or MMS and USC, and provide the following negative results.
\begin{proposition}
\label{prop: bivalued_no_efficiency_chores}
Given any $\alpha,\gamma \geq 1$ and $\beta<2$, for the deterministic allocation of indivisible chores, no online algorithm can compute a complete allocation satisfying
\begin{itemize}
    \item $\alpha$-EF1 and $\gamma$-min-USC,
    \item or $\beta$-MMS and $\gamma$-min-USC,
\end{itemize}
even for two agents with additive personalized bi-valued cost functions. \end{proposition}

\begin{proof}
    Without loss of generality, we assume that $c_1 \in \{1,2\}$ and $c_2 \in \{\frac{1}{\epsilon},\frac{1}{\epsilon^2}\}$, where $0<\epsilon< 1$.
    When $t=1$, the first item $e_1$ that has the cost of 1 for agent 1 and $\frac{1}{\epsilon}$ for agent 2 arrives.
    If it is allocated to agent 2, no further items arrive.
    Hence, this allocation is clearly EF1, MMS, and  $\frac{1}{\epsilon}\rightarrow \infty$-min-USC.
    If not, agent $1$ picks it, and this allocation is EF1, MMS, and min-USC.
    Then, when $t=2$, the second arriving item $e_2$ has the cost of 1 for agent 1 and $\frac{1}{\epsilon^2}$ for agent 2.
    If it is allocated to agent 1, this allocation is $\infty$-EF1, 2-MMS, and min-USC.
    If it is allocated to agent 2, this allocation is EF1, MMS, and $\frac{\frac{1}{\epsilon^2}+\epsilon}{\epsilon+1} \rightarrow \infty$-min-USC.
    Combining the above cases, it can be seen that no online algorithm can compute a complete allocation that is $\alpha$-EF1 for any $\alpha \geq 1$ or $\beta$-MMS for any $\beta<2$, and $\gamma$-min-USC for any given $\gamma \geq 1$ when two agents have the additive personalized bi-valued cost functions.      \end{proof}

\subsubsection{The Combination with Additive Binary Costs}

Next, we consider the setting where one agent (agent $n$) has the additive personalized bi-valued cost function, and the remaining agents have additive binary cost functions and propose the Adapted-Chores-Picking Algorithm (Algorithm \ref{alg: binary_bivalued_ef1_mms_chores}) to compute a complete allocation that satisfies EF1 and MMS. 
The intuition of this algorithm is similar to the goods setting.
In Table \ref{tab: binary_bivalued_ef1_mms_chores}, we illustrate how our algorithm works for three agents.
Regarding efficiency, we show that EF1 or MMS is incompatible with USC in this setting.

\begin{algorithm}[t]
\caption{Adapted-Chores-Picking Algorithm}
\label{alg: binary_bivalued_ef1_mms_chores}
\KwIn{An instance $(T, N, (c_i)_{i \in N})$ with $(n-1)$ additive binary cost functions $(c_1.\ldots,c_{n-1})$ and one additive personalized bi-valued cost function $(c_n)$.}
\KwOut{A complete EF1 and MMS allocation ${\bf A}$}

Let ${\bf A} = (\emptyset, \ldots, \emptyset)$;

\SetKwProg{Def}{when}{ do}{}
\Def{item $e \in T$ arrives}{

Let $i_{min} \in \arg \min_{i \in [n-1]}|\{f~|f \in A_i ~and ~ c_i(f)>0 \}|$;

\eIf{$\exists i^* \in N$ such that $c_{i^*}(e)=0$}{

$A_{i^*} = A_{i^*} \cup \{e\}$;
}
{

\eIf{$c_n(e) =a_n$}{

\eIf{$ |\{f~|f \in A_{i_{min}} ~and ~ c_{i_{min}}(f)>0 \}| \geq |A_n| $}{
$A_n = A_n \cup \{e\}$;
}
{

$A_{i_{min}} = A_{i_{min}} \cup \{e\}$;

}

}
{
\eIf{$ |\{f~|f \in A_{i_{min}} ~and ~ c_{i_{min}}(f)>0 \}| \leq |A_n| $}{

$A_{i_{min}} = A_{i_{min}} \cup \{e\}$;

}
{
$A_n = A_n \cup \{e\}$;

}
}

}
}
\Return ${\bf A}$;
\end{algorithm}

\begin{table}[t]
\centering
    \begin{tabular}{c|c|c|c|c|c|c}
    \toprule
      & $e_1$ & $e_2$ & $e_3$& $e_4$ & $e_5$& $\ldots$   \\
      \midrule
        agent $1$  & $1$ & \textcolor{red}{$0$} &  \textcolor{red}{$1$} & 1 & \textcolor{red}{$1$}&$\ldots$\\
        \midrule
         agent $2$  & $1$ & 1 & 1 & \textcolor{red}{$1$} & 1&$\ldots$\\
        \midrule
        agent 3 & \textcolor{red}{$a_3$} & $a_3$ &  $a_3$ & $b_3$ & $b_3$&$\ldots$\\
        \bottomrule
    \end{tabular} 
    \caption{A simple illustration for three agents of Algorithm \ref{alg: binary_bivalued_ef1_mms_chores}.}
    \label{tab: binary_bivalued_ef1_mms_chores}
\end{table}

\begin{theorem}
\label{the: binary_bivalued_chores}
 For the deterministic allocation of indivisible chores for $(n-1)$ agents with additive binary cost functions and one agent with the additive personalized bi-valued cost function, the Adapted-Chores-Picking Algorithm (Algorithm \ref{alg: binary_bivalued_ef1_mms_chores}) computes a complete allocation that satisfies EF1 and MMS.  
\end{theorem}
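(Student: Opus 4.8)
The plan is to follow the proof of Theorem~\ref{the: binary_bivalued_goods} essentially verbatim, transposing it from goods to chores: ``grab the $b_n$-good'' becomes ``off-load the $b_n$-chore'', and envy reverses direction (in the chores sense $i$ envies $j$ iff $c_i(A_i)>c_i(A_j)$). Completeness is immediate because Algorithm~\ref{alg: binary_bivalued_ef1_mms_chores} hands every arriving chore to some agent. I fix a round $k$, write $A^k=(A^k_1,\dots,A^k_n)$ for the produced allocation, and set $p_\ell=|\{f\in A^k_\ell:\ c_\ell(f)>0\}|$ for each agent $\ell$. Two structural facts I record first: since $c_n(e)\in\{a_n,b_n\}$ with $a_n>0$, agent $n$ never receives a zero-cost chore, so $p_n=|A^k_n|$; and every chore the algorithm places through its second branch (into $A_n$ or into $A_{i_{\min}}$) has positive cost for \emph{all} $n$ agents, hence cost exactly $1$ for every binary agent and cost $\ge a_n$ for agent $n$.

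I would prove EF1 by induction on $k$, the empty allocation being the base case, splitting the inductive step on how $e_{k+1}$ is assigned. If $e_{k+1}$ goes to an agent $i^\ast$ with $c_{i^\ast}(e_{k+1})=0$, then $c_{i^\ast}(A_{i^\ast})$ is unchanged while every other agent's cost for the enlarged bundle $A_{i^\ast}$ only weakly increases, so EF1 is preserved. Otherwise $e_{k+1}$ is routed by the size-comparison rule, and I maintain: (a) among binary agents the counts $p_i$ differ pairwise by at most one, because a positive-cost chore is only ever added to a current minimizer $i_{\min}$ or diverted to $A_n$ (leaving the binary counters untouched); (b) $p_n\le \min_{i\in[n-1]}p_i+1$, since agent $n$ acquires a chore only when $p_{i_{\min}}\ge p_n$ and $\min_i p_i$ is nondecreasing; and (c) agent $n$'s cost exceeds her own cost for any binary bundle by at most $a_n$, propagated exactly as in the goods argument from the facts that she is handed a $b_n$-chore only when $p_n<p_{i_{\min}}$ and has by then already shed enough $a_n$-chores. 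From (a) and the second-branch fact, every unit-cost chore of binary $j$ also costs $1$ to binary $i$, so $c_i(A_j)\ge p_j\ge p_i-1$; and $c_i(A_n)=p_n\ge p_i-1$ by a last-receipt argument (agent $i$ got its most recent unit-cost chore while $i=i_{\min}$, and the rule then preferred $A_i$ to $A_n$, forcing $p_i\le p_n+1$). Hence $c_i(A_i\setminus\{e\})=p_i-1\le\min(c_i(A_j),c_i(A_n))$ for a unit-cost $e\in A_i$ (EF1 vacuous when $p_i=0$). For agent $n$, invariant (c) plus her priority over $b_n$-chores gives $c_n(A_n\setminus\{e\})\le c_n(A_j)$ for a suitable $e\in A_n$ (an $a_n$-chore if she holds one, otherwise any chore, using $a_n\le b_n$). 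So $A^k$ is EF1.

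For MMS the binary agents are easy: summing $c_i(A_i)-1\le c_i(A_\ell)$ over the $n-1$ agents $\ell\ne i$ yields $n\,c_i(A_i)\le (n-1)+c_i(T^k)$, so $c_i(A_i)\le\big\lceil c_i(T^k)/n\big\rceil=\mathsf{MMS}_i^k$ since $c_i(A_i)$ is an integer. For agent $n$ I split on whether she receives $e_{k+1}$: if not, invariants (b),(c) and the EF1 analysis give that her bundle has at most one more chore than some binary bundle and costs her at most $a_n$ more; combining $|A_n|=p_n\le\lceil |T^k|/n\rceil$ (from (a),(b) and $\sum_{i\in[n-1]}p_i=S-p_n$, $S$ the number of second-branch chores) and a bound on the number of $b_n$-chores in $A_n$ with the lower bounds $\mathsf{MMS}_n^k\ge\lceil |T^k|/n\rceil a_n$ and $\mathsf{MMS}_n^k\ge c_n(T^k)/n$ gives $c_n(A_n)\le\mathsf{MMS}_n^k$; if she does receive $e_{k+1}$, the same inequalities close the argument after the case split used for goods.

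The step I expect to be the real obstacle is the agent-$n$ analysis, and above all her MMS bound. Unlike Theorem~\ref{the: binary_bivalued_goods}, the $n-1$ binary agents here are \emph{not} identical, so one cannot reason about ``the'' binary bundle: for each binary $j$ one must separately track $c_n(A_j)$ and $|A_j|$, exploiting that the extra first-branch (zero-cost-to-$j$) chores dumped on $j$ only \emph{help} agent $n$'s non-envy while the unit-cost chores in $A_j$ obey the balance invariant~(a). Pinning the constants down to the exact ceilings — both for the single-item removal in EF1 and for the $\lceil\cdot/n\rceil$ in MMS, where agent $n$'s mixture of $a_n$- and $b_n$-chores must be controlled against the adversary's freedom to choose how many $b_n$-items exist — will require the full exhaustive case split on the pair (value $c_n(e_{k+1})\in\{a_n,b_n\}$, which branch of the algorithm fires), just as in the goods proof, and that bookkeeping is where the bulk of the work lies.
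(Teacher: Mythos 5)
Your completeness argument and your treatment of the $n-1$ binary agents are sound (the last-receipt argument giving $p_i\le p_n+1$ and the summation yielding $c_i(A_i)\le\lceil c_i(T^k)/n\rceil$ are in fact tighter than the paper's wording), and your overall plan --- induction over rounds with count-balance invariants, transposed from Theorem~\ref{the: binary_bivalued_goods} --- is the same route the paper takes. The genuine gap is exactly the agent-$n$ analysis that you defer to ``the full exhaustive case split''. Your invariant (c), $c_n(A_n)\le c_n(A_j)+a_n$, cannot be ``propagated exactly as in the goods argument'': in the step where agent $n$ receives a $b_n$-chore, the inductive hypothesis only gives $c_n(A_n)+b_n\le c_n(A_j)+a_n+b_n$, and the crude count bound available at that moment ($p_j\ge p_n+1$, each chore costing her at least $a_n$, hence $c_n(A_j)\ge(p_n+1)a_n$) is far too weak when $b_n\gg a_n$. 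What actually rescues the induction --- and what the paper's (terse) proof invokes at precisely this point, ``the extra one item in another agent's bundle must have the cost of $b_n$ for agent $n$'' --- is a structural lemma you never state: a binary agent's positive-cost count can overtake $|A_n|$ only via a chore costing $b_n$ to agent $n$, because an $a_n$-chore arriving at a tie goes to agent $n$ by her priority. This yields the level-by-level domination (agent $n$'s $m$-th chore costs her at most what each binary agent's $m$-th positive-cost chore costs her) that survives the $b_n$-receipt case; your stated reason, ``she has by then already shed enough $a_n$-chores'', is not the operative mechanism.

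The same omission undermines your MMS sketch for agent $n$: the ingredients you list do not close the bound. Summing the $a_n$-slack envy inequality over the $n-1$ binary agents gives only $c_n(A_n)\le c_n(T^k)/n+(1-\tfrac1n)a_n$, which exceeds the lower bound $\mathsf{MMS}_n^k\ge c_n(T^k)/n$; and $p_n\le\lceil|T^k|/n\rceil$ together with $\mathsf{MMS}_n^k\ge\lceil|T^k|/n\rceil a_n$ only disposes of the case where $A_n$ contains no $b_n$-chore. To reach exact MMS one needs the quantitative consequence of the level-matching lemma --- every level at which $A_n$ holds a $b_n$-chore forces a distinct chore costing $b_n$ to agent $n$ into each binary bundle, so $A_n$ contains at most $\lfloor\beta/n\rfloor$ of the $\beta$ such chores that have arrived --- combined with an MMS lower bound that mixes $b_n$-counts with item counts, not just the two bounds you cite. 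The paper's own write-up of this step is admittedly brief, but it rests on the correct structural claim; your proposal, by contrast, correctly isolates the obstacle and then leaves exactly that decisive step unproved, so as written it is a plan missing the core of the agent-$n$ argument rather than a proof.
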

\begin{proof}
    Given an online fair allocation instance $I=(T, N, (v_i)_{i\in N})$,
Fix an arbitrary round $k$, let $T=\{e_1,\cdots,e_k\}$ is the set of items that have already arrived, denoted by $\textbf{A}^k=(A^k_1,\cdots, A^k_n)$ the allocation at the end of round $t$ in Algorithm \ref{alg: binary_bivalued_ef1_mms_chores}.

    \paragraph{EF1.} We show that the computed allocation is EF1 by induction.
    For the base case, the empty allocation is trivially EF1.
    For the induction step, after the round $k$, the allocation $A^k$ is EF1.
    We show that after allocating item $e_{k+1}$, the allocation $A^{k+1}$ is also EF1.
    \begin{itemize}
        \item If there exists an agent $i \in N$ such that $c_i(e_{k+1})=0$, agent $i$ picks it.
        It is clear that this allocation is still EF1.
        \item If $c_n(e_{k+1})=a_n$, then let us consider two subcases.
        When $|\{f~|f \in A_{i_{min}} ~and ~ c_{i_{min}}(f)>0 \}| \geq |A_n|$, agent $n$ picks it.
        For agent $i \in [n-1]$, her bundle is still EF1.
        For agent $n$, her bundle is EF1 since she has the priority of picking the item with the cost of $a_n$, and the number of items in her bundle is no more than that of items with non-zero costs in agent $i_{min}$'s bundle.
        When $|\{f~|f \in A_{i_{min}} ~and ~ c_{i_{min}}(f)>0 \}| < |A_n|$, agent $i_{min}$ picks it.
        For any agent $i \in N \setminus\{i_{min}\}$, her bundle is still EF1.
        For agent $i_{min}$, her bundle is EF1 since she has the additive binary cost valuation and the lowest number of items with non-zero costs.  
        \item If $v_n(e_{k+1})=b_n$, the let us consider two subcases.
         When $|\{f~|f \in A_{i_{min}} ~and ~ c_{i_{min}}(f)>0 \}| \leq |A_n|$, agent $i_{min}$ picks it.
         For any agent $i \in N \setminus\{i_{min}\}$, her bundle is still EF1.
        For agent $i_{min}$, her bundle is EF1 since she has the additive binary cost valuation and the lowest number of items with non-zero costs. 
        For any agent $i \in N \setminus \{i_{min}\}$, her bundle remains untouched and is still EF1.
        When $|\{f~|f \in A_{i_{min}} ~and ~ c_{i_{min}}(f)>0 \}| > |A_n|$, agent $n$ picks it.
        For agent $i \in [n-1]$, her bundle is trivially EF1.
        For agent $n$, her bundle is EF since she has the priority of picking the item with the cost of $a_n$, and in this case, the extra item in another agent's bundle must have the cost of $b_n$ for agent $n$.        
    \end{itemize}
Combining the above cases, it can be seen that after allocating item $e_{k+1}$, the allocation $\textbf{A}^{k+1}$ is still EF1.

\paragraph{MMS.}
Fix an arbitrary agent $i \in N$ and an arbitrary round $k$.
If her bundle is EF, we are done.
If not, then let us consider two cases.
When $i \in [n-1]$, she has the additive binary valuation function, and by our algorithm description, the number of items with non-zero costs in her bundle is at most one more than that of the other agent's bundle.
Thus, it is not hard to see that her bundle is MMS.
When $i = n$, currently, the number of items in her bundle is one more than that of the other agent's bundle, and the extra item has the cost of $a_n$.
After removing this item, the cost of her bundle is no more than that of the other agent's bundle from her perspective, so we can conclude that her bundle is MMS.
\end{proof}

\begin{proposition}
\label{prop: binary_bivalued_no_efficiency_chores}
Given any $\alpha, \beta\geq1$, for the deterministic allocation of indivisible chores, no online algorithm can compute a complete allocation satisfying
\begin{itemize}
    \item $\alpha$-EF1 and $\beta$-min-USC,
    \item or MMS and $\beta$-min-USC,
\end{itemize}
even for two agents where one agent has the additive binary cost function and the other agent has the additive personalized bi-valued cost function. 
\end{proposition}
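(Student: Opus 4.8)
The plan is to adapt the adaptive-adversary construction used for the goods version (Proposition~\ref{prop: binary_bivalued_no_efficiency_goods}) to the cost setting, using only two agents and at most two chores. Let agent~$1$ have the additive binary cost function, with the adversary only ever presenting chores of cost $1$ to her, and let agent~$2$ have the additive bi-valued cost function with $a_2=\frac{1}{\epsilon}$ and $b_2=\frac{1}{\epsilon^2}$, where $\epsilon\in(0,1)$ is a small parameter to be fixed at the end (so that indeed $a_2\le b_2$). In round $t=1$ the adversary reveals a chore $e_1$ with $c_1(e_1)=1$ and $c_2(e_1)=\frac{1}{\epsilon}$; by completeness a deterministic online algorithm must commit it to one of the two agents before seeing anything else. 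The observation driving the construction is that, because agent~$1$'s costs are tiny compared with agent~$2$'s, the cost-minimising allocation always piles all revealed chores onto agent~$1$, so the minimum social cost $\min USC$ equals the number of revealed chores; hence any online decision that places a chore on agent~$2$ inflates the USC ratio.

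The case analysis then splits as follows. If $e_1$ is given to agent~$2$, the adversary stops: here $USC=\frac{1}{\epsilon}$ while $\min USC=1$, so the allocation is only $\frac{1}{\epsilon}$-min-USC, even though it is readily checked to be EF1 and (exactly) MMS. If $e_1$ is given to agent~$1$, the adversary reveals $e_2$ with $c_1(e_2)=1$ and $c_2(e_2)=\frac{1}{\epsilon^2}$. Assigning $e_2$ to agent~$1$ makes $A_1=\{e_1,e_2\}$ with $c_1(A_1)=2$ and $A_2=\emptyset$, so no $e\in A_1$ satisfies $c_1(A_1\setminus\{e\})\le\alpha\cdot c_1(A_2)=0$: the allocation is not $\alpha$-EF1 for any $\alpha$, and since $\mathsf{MMS}_1=1$ (the balanced two-way split of $\{e_1,e_2\}$ has maximum cost $1$) while $c_1(A_1)=2$, it is not MMS either. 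Assigning $e_2$ to agent~$2$ instead gives $USC=1+\frac{1}{\epsilon^2}$ against $\min USC=2$, i.e. a $\frac{1+1/\epsilon^2}{2}$-min-USC allocation, again EF1 and MMS. Because the impossibility only needs the guarantees to fail at the end, the adversary may stop in every leaf.

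Putting the leaves together, for any target $\alpha,\beta\ge 1$ it suffices to choose $\epsilon$ small enough that $\frac{1}{\epsilon}>\beta$ and $\frac{1+1/\epsilon^2}{2}>\beta$: then every play of every deterministic online algorithm ends at an allocation that is either not ($\alpha$-EF1 or MMS) or not $\beta$-min-USC, which is exactly the claim. The step needing the most care is the leaf bookkeeping — certifying $\mathsf{MMS}_1=1$ and $\mathsf{MMS}_2\in\{\frac{1}{\epsilon},\frac{1}{\epsilon^2}\}$ in the relevant branches, and confirming that $\min USC$ is $1$ after one chore and $2$ after two, so that the ``efficient'' branches genuinely violate $\beta$-min-USC; the EF1 and exact-MMS verifications in those branches are immediate since the bundles involved are empty or singletons. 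This mirrors, and is strictly simpler than, the analysis behind Proposition~\ref{prop: bivalued_no_effiency_chores}.
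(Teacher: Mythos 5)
Your proposal is correct and follows essentially the same adversarial construction as the paper's own proof: the identical two-chore instance (costs $1$ for the binary agent versus $\frac{1}{\epsilon},\frac{1}{\epsilon^2}$ for the bi-valued agent) with the same case analysis, where the "fair" branches blow up the USC ratio and the branch giving both chores to agent $1$ destroys both EF1 and exact MMS. The only difference is that you spell out the MMS and $\min USC$ bookkeeping slightly more explicitly, which matches the paper's intent.
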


\begin{proof}
    Without loss of generality, we assume that $c_1\in \{0,1\}$ and $c_2 \in \{\frac{1}{\epsilon},\frac{1}{\epsilon^2}\}$, where $0<\epsilon< 1$.
    When $t=1$, the first item $e_1$ that has the cost of 1 for agent 1 and $\frac{1}{\epsilon}$ for agent 2 arrives.
    If it is allocated to agent 2, no further items arrive.
    Hence, this allocation is clearly EF1, MMS, and $\frac{1}{\epsilon}\rightarrow \infty$-min-USC.
    If not, agent $1$ picks it, and this allocation is EF1, MMS, and min-USC.
    Then, when $t=2$, the second arriving item $e_2$ has the cost of 1 for agent 1 and $\frac{1}{\epsilon^2}$ for agent 2.
    If it is allocated to agent 1, this allocation is $\infty$-EF1, 2-MMS, and min-USC.
    If it is allocated to agent 2, this allocation is EF1, MMS, and $\frac{\frac{1}{\epsilon^2}+1}{2}= \frac{1}{2\epsilon^2}+\frac{1}{2}\rightarrow\infty$-min-USC.
    Combining the above cases, it can be seen that no online algorithm outputs a complete allocation that is $\alpha$-EF1 or MMS in the setting of additive binary and additive personalized bi-valued cost functions, and guarantees $\beta$-min-USC for any given $\alpha,\beta \geq 1$.      \end{proof}

\section{Conclusion and Future Work}
In our paper, we study the fair online allocation of indivisible items and present a series of positive and negative results regarding the existence and approximation results in fairness and efficiency with various valuation or cost functions, which provide a comprehensive view of what can be achieved.
For future work, the most interesting direction is to fill the gap in the approximation of EF1 and MMS for bi-valued valuation or cost functions when there are more than two agents, because it involves the complicated envy cycle, and the current technique cannot be directly generalized.
Additionally, based on our results in other online settings, it would also be interesting to study the items with deadline setting in a more complicated valuation or cost function setting. 

\newpage
\bibliographystyle{plainnat}
\bibliography{ref}

\begin{thebibliography}{28}
\providecommand{\natexlab}[1]{#1}
\providecommand{\url}[1]{\texttt{#1}}
\expandafter\ifx\csname urlstyle\endcsname\relax
  \providecommand{\doi}[1]{doi: #1}\else
  \providecommand{\doi}{doi: \begingroup \urlstyle{rm}\Url}\fi

\bibitem[Aleksandrov et~al.(2015)Aleksandrov, Aziz, Gaspers, and Walsh]{aleksandrov2015online}
Martin Aleksandrov, Haris Aziz, Serge Gaspers, and Toby Walsh.
\newblock Online fair division: analysing a food bank problem.
\newblock In \emph{Proceedings of the 24th International Conference on Artificial Intelligence}, pages 2540--2546, 2015.

\bibitem[Amanatidis et~al.(2018)Amanatidis, Birmpas, and Markakis]{amanatidis2018comparing}
Georgios Amanatidis, Georgios Birmpas, and Evangelos Markakis.
\newblock Comparing approximate relaxations of envy-freeness.
\newblock In \emph{Proceedings of the 27th International Joint Conference on Artificial Intelligence}, pages 42--48, 2018.

\bibitem[Amanatidis et~al.(2023)Amanatidis, Aziz, Birmpas, Filos-Ratsikas, Li, Moulin, Voudouris, and Wu]{amanatidis2023fair}
Georgios Amanatidis, Haris Aziz, Georgios Birmpas, Aris Filos-Ratsikas, Bo~Li, Herv{\'e} Moulin, Alexandros~A Voudouris, and Xiaowei Wu.
\newblock Fair division of indivisible goods: Recent progress and open questions.
\newblock \emph{Artificial Intelligence}, 322:\penalty0 103965, 2023.

\bibitem[Ashlagi et~al.(2018)Ashlagi, Burq, Dutta, Jaillet, Saberi, and Sholley]{ashlagi2018maximum}
Itai Ashlagi, Maximilien Burq, Chinmoy Dutta, Patrick Jaillet, Amin Saberi, and Chris Sholley.
\newblock Maximum weight online matching with deadlines.
\newblock \emph{arXiv preprint arXiv:1808.03526}, 2018.

\bibitem[Aziz et~al.(2023)Aziz, Lindsay, Ritossa, and Suzuki]{aziz2023fair}
Haris Aziz, Jeremy Lindsay, Angus Ritossa, and Mashbat Suzuki.
\newblock Fair allocation of two types of chores.
\newblock In \emph{Proceedings of the 22rd International Conference on Autonomous Agents and Multiagent Systems}, pages 143--151, 2023.

\bibitem[Barman and Verma(2021)]{barman2021existence}
Siddharth Barman and Paritosh Verma.
\newblock Existence and computation of maximin fair allocations under matroid-rank valuations.
\newblock In \emph{Proceedings of the 20th International Conference on Autonomous Agents and MultiAgent Systems}, pages 169--177, 2021.

\bibitem[Barman et~al.(2023)Barman, Narayan, and Verma]{barman2023suppermodular}
Siddharth Barman, Vishnu Narayan, and Paritosh Verma.
\newblock Fair chore division under binary supermodular costs.
\newblock In \emph{Proceedings of the 22nd International Conference on Autonomous Agents and Multiagent Systems}, pages 2863--2865, 2023.

\bibitem[Benabbou et~al.(2021)Benabbou, Chakraborty, Igarashi, and Zick]{benabbou2021finding}
Nawal Benabbou, Mithun Chakraborty, Ayumi Igarashi, and Yair Zick.
\newblock Finding fair and efficient allocations for matroid rank valuations.
\newblock \emph{ACM Transactions on Economics and Computation}, 9\penalty0 (4):\penalty0 1--41, 2021.

\bibitem[Benad{\`e} et~al.(2022)Benad{\`e}, Halpern, and Psomas]{benade2022dynamic}
Gerdus Benad{\`e}, Daniel Halpern, and Alexandros Psomas.
\newblock Dynamic fair division with partial information.
\newblock In \emph{Proceedings of the 36th International Conference on Neural Information Processing Systems}, pages 3703--3715, 2022.

\bibitem[Benad{\`e} et~al.(2024)Benad{\`e}, Kazachkov, Procaccia, Psomas, and Zeng]{benade2024fair}
Gerdus Benad{\`e}, Aleksandr~M Kazachkov, Ariel~D Procaccia, Alexandros Psomas, and David Zeng.
\newblock Fair and efficient online allocations.
\newblock \emph{Operations Research}, 72\penalty0 (4):\penalty0 1438--1452, 2024.

\bibitem[Budish(2011)]{budish2011combinatorial}
Eric Budish.
\newblock The combinatorial assignment problem: Approximate competitive equilibrium from equal incomes.
\newblock \emph{Journal of Political Economy}, 119\penalty0 (6):\penalty0 1061--1103, 2011.

\bibitem[Cookson et~al.(2024)Cookson, Ebadian, and Shah]{cookson2024temporal}
Benjamin Cookson, Soroush Ebadian, and Nisarg Shah.
\newblock Temporal fair division.
\newblock \emph{arXiv preprint arXiv:2410.23416}, 2024.

\bibitem[Ebadian et~al.(2022)Ebadian, Peters, and Shah]{ebadian2022fairly}
Soroush Ebadian, Dominik Peters, and Nisarg Shah.
\newblock How to fairly allocate easy and difficult chores.
\newblock In \emph{Proceedings of the 21st International Conference on Autonomous Agents and Multiagent Systems}, pages 372--380, 2022.

\bibitem[Elkind et~al.(2024)Elkind, Lam, Latifian, Neoh, and Teh]{elkind2024temporal}
Edith Elkind, Alexander Lam, Mohamad Latifian, Tzeh~Yuan Neoh, and Nicholas Teh.
\newblock Temporal fair division of indivisible items.
\newblock \emph{arXiv preprint arXiv:2410.14593}, 2024.

\bibitem[Friedman et~al.(2015)Friedman, Psomas, and Vardi]{friedman2015dynamic}
Eric Friedman, Christos-Alexandros Psomas, and Shai Vardi.
\newblock Dynamic fair division with minimal disruptions.
\newblock In \emph{Proceedings of the 16th ACM conference on Economics and Computation}, pages 697--713, 2015.

\bibitem[Friedman et~al.(2017)Friedman, Psomas, and Vardi]{friedman2017controlled}
Eric Friedman, Christos-Alexandros Psomas, and Shai Vardi.
\newblock Controlled dynamic fair division.
\newblock In \emph{Proceedings of the 18th ACM Conference on Economics and Computation}, pages 461--478, 2017.

\bibitem[Goldman and Procaccia(2015)]{goldman2015spliddit}
Jonathan Goldman and Ariel~D Procaccia.
\newblock Spliddit: Unleashing fair division algorithms.
\newblock \emph{ACM SIGecom Exchanges}, 13\penalty0 (2):\penalty0 41--46, 2015.

\bibitem[He et~al.(2019)He, Procaccia, Psomas, and Zeng]{he2019achieving}
Jiafan He, Ariel Procaccia, Alexandros Psomas, and David Zeng.
\newblock Achieving a fairer future by changing the past.
\newblock In \emph{Proceedings of the 28th International Joint Conference on Artificial Intelligence}, pages 343--349, 2019.

\bibitem[Hosseini et~al.(2024)Hosseini, Huang, Igarashi, and Shah]{hosseini2024class}
Hadi Hosseini, Zhiyi Huang, Ayumi Igarashi, and Nisarg Shah.
\newblock Class fairness in online matching.
\newblock \emph{Artificial Intelligence}, 335:\penalty0 104177, 2024.

\bibitem[Kash et~al.(2014)Kash, Procaccia, and Shah]{kash2014no}
Ian Kash, Ariel Procaccia, and Nisarg Shah.
\newblock No agent left behind: Dynamic fair division of multiple resources.
\newblock \emph{Journal of Artificial Intelligence Research}, 51:\penalty0 579--603, 2014.

\bibitem[Lee et~al.(2019)Lee, Kusbit, Kahng, Kim, Yuan, Chan, See, Noothigattu, Lee, Psomas, et~al.]{lee2019webuildai}
Min~Kyung Lee, Daniel Kusbit, Anson Kahng, Ji~Tae Kim, Xinran Yuan, Allissa Chan, Daniel See, Ritesh Noothigattu, Siheon Lee, Alexandros Psomas, et~al.
\newblock Webuildai: Participatory framework for algorithmic governance.
\newblock \emph{Proceedings of the ACM on human-computer interaction}, pages 1--35, 2019.

\bibitem[Lehmann et~al.(2001)Lehmann, Lehmann, and Nisan]{lehmann2001combinatorial}
Benny Lehmann, Daniel Lehmann, and Noam Nisan.
\newblock Combinatorial auctions with decreasing marginal utilities.
\newblock In \emph{Proceedings of the 3rd ACM conference on Electronic Commerce}, pages 18--28, 2001.

\bibitem[Li et~al.(2018)Li, Li, and Li]{li2018dynamic}
Bo~Li, Wenyang Li, and Yingkai Li.
\newblock Dynamic fair division problem with general valuations.
\newblock In \emph{Proceedings of the 27th International Joint Conference on Artificial Intelligence}, pages 375--381, 2018.

\bibitem[Munkres(1957)]{munkres1957algorithms}
James Munkres.
\newblock Algorithms for the assignment and transportation problems.
\newblock \emph{Journal of the society for industrial and applied mathematics}, 5\penalty0 (1):\penalty0 32--38, 1957.

\bibitem[Rey et~al.(2023)Rey, Hammad, and Saberi]{rey2023vaccine}
David Rey, Ahmed~W Hammad, and Meead Saberi.
\newblock Vaccine allocation policy optimization and budget sharing mechanism using reinforcement learning.
\newblock \emph{Omega}, 115:\penalty0 102783, 2023.

\bibitem[Schrijver et~al.(2003)]{schrijver2003combinatorial}
Alexander Schrijver et~al.
\newblock \emph{Combinatorial optimization: polyhedra and efficiency}, volume~24.
\newblock 2003.

\bibitem[Sun et~al.(2021)Sun, Chen, and Doan]{sun2021connections}
Ankang Sun, Bo~Chen, and Xuan~Vinh Doan.
\newblock Connections between fairness criteria and efficiency for allocating indivisible chores.
\newblock In \emph{Proceedings of the 20th International Conference on Autonomous Agents and Multiagent Systems}, pages 1281--1289, 2021.

\bibitem[Zhou et~al.(2023)Zhou, Bai, and Wu]{zhou2023multi}
Shengwei Zhou, Rufan Bai, and Xiaowei Wu.
\newblock Multi-agent online scheduling: Mms allocations for indivisible items.
\newblock In \emph{Proceedings of the 40th International Conference on Machine Learning}, pages 42506--42516, 2023.

\end{thebibliography}

\newpage
\appendix

\section{Other Online Settings}
\label{sec: other_setting}

Online algorithms face two roadblocks: unknown future information and the inability to revoke previous decisions. A natural question is how to avoid the worst-case scenario in online algorithms.
In this section, we study two special settings.
One is that each arriving item can wait for one period and does not need to be allocated immediately, which can be found in the online matching literature \cite{ashlagi2018maximum}.
In this setting, we prove that an EF1 allocation for two agents with additive personalized bi-valued valuation/cost functions exists.
The other one is knowing some information in advance, which originates from the online MMS allocation with normalized valuations \cite{zhou2023multi}.
In this setting, we show that the Marginal-Greedy Algorithm computes a non-wasteful EF1 allocation when items are goods, and the Round-Robin algorithm computes a complete EF1 allocation when items are chores.


\subsection{Items with Deadline}
In this part, each item has a deadline $d$ after its arrival.
After $d$ units of time, we must decide how to allocate it.
For two agents with additive personalized bi-valued valuation functions in the goods setting, we design the Adapted Matching Algorithm (Algorithm \ref{alg: two_agents_ef1_mms_store_one_item})
to compute an NW allocation that satisfies EF1.
For two agents with additive personalized bi-valued cost functions in the chores setting, we design the Adapted Chores Matching Algorithm (Algorithm \ref{alg: two_agents_ef1_mms_store_one_item_chores})
to compute a complete allocation that satisfies EF1.
The common intuition of the above two algorithms is that we find a perfect matching between two agents and two items after two consecutive items arrive so that we can guarantee that there is no envy cycle between two agents and EF1 always holds.

\begin{theorem}
\label{the: two_agents_bivalued_save_item}
 For the deterministic allocation of indivisible goods for two agents with additive personalized bi-valued valuation functions, if we allow each item to stay $d=1$ period after its arrival, the Adapted Matching Algorithm (Algorithm \ref{alg: two_agents_ef1_mms_store_one_item}) computes an NW allocation that satisfies EF1.     
\end{theorem}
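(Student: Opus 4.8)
The plan is to exploit the single period of slack: holding an item for one round is equivalent to processing the stream with one step of lookahead, so when the item $e_t$ finally has to be committed (at the end of round $t{+}1$) the algorithm already knows the next arrival $e_{t{+}1}$, and it can treat the two pending items $e_t,e_{t{+}1}$ together and commit a \emph{perfect matching} of these two items to the two agents. Non-wastefulness needs no real argument: bi-valued means every value lies in $\{a_i,b_i\}$ with $0<a_i\le b_i$, so additivity makes $\Delta_{A_i\setminus\{e\}}(e)=v_i(e)>0$ automatic, and since no good is ever discarded in the deadline model, condition~(2) of Definition~\ref{def: non-wastefulness} is vacuous. Hence the whole content of the theorem is EF1 of the committed partial allocation after every round.

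I would prove EF1 by maintaining the invariant that after every committed allocation the allocation is EF1 \emph{and} the two-agent envy graph is acyclic, i.e.\ at most one agent envies the other. The empty allocation satisfies this, and this invariant is exactly what licenses the endgame: when the stream stops with one item still buffered, hand that item to the agent whom nobody envies; the other agent's bundle is unchanged and EF1 toward the enlarged bundle follows by deleting precisely the just-added item. For the inductive step I would do a case analysis on (a) the current envy situation (envy-free, or agent $2$ envies agent $1$, say) and (b) the two value profiles $\bigl(v_1(e),v_1(e')\bigr)$ and $\bigl(v_2(e),v_2(e')\bigr)$ of the pending pair. The benign cases are: if both agents rank the two items the same way, give each agent one item round-robin style, handing the larger item to whoever is currently behind, which preserves EF1 and acyclicity; if the agents disagree about which item is better, the matching freedom lets each agent be handed a copy it likes and the standard envy-graph bookkeeping goes through.

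The delicate case is when one agent already carries a substantial envy and both agents strictly prefer the same one of the two items: here any perfect matching of the pair can be forced to create a transient envy cycle — this is exactly the configuration that, in the online model \emph{without} slack, makes Algorithm~\ref{alg: two_agents_ef1_mms} lose the factor of $2$. The lookahead is used precisely to defuse it: the deadline-expiring item is committed to the agent whose envy it most reduces, the other item is kept pending, and I would show — via the same structural observations used for the base algorithm (the blocking item is necessarily valued $a_i$ by its intended recipient, the residual imbalance after the step is bounded by $b_i-a_i$, etc., in the spirit of Observations~\ref{obs: pre_breaking_cycle}--\ref{obs: deep_breaking_cycle}) — that the leftover imbalance is small enough that the very next arrival restores an acyclic, EF1 allocation, while EF1 is never violated in between.

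The step I expect to be the main obstacle is exactly this last one: proving that the potential cycle is always ``payable'', i.e.\ that EF1 survives the transient state and that no cycle can persist up to the moment the final buffered item is committed. I would handle it with a tight invariant for the transient state that records which agent over-received, by how much, and which single item, once removed, restores each pairwise comparison; then in each subcase of the next round's value profile one checks that either the over-received agent's envy is wiped out or the ``witness'' removable item is refreshed, so the deficit never accumulates past one item. The remaining pieces (EF1 in the benign cases, non-wastefulness, and the symmetric bookkeeping for the two orderings of the agents) are routine and would be dispatched quickly.
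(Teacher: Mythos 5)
Your skeleton coincides with the paper's: treat the one-period slack as one-step lookahead, commit the two pending items as a perfect matching to the two agents, maintain the invariant ``EF1 and acyclic envy graph'' after every committed item, dispose of non-wastefulness trivially, and handle the final buffered item separately (your endgame rule of giving it to the agent whom nobody envies is in fact a clean way to finish). The difference -- and the gap -- is in the inductive step, which is exactly the part you leave as a plan. The paper's proof rests on a quantitative invariant (its Lemma \ref{lem: two_agents_bivalued_save_one_item}) that you never state or prove: after every committed pair, an envious agent's deficit is \emph{exactly} $b_i-a_i$, and a strictly ahead agent's surplus is at least $b_i-a_i$. This is what makes every EF1 check a one-line computation (if the envious agent misses an item, that item is worth only $a_i$ to her and the other bundle contains a $b_i$-item whose removal restores parity), and it is what guarantees that giving the envious agent her weakly preferred item of the pair closes, or at worst preserves, a one-unit deficit -- so no envy cycle ever needs to be tolerated.

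Moreover, your diagnosis of the ``delicate case'' is off, and the mechanism you propose for it is unsubstantiated. If one agent envies and \emph{both} agents strictly prefer the same item of the pair, handing that item to the envious agent closes her envy (by the one-unit invariant), so at worst the direction of envy flips; no matching is forced to create a cycle there. The configuration that actually threatens a cycle is the envious agent being indifferent between the two items while the other agent, sitting at exact parity, strictly prefers the item handed to the envious one; the remedy is simply to let the non-envious agent's preference break the envious agent's ties inside the matching, not to defer commitment. Your alternative -- commit only the expiring item, keep the other pending, allow a transient mutual-envy state and ``pay it off'' at the next arrival -- is precisely the step you admit is the obstacle, and you give no invariant or case analysis showing the repair always succeeds; it is also genuinely dangerous, since once both agents envy each other, the agent treated as the one ``not to be prioritized'' can slide to a deficit of $2(b_i-a_i)$, which violates EF1 whenever $b_i>2a_i$. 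Until you either prove the exact-gap invariant and show cycles never arise under your matching rule, or rigorously establish the claimed repair of transient cycles, the proof is incomplete at its crux.
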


\begin{lemma}
\label{lem: two_agents_bivalued_save_one_item}
    In Algorithm \ref{alg: two_agents_ef1_mms_store_one_item}, for the allocation $A$ after allocating two consecutive online items $e_1$ and $e_2$, we have the following two properties for any agent $i \in N$:
    \begin{itemize}
        \item If $v_i(A_i)<v_i(A_j)$, then $v_i(A_i)-v_i(A_j)=a_i-b_i$;
        \item If $v_i(A_i)>v_i(A_j)$, then $v_i(A_i)-v_i(A_j)\geq b_i-a_i$.
    \end{itemize}
\end{lemma}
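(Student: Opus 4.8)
\textbf{Proof proposal for Lemma~\ref{lem: two_agents_bivalued_save_one_item}.}

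The plan is to analyze Algorithm~\ref{alg: two_agents_ef1_mms_store_one_item} by induction on the number of pairs of consecutive items processed. The key invariant I would maintain is exactly the statement of the lemma: at the end of processing each pair $(e_1,e_2)$, for each agent $i$, either $i$ does not envy $j$ at all, or the envy gap is \emph{small} in the sense that $v_i(A_j)-v_i(A_i)=b_i-a_i$ (equivalently $v_i(A_i)-v_i(A_j)=a_i-b_i$), while any advantage agent $i$ holds over $j$ is \emph{large}, namely at least $b_i-a_i$. Both statements should follow from the fact that after each batch of two items, Algorithm~\ref{alg: two_agents_ef1_mms_store_one_item} assigns exactly one of $e_1,e_2$ to each agent via a perfect matching between $\{1,2\}$ and $\{e_1,e_2\}$ chosen to keep envy controlled.

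First I would set up the base case: the empty allocation trivially satisfies both bullets vacuously (neither inequality's hypothesis holds, since $v_i(A_i)=v_i(A_j)=0$). For the inductive step, suppose the invariant holds after some batch, so for each agent $i$ the ``deficit'' $v_i(A_j)-v_i(A_i)$ is either $\le 0$ (with any surplus at least $b_i-a_i$ whenever strictly positive) or exactly $b_i-a_i$. Now two new items $e_1,e_2$ arrive, and the algorithm picks a matching. Since each agent receives exactly one new item whose value to $i$ lies in $\{a_i,b_i\}$, the change in $v_i(A_i)-v_i(A_j)$ across the batch lies in $\{-(b_i-a_i),\,0,\,b_i-a_i\}$: it is $0$ if both items have equal value to $i$ or $i$ happens to get the item matching $j$'s value, $+(b_i-a_i)$ if $i$ gets the $b_i$-item while $j$ gets the $a_i$-item (from $i$'s perspective), and $-(b_i-a_i)$ in the opposite case. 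I would then do the case analysis on the prior state combined with this increment, using the fact that the algorithm, when it has a genuine choice, always matches the high-value item to the agent who is currently behind (this is the envy-graph-style rule, analogous to line~\ref{line: choose the high value} of Algorithm~\ref{alg: two_agents_ef1_mms}). Concretely: if $i$ was ahead by at least $b_i-a_i$, the worst the algorithm does is decrease this by $b_i-a_i$, landing at $\ge 0$; and if it lands exactly at $0$ the lemma still holds since the hypotheses of both bullets require strict inequality — wait, more carefully, I would track that the algorithm only takes from the agent who is ahead when that agent is ahead by the large amount, so it can never overshoot into a large deficit. If $i$ was behind by exactly $b_i-a_i$, the algorithm gives $i$ the favorable item, moving the gap to $0$ or to a surplus of $b_i-a_i$. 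The only subtle point is ruling out a \emph{double} swing in the wrong direction, which cannot happen because only one item per agent is added per batch.

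The main obstacle I anticipate is the case where $a_1,b_1$ and $a_2,b_2$ interact badly in the matching — e.g., both agents prefer the same item, or the two items are ``parallel'' ($e_1$ is the high item for both), so that no matching can keep both agents simultaneously non-envious. Here I would lean on the batch-of-two structure: I must verify that the algorithm's matching choice, even when forced to leave one agent envious, always leaves that agent envious by exactly $b_i-a_i$ and not more, and simultaneously does not create a new large deficit for the other agent. This requires checking the four sign-patterns of $(v_1(e_1),v_1(e_2))$ against $(v_2(e_1),v_2(e_2))$, and in each pattern confirming the matching rule of Algorithm~\ref{alg: two_agents_ef1_mms_store_one_item} respects the invariant; I expect this to be a finite but somewhat tedious enumeration, and the delicate sub-case is when one agent is already at deficit $b_i-a_i$ going in and the incoming pair is parallel in favor of the \emph{other} agent. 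I would handle that by showing the algorithm in that situation breaks the parallel pair so as to repay the standing deficit (giving the behind agent one item now), which it can do precisely because it has both items in hand before committing.
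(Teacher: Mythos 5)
Your proposal matches the paper's proof in approach: both argue by induction over the two-item batches with the lemma's statement as the invariant, use that each agent receives exactly one item per batch so $v_i(A_i)-v_i(A_j)$ changes by an amount in $\{0,\pm(b_i-a_i)\}$, and invoke the matching rule that an envious agent receives an item she weakly prefers to the one given to the other agent, so her deficit never grows beyond $b_i-a_i$. Two harmless slips to tidy when writing it up: from a deficit of exactly $b_i-a_i$ the gap can only move to $0$ or remain at $-(b_i-a_i)$ (it cannot jump to a surplus of $b_i-a_i$ in one batch), and to close the second bullet you should record that the gap is always an integer multiple of $b_i-a_i$ (equal bundle sizes plus bi-valued items), which is also what makes the ``agent ahead'' case immediate.
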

\begin{proof}
We show the claim by induction.
For the base case, the statement trivially holds.
For the induction step, after allocating items $e_{k-1}$ and $e_k$, the two properties hold.
Without loss of generality, let $A$ denote the allocation after allocating items $e_{k-1}$ and $e_k$. 
Then, we show that allocating items $e_{k+1}$ and $e_{k+2}$, our claim also holds.
Fix an arbitrary agent $i \in N$.
Now, consider two cases about her.
\begin{itemize}
    \item $v_i(A_i) \geq v_i(A_j)$.
    Because one agent picks exactly one item, it is clear that after allocating items $e_{k+1}$ and $e_{k+2}$, the claim holds since $v_i(e_{k+1}),v_i(e_{k+2})\in \{a_i, b_i\}$.
    \item $v_i(A_i)< v_i(A_j)$.
    In this case, according to our algorithm description, she picks one item with a value that is greater than or equal to that of the other one.
    Suppose that she picks $e_{k+1}$.
    If $v_i(e_{k+1})=v_i(e_{k+2})$, then we have $v_i(A_i \cup\{e_{k+1}\})-v_i(A_j \cup \{e_{k+2}\})=a_i-b_i+v_i(e_{k+1})-v_i(e_{k+2})=a_i-b_i$.
    If $v_i(e_{k+1})>v_i(e_{k+2})$, then we have $v_i(A_i \cup\{e_{k+1}\})-v_i(A_j \cup \{e_{k+2}\})=a_i-b_i+v_i(e_{k+1})-v_i(e_{k+2})=a_i-b_i+b_i-a_i=0$.
\end{itemize}
This completes the induction and establishes the correctness of our claim.
\end{proof}

\begin{proof}[Proof of Theorem \ref{the: two_agents_bivalued_save_item}]
Given an online fair allocation instance $I=(T, N, (v_i)_{i\in N})$,
Fix an arbitrary round $k$, let $T=\{e_1,\ldots,e_k\}$is the set of items that have already arrived, denoted by $\textbf{A}^k=(A^k_1,\ldots, A^k_n)$ the allocation at the end of round $t$ in Algorithm \ref{alg: two_agents_ef1_mms_store_one_item}.

\paragraph{EF1.} We will show that the returned allocation $\textbf{A}$ is always EF1 by induction.
For the base case, the empty bundle is trivially EF1 without an envy cycle.
For the induction step, after allocating item $e_t$ for any $t \in [k]$, the allocation $A^t$ is EF1 without an envy cycle. 
Next, we show that after allocating item $e_{k+1}$, the allocation $A^{k+1}$ is EF1 without an envy cycle.
Let us consider three cases about item $e_{k+1}$.
\begin{itemize}
    \item $\lambda=1$ and item $e_{k+1}$ is the last arriving item.
When envy exists between two agents, assume that agent $i$ is the agent who envies agent $j = 2-i$.
If she picks $e_{k+1}$, we are done.
If not, agent $j$ picks it and we get $v_i(e_{k+1})=a_i$.
For agent $j$, she still does not envy agent $i$, and her new bundle is EF.
For agent $i$, by Lemma \ref{lem: two_agents_bivalued_save_one_item}, we have $v_i(A_i^k)-v_i(A_j^k)=a_i-b_i$.
Then, we have $v_i(A_i^k) - v_i(A_j^k \cup \{e_{k+1}\} \setminus \{g\}) = a_i-b_i-a_i+b_i=0$ where $A_{i}^k$ is EF1 and there must exists an item $g \in A_j$ such that $v_i(g)=b_i$, implying the allocation $A^{k+1}$ is EF1 without an envy cycle.
When no envy exists between two agents, allocating $e_{k+1}$ to any agent will not violate EF1.
\item $\lambda = 2$ and item $e_{k+1}$ needs to be allocated firstly.
When envy exists between two agents, assume that agent $i$ is the agent who envies agent $j = 2-i$.
If agent $i$ picks $e_{k+1}$, we are done.
If not, for agent $i$, her bundle is still EF1 since we have $v_i(A_i)-v_i(A_j \cup \{e_{k+1}\} \setminus \{g\})=a_i-b_i-a_i+b_i=0$ where there must exist $g \in A_j$ such that $v_i(g)=b_i$.
For agent $j$, she does not envy agent $i$ and by Lemma \ref{lem: two_agents_bivalued_save_one_item}, we have $v_j(A_j)-v_j(A_i)\geq b_j-a_j$.
Thus, her bundle is EF.
When no envy exists between two agents, no matter which agent picks $e_{k+1}$, the new allocation is EF1 without an envy cycle.
\item $\lambda = 2$ and item $e_{k+1}$ needs to be allocated secondly.
When envy exists between two agents, assume that agent $i$ is the agent who envies agent $j = 2-i$.
If she picks it, we are done.
If not, it means that item $e_{k}$ is allocated to agent $i$ and $v_i(e_{k})\geq v_i(e_{k+1})$.
For agent $i$, considering that $A_{i}^{k-1}$ is EF1, we can derive that her bundle is still EF1.
For agent $j$, it is trivial that her new bundle is EF.
When no envy exists between two agents, no matter which agent picks $e_{k+1}$, the new allocation is EF1 without an envy cycle.
\end{itemize}
Combining the above case, we can conclude that the allocation $\textbf{A}^{k+1}$ is EF1 without an envy cycle. 
This completes the induction and establishes the correctness of our claim.
\end{proof}

\begin{theorem}
\label{the: two_agents_bivalued_save_item_chores}
 For the deterministic allocation of indivisible chores for two agents with additive personalized bi-valued cost functions, if we allow each item to stay $d=1$ period after its arrival, the Adapted Chores Matching algorithm (Algorithm \ref{alg: two_agents_ef1_mms_store_one_item_chores}) computes a complete allocation that satisfies EF1.     
\end{theorem}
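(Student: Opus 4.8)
The plan is to mirror the proof of Theorem~\ref{the: two_agents_bivalued_save_item} for the goods setting, reversing every inequality to account for costs. Completeness is immediate, since the Adapted Chores Matching Algorithm never discards a chore; the whole content is to maintain, round by round, the invariant that the current allocation is EF1 \emph{and} has no envy cycle between the two agents. The enabling observation is that the one-period deadline lets the algorithm hold each item until the next one arrives, so it may process the arrivals in consecutive pairs and, whenever two items $e_1,e_2$ are pending, assign them to the two agents via a perfect matching chosen to prevent an envy cycle: if some agent currently envies the other, that envious agent is forced to take whichever of the two pending items is no costlier for her, and otherwise the two items are assigned so as to keep the bundles balanced.

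First I would establish the chores analogue of Lemma~\ref{lem: two_agents_bivalued_save_one_item}: after every consecutive pair $e_{2\ell-1},e_{2\ell}$ has been allocated, for every agent $i$ with $j=2-i$, if $c_i(A_i)>c_i(A_j)$ then $c_i(A_i)-c_i(A_j)=b_i-a_i$ (and moreover $A_i$ contains a chore of cost $b_i$ to $i$), while if $c_i(A_i)<c_i(A_j)$ then $c_i(A_j)-c_i(A_i)\ge b_i-a_i$. The proof is by induction on $\ell$. In the inductive step each agent receives exactly one of the two new items, whose cost to $i$ lies in $\{a_i,b_i\}$: if $i$ does not currently envy $j$, the cost difference changes by at most $b_i-a_i$ in either direction and both clauses survive; if $i$ currently envies $j$, then by the matching rule $i$ takes the pending item that is at least as cheap for her as the one $j$ gets, so the gap $b_i-a_i$ either stays $b_i-a_i$ (the two items have equal cost to $i$) or collapses to $0$ (her item is strictly cheaper), and the claimed bounds hold in both subcases.

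Next I would prove the theorem itself by induction over rounds with the invariant ``the current allocation is EF1 with no envy cycle.'' The nontrivial case is when an envy $c_i(A_i)>c_i(A_j)$ exists and the arriving item $e$ is \emph{not} taken by the envious agent $i$ (which happens only when $e$ must be placed with $j$ for feasibility, or when $c_i(e)=a_i$ forces $i$ onto a cheaper pending item). By the structural lemma $c_i(A_i)-c_i(A_j)=b_i-a_i$ and $A_i$ contains a chore $g$ with $c_i(g)=b_i$, so $c_i(A_i\setminus\{g\})=c_i(A_i)-b_i=c_i(A_j)-a_i\le c_i(A_j)\le c_i(A_j\cup\{e\})$, i.e.\ $i$ is still EF1 toward $j$; and by the second clause of the lemma $j$ remains envy-free toward $i$, so no cycle forms. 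The three regimes tracked by the counter $\lambda$ --- $\lambda=1$ with $e$ forced now, $\lambda=2$ with $e$ allocated first, and $\lambda=2$ with $e$ allocated second --- are handled exactly as in the proof of Theorem~\ref{the: two_agents_bivalued_save_item}, flipping the roles of $a_i$ and $b_i$ and the direction of every inequality, and when no envy is present any assignment trivially preserves EF1.

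The main obstacle I anticipate is precisely this deadline-driven case split: when a forced placement breaks the clean pairing, one must check that the resulting partial matching still avoids an envy cycle and keeps EF1, in both the ``$e$ must be placed now'' and ``$e$ may wait'' sub-cases, and the bookkeeping has to be carried out carefully in parallel with the three-way analysis used in the goods proof. The structural lemma, in particular the auxiliary claim that an envied-from agent always holds a $b_i$-cost chore, is what makes each of these sub-cases go through, so getting that invariant exactly right is the crux.
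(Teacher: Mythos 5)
Your proposal is correct and follows essentially the same route as the paper: the same treat-items-in-pairs matching idea, the same structural lemma on the cost gap (the paper's Lemma on Algorithm for chores states the two gap clauses, and the extra ``$A_i$ holds a $b_i$-cost chore'' fact you fold into the lemma is exactly what the paper invokes inside the theorem's proof), and the same induction on ``EF1 with no envy cycle'' split over the three $\lambda$-regimes. One small remark: in the chores setting the genuinely delicate step is the mirror image of the case you highlight --- it occurs when the \emph{envious} agent herself receives an $a_i$-cost chore and EF1 is restored by removing a $b_i$-cost chore from her own bundle --- but this is precisely the computation your strengthened lemma supplies, so the argument goes through as in the paper.
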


\begin{lemma}
\label{lem: two_agents_bivalued_save_one_item_chores}
    In Algorithm \ref{alg: two_agents_ef1_mms_store_one_item_chores}, for the allocation $\textbf{A}$ after allocating two consecutive online items $e_1$ and $e_2$, we have the following two properties for any agent $i \in N$:
    \begin{itemize}
        \item If $c_i(A_i)<c_i(A_j)$, then $c_i(A_i)-c_i(A_j)\leq a_i-b_i$;
        \item If $c_i(A_i)>c_i(A_j)$, then $c_i(A_i)-c_i(A_j)= b_i-a_i$.
    \end{itemize}
\end{lemma}

\begin{proof}
    We show the claim by induction.
    For the base case, the statement trivially holds.
    For the induction step, after allocating items $e_{k-1}$ and $e_{k}$, the two properties hold.
    Let $\textbf{A}$ denote such an allocation.
    Then, we show that after allocating items $e_{k+1}$ and $e_{k+2}$, our claim also holds.
    Fix an arbitrary agent $i \in N$.
    Now, consider two cases about her.
    \begin{itemize}
        \item $c_i(A_i)< c_i(A_j)$.
        After allocating two items $e_{k+1}$ and $e_{k+2}$, the two properties hold since agent $i$ only picks one item and has a additive personalized bi-valued cost function i.e., $c_i(e_{k+1}),c_i(e_{k+2})\in \{a_i,b_i\}$.
        \item $c_i(A_i) > c_i(A_j)$.
        By our algorithm description, she picks one item with a cost that is lower than or equal to the other one.
        Suppose that she picks $e_{k+1}$.
        If $c_i(e_1)=c_i(e_{k+2})$, we have $c_i(A_i \cup \{e_{k+1}\}) - c_i(A_j \cup \{e_{k+2}\})=b_i-a_i$.
        If $c_i(e_1)<c_i(e_2)$, we have $c_i(A_i \cup \{e_{k+1}\}) - c_i(A_j \cup \{e_{k+2}\})=b_i-a_i+a_i-b_i=0$.
    \end{itemize}
    This completes the induction and establishes the correctness of our claim.
\end{proof}

\begin{proof}[Proof of Theorem \ref{the: two_agents_bivalued_save_item_chores}]
Given an online fair allocation instance $I=(T, N, (c_i)_{i\in N})$,
Fix an arbitrary round $k$, let $T=\{e_1,\ldots,e_k\}$ be the set of items that have already arrived, denoted by $\textbf{A}^k=(A^k_1,\ldots, A^k_n)$ the allocation at the end of round $t$ in Algorithm  \ref{alg: two_agents_ef1_mms_store_one_item_chores}.

\paragraph{EF1.}
We will show that the output allocation $\textbf{A}$ is always EF1 by induction.
For the base cases, the empty bundle is EF1 without an envy cycle.
For the induction step, after allocating item $e_{t}$ for any $t \in [k]$, the allocation $\textbf{A}^{t}$ is EF1 without an envy cycle.
Next, we show that after allocating item $e_{k+1}$, the allocation $\textbf{A}^{k+1}$ is EF1 without an envy cycle.
Let us consider three cases about item $e_{k+1}$.
\begin{itemize}
    \item $\lambda=1$ and item $e_{k+1}$ is the last arriving item.
    When envy exists between two agents, assume that agent $i$ is the agent who does not envy agent $j = 2-i$.
    If she picks it, we are done.
    If agent $j$ picks it, it means that $c_j(e_{k+1})=a_{j}$.
    For agent $i$, she does not envy agent $j$, so her bundle is still EF.
    For agent $j$, by Lemma \ref{lem: two_agents_bivalued_save_one_item_chores}, we have $c_j(A_j^k)-c_j(A_i^k)=b_j-a_j$.
    Then, we have $c_j(A_j^k \setminus \{f\} \cup \{e_{k+1}\})-c_j(A_i)=b_j-a_j-b_j+a_j=0$ where $A_{j}^k$ is EF1 and there must exist an item $f$ such that $c_j(f)=b_j$, implying that allocation $\textbf{A}^{k+1}$ is EF1 without an envy cycle.
    When there is no envy between two agents, allocating $e_{k+1}$ to any agent will not violate EF1.

    \item $\lambda = 2$ and item $e_{k+1}$ needs to allocated firstly.
    When envy exists between two agents, assume that agent $i$ is the agent who does not envy agent $j = 2-i$.
    If agent $i$ picks it, we are done.
    If not, agent $j$ picks it.
    For agent $i$, she does not envy agent $j$, so her bundle is EF.
    For agent $j$, her new bundle is EF1 since we have $c_j(A_j^k)-c_j(A_i^k)=b_j-a_j$ and $c_j(e_{k+1})=a_j$ by the condition of our algorithm.
    Thus, this allocation is EF1 without an envy cycle.
    When there is no envy between two agents, no matter which agent picks $e_{k+1}$, the new allocation is EF1 without an envy cycle.
    \item $\lambda=2$ and item $e_{k+2}$ needs to be allocated secondly.
    When envy exists between two agents, assume that agent $i$ is the agent who does not envy agent $j = 2-i$.
    If agent $i$ picks it, we are done.
    If not, agent $j$ picks it.
    In that case, item $e_{k}$ is allocated to agent $i$.
    For agent $i$, she does not envy agent $j$, so her bundle is EF.
     For agent $j$, we can still derive that her new bundle is EF1 since by our assumption, we have $A^{k-1}_{j}$ is EF1, and $c_j(e_{k+2})\leq c_j(e_{k+1})$ by the condition of our algorithm.
    Thus, this allocation is EF1 without an envy cycle.
    When there is no envy between two agents, no matter which agent picks $e_{k+1}$, the new allocation is EF1 without an envy cycle.
\end{itemize}
Combining the above cases, we can conclude that the allocation $\textbf{A}^{k+1}$ is EF1 without an envy cycle. 
This completes the induction and establishes the correctness of our claim.
\end{proof}

\begin{algorithm}[H]
\caption{Adapted Matching Algorithm}
\label{alg: two_agents_ef1_mms_store_one_item}
\KwIn{An instance $(T, N, (v_1,v_2))$ with additive personalized bi-valued valuation functions and deadline $d=1$}
\KwOut{An NW EF1 allocation $\textbf{A}$}

Let $A = (\emptyset, \ldots, \emptyset)$;

Initialize $\lambda = 1$;

\SetKwProg{Def}{when}{ do}{}
\Def{item $e \in T$ arrives}{

\eIf{$\lambda = 1$ }{

Let this item wait for $d$ period and use $e_1$ to denote it.

$\lambda = 2$;
}
{

Let $e_2$ denote the new arriving item;

\eIf{
there exists an envy between the two agents
}{

Let $i \in N$ be the agent who envies the other one and $j = 2-i$;

\uIf{
$v_i(e_1)= b_i$
}{
$A_i =A_i \cup \{e_1\}$ and $A_j = A_j \cup \{e_2\}$;
}
\uElseIf{$v_i(e_2)=b_i$}{
$A_j =A_j \cup \{e_1\}$ and $A_i = A_i \cup \{e_2\}$;
}
\uElseIf{$v_j(e_1)=b_j$}{
$A_j =A_j \cup \{e_1\}$ and $A_i = A_i \cup \{e_2\}$;
}
\Else{
$A_i =A_i \cup \{e_1\}$ and $A_j = A_j \cup \{e_2\}$;
}
}
{
\uIf{
$v_i(e_1)>v_i(e_2)$
}{
$A_i =A_i \cup \{e_1\}$ and $A_j = A_j \cup \{e_2\}$;
}
\uElseIf{
$v_i(e_1)<v_i(e_2)$
}{
$A_j =A_j \cup \{e_1\}$ and $A_i = A_i \cup \{e_2\}$;
}
\Else{
\eIf{
$v_j(e_1)>v_j(e_2)$
}{
$A_j =A_j \cup \{e_1\}$ and $A_i = A_i \cup \{e_2\}$;
}
{
$A_i =A_i \cup \{e_1\}$ and $A_j = A_j \cup \{e_2\}$;
}
}
}

$\lambda =1$;
}
}

\uIf{
$\lambda=1$ and no further items appear
}{
\eIf{there is no envy between two agents}{

Allocate item $e_1$ to an arbitrarily agent;

}
{

Let $i \in N$ be the agent who envies the other one and $j = 2-i$;

\eIf{$v_i(e_1)=b_i$}{
$A_i = A_i \cup\{e_1\}$;
}
{
$A_j = A_j \cup\{e_1\}$;
}
}

}

\Return $\textbf{A}$;
\end{algorithm}

\begin{algorithm}[H]
\caption{Adapted Chores Matching Algorithm}
\label{alg: two_agents_ef1_mms_store_one_item_chores}
\KwIn{An instance $(T, N, (c_1,c_2))$ with additive personalized bi-valued cost functions and deadline $d=1$}
\KwOut{A complete EF1 allocation $\textbf{A}$}

Let $A = (\emptyset, \ldots, \emptyset)$;

Initialize $\lambda = 1$;

\SetKwProg{Def}{when}{ do}{}
\Def{item $e \in T$ arrives}{

\eIf{$\lambda = 1$ }{

Let this item wait for $d$ period and use $e_1$ to denote it.

$\lambda = 2$;
}
{

Let $e_2$ denote the new arriving item;

\eIf{
there exists an envy between the two agents
}{

Let $i \in N$ be the agent who does not envy the other one, and $j = 2-i$;

\uIf{
$c_j(e_1)= b_j$
}{
$A_i =A_i \cup \{e_1\}$ and $A_j = A_j \cup \{e_2\}$;
}
\uElseIf{$c_j(e_2)=b_j$}{
$A_j =A_j \cup \{e_1\}$ and $A_i = A_i \cup \{e_2\}$;
}
\uElseIf{$c_i(e_1)=b_i$}{
$A_j =A_j \cup \{e_1\}$ and $A_i = A_i \cup \{e_2\}$;
}
\Else{
$A_i =A_i \cup \{e_1\}$ and $A_j = A_j \cup \{e_2\}$;
}
}
{
\uIf{
$c_j(e_1)>c_j(e_2)$
}{
$A_i =A_i \cup \{e_1\}$ and $A_j = A_j \cup \{e_2\}$;
}
\uElseIf{
$c_j(e_1)<c_j(e_2)$
}{
$A_j =A_j \cup \{e_1\}$ and $A_i = A_i \cup \{e_2\}$;
}
\Else{
\eIf{
$c_i(e_1)>c_i(e_2)$
}{
$A_j =A_j \cup \{e_1\}$ and $A_i = A_i \cup \{e_2\}$;
}
{
$A_i =A_i \cup \{e_1\}$ and $A_j = A_j \cup \{e_2\}$;
}
}
}

$\lambda =1$;
}
}

\uIf{
$\lambda=1$ and no further items appear
}{
\uIf{there is no envy between two agents}{

Allocate item $e_1$ to an arbitrarily agent;

}
\Else{

Let $i \in N$ be the agent who does not envy the other one and $j = 2-i$;

\eIf{$c_j(e_1)=b_j$}{
$A_i = A_i \cup\{e_1\}$;
}
{
$A_j = A_j \cup\{e_1\}$;
}
}

}

\Return $\textbf{A}$;
\end{algorithm}

\subsection{Partial Information in Advance}
In this part, agents' preferences over items are the same, and the arriving order of items is known.
With this information, we can obtain a non-wasteful EF1 allocation by the Marginal-Greedy (Algorithm \ref{alg: indivisible_submodular+binary}) in the goods setting and a complete EF1 allocation by the Round-Robin algorithm in the chores setting. 
\begin{definition}
    An online goods/chores instance is monotone if for any agent $i \in N$, we have
    $
    f_i(e_1) \geq \cdots \geq f_i(e_T)
    $ or for any agent $i \in N$, we have
     $
    f_i(e_1) \leq \ldots \leq f_i(e_T)
    $, where $f_i \in \{v_i,c_i\}$ .
\end{definition}

\begin{theorem}
\label{the: goods_monotone}
    For the deterministic allocation of indivisible goods from a monotone instance, the Marginal-Greedy Algorithm (Algorithm \ref{alg: indivisible_submodular+binary}) computes an NW allocation that is EF1.
\end{theorem}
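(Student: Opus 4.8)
The plan is to argue entirely in the matroid language supplied by Theorem~\ref{the: indivisible_submodular+binary}: each binary submodular $v_i$ is the rank function $r_i$ of a matroid $\mathcal{M}_i=(T,\mathcal{F}_i)$, every bundle Algorithm~\ref{alg: indivisible_submodular+binary} produces is independent, so $v_i(A_i^k)=|A_i^k|$, and the monotone hypothesis says that for each agent the singleton values $v_i(\{e_1\}),\dots,v_i(\{e_T\})$ are non-increasing (or, in the other branch, non-decreasing) along the arrival order. Non-wastefulness is inherited verbatim from Claim~\ref{submodular_binary_NW}: Algorithm~\ref{alg: indivisible_submodular+binary} still assigns each item to an agent of positive marginal whenever one exists, submodularity freezes that marginal at $1$ thereafter, and a discarded item had marginal $0$ for all agents and stays so. Hence the whole content of the statement is to promote the worst-case $\tfrac12$-EF1 guarantee of Theorem~\ref{the: indivisible_submodular+binary} to \emph{exact} EF1, using monotonicity together with the standing assumption of this subsection that the agents rank items the same way.

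First I would isolate why Theorem~\ref{the: indivisible_submodular+binary} loses a factor $\tfrac12$ and then try to close that gap under monotonicity. Fix a pair $i\neq j$ and a round $k$, and recall $B_j=\{e\in A_j^k : A_i^k+e\in\mathcal{F}_i\}$ from that proof. By matroid monotonicity every $e\in B_j$ had marginal $1$ for $i$ when it arrived yet was taken by $j$, which by the scan rule of Algorithm~\ref{alg: indivisible_submodular+binary} forces $j$ to have been ahead of $i$ in the priority list at that instant. The rotation rule moves a server to the back exactly when it is served, so the relative order of $i$ and $j$ flips only when one of them receives an item; between two rounds in which $j$ grabs a $B_j$-element while ahead, agent $i$ must itself be served once. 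This is the bookkeeping that yields $|B_j|\le |A_i^k|+1$ and thus $v_i(A_j^k)\le 2|A_i^k|+1$.

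The monotone step I would then carry out is to sharpen this to the invariant $r_i(A_j^k)\le |A_i^k|+1$, proved by induction on rounds while tracking the priority order, after which EF1 is immediate: deleting the single extending element $e$ gives $r_i(A_j^k\setminus\{e\})\le |A_i^k| = v_i(A_i^k)$. Two facts would drive the induction: (i) an item that $i$ skips because its marginal is $0$ lies in $\mathrm{span}_i(A_i^k)$ and so never contributes to $r_i(A_j^k)$, so only genuinely extending items matter; and (ii) under monotonicity the $i$-useful items occupy a contiguous prefix (resp. suffix) of the arrival sequence, so once $i$ can no longer extend its bundle, no further $i$-useful item can appear for $j$ to accumulate. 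Chaining (i)--(ii) through the flip/rotation accounting, I would argue that whenever $j$ is about to take an extending item while ahead of $i$, agent $i$ has already converted its earlier front-of-queue turns into independent items of equal value, driving the extending surplus of $A_j^k$ over $A_i^k$ down to at most one.

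The main obstacle is precisely this upgrade, and I expect the entire weight of the monotone hypothesis — and of the shared-preference assumption — to fall on it. The $\tfrac12$ bound is tight in general for two genuinely submodular reasons: a matroid can carry \emph{parallel classes}, so deleting one element of $A_j^k$ need not lower its rank, and an agent can ``waste'' a front-of-queue turn on a value-$1$ item whose marginal has already collapsed to $0$, letting the other agent quietly enlarge an independent set. The delicate core is to show that monotone (non-increasing or non-decreasing) scheduling makes the offending hoarding configuration impossible to realize, i.e. that an agent never forfeits priority on an item it can still use while the other agent builds an independent set two larger than its own. Pinning down this scheduling claim — in particular verifying that the shared ordinal preference plus monotonicity tames the parallel classes that obstruct single-item removal, and dispatching the non-increasing and non-decreasing branches by their symmetric prefix/suffix structure — is where I expect essentially all the difficulty to concentrate.
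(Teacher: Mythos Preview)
You have misidentified the valuation class. Theorem~\ref{the: goods_monotone} lives in the ``Partial Information in Advance'' subsection, whose standing assumptions are that valuations are \emph{additive}, agents share the same ordinal ranking of items, and items arrive in monotone order of value (Definition of a monotone instance uses the singleton values $f_i(e_1)\ge\cdots\ge f_i(e_T)$). The input here is not submodular binary, and none of the matroid apparatus from Theorem~\ref{the: indivisible_submodular+binary} is invoked or needed. Under additivity the marginal $\Delta^i_{A_i}(e)$ is simply $v_i(e)$, so the test ``marginal~$>0$'' in Algorithm~\ref{alg: indivisible_submodular+binary} reduces to ``$v_i(e)>0$''; once zero-valued items are handled (they are either skipped by an agent or, in the non-decreasing case, arrive first and can be ignored), the algorithm is literally round-robin.

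Accordingly, the paper's proof is the classical one-line round-robin argument: with values non-decreasing (say), pair the $\ell$-th item agent $i$ receives with the $\ell$-th item agent $j$ receives; since $i$ and $j$ alternate and later items are weakly more valuable, each of $i$'s items dominates the paired item of $j$ except possibly for one unmatched item in $A_j$, which is the single good you remove to certify EF1. There is no rank function, no $B_j$, no parallel-class obstruction, and no ``delicate scheduling core'' to resolve. Your plan is attacking a different, harder statement (EF1 for submodular binary under monotone singleton values) that the paper neither claims nor proves; even if that statement were true, the bookkeeping you sketch does not constitute a proof, and in any case it is not what is being asked here.
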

\begin{proof}
 Without loss of generality, we assume that the picking sequence is $1, \ldots, n$.
 First, let us consider the case of $v_i(e_1) \leq \ldots\leq v_i(e_T)$ for any $i \in N$.
 Note that, for any agent $i \in N$, if she values some items at zero, these items will arrive before the items with non-zero values.
 In this phase, she does not pick anything until the item with non-zero values arrives.
 Additionally, when these zero items are allocated to other agents, from the perspective of agent $ i$, the value of the other agent's bundle remains unchanged.
 Thus, without loss of generality, we assume that $v_i(e_1) \neq 0$ for any agent $i \in N$.
 Fix an arbitrary round $k$ and an arbitrary agent $i \in N$.
    Let $\textbf{A}^{k}$ denote the allocation after the allocation of $e_{k}$.
    Suppose that she picks item $e_k$.
    For agent $i$, we have $v_i(A^{k}_i) \geq v_i(A^{k}_j \setminus \{e\})$, where $e$ is the last item added to agent $j$'s bundle, since for any $j <i$, agent $i$ is the latter one to pick the item and for any $j > i$, $v_i(e_s) \geq v_i(e_t)$ holds, where agent $i$ picks $e_s$, agent $j$ picks $e_t$ and $s-t=n-i+j$.
    For agent $j \neq i$, we have $v_j(A_j^{k})\geq v_j(A_{i}^{k} \setminus \{e\})$, where $e$ is the last item added to agent $i$'s bundle, since for any $j>i$, $v_j(e_s) \geq v_j(e_t)$, where agent $j$ picks $e_s$, agent $i$ picks $e_t$ and $s-t=j-i$, and $j<i$, $v_j(e_s) \geq v_j(e_t)$ holds, where agent $j$ picks $e_s$, agent $i$ picks $e_t$ and $s-t=n-j+i$. 
    Thus, the allocation $A^{k}$ is EF1.
    As for the case of $v_i(e_1) \geq  \ldots \geq v_i(e_T)$ for any $i \in N$, we can use a similar analysis to show that the returned allocation is EF1 and we omit it.
 \end{proof}

\begin{theorem}
\label{the: chores_monotone}
 For the deterministic allocation of indivisible chores from a monotone instance, the Round-Robin Algorithm computes a complete allocation that is EF1.
\end{theorem}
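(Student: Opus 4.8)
The plan is to mirror the structure of the goods case in Theorem~\ref{the: goods_monotone}, adapting the argument to chores and to the Round-Robin algorithm. First I would recall that in a monotone chores instance all agents rank the items in the same (weakly) monotone order, so without loss of generality suppose $c_i(e_1) \geq \cdots \geq c_i(e_T)$ for every $i \in N$ (the reversed case is handled symmetrically). The Round-Robin algorithm fixes a picking order, say $1, 2, \ldots, n$, and in each round lets the next agent in cyclic order take the current arriving item. Since the items arrive in a known order and this order coincides (up to ties) with every agent's preference order, Round-Robin here is just the classical round-robin draft: agent $i$ receives items $e_i, e_{i+n}, e_{i+2n}, \ldots$.

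The key step is a pairwise comparison showing EF1 at the end of every round $k$. Fix agents $i \neq j$ and consider the allocation $A^k$. I would pair up the items in $A_i^k$ and $A_j^k$ by arrival time: for each item agent $i$ picked at round $s$, compare it to the item the "later" agent in that cycle picked. Concretely, if $j$ comes after $i$ in the picking order, then whenever $i$ picks $e_s$, agent $j$ picks some $e_t$ with $t > s$, hence $c_i(e_t) \leq c_i(e_s)$ by monotonicity; summing over all such pairs gives $c_i(A_i^k) \leq c_i(A_j^k) + c_i(e_{\text{last}})$ where $e_{\text{last}}$ is the final item $i$ took — actually one shows $c_i(A_i^k \setminus \{e\}) \leq c_i(A_j^k)$ for $e$ the last item in $A_i^k$, which is exactly $\alpha$-EF1 with $\alpha = 1$. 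If instead $j$ precedes $i$ in the picking order, then $i$'s item at round $s$ is matched with $j$'s item at an earlier round $t < s$ within the same cycle, so again $c_i(e_s) \leq c_i(e_t)$, and the same telescoping bound holds after removing $i$'s first item. I would also note that agent $i$ can only envy (up to one item) those agents, so checking both directions suffices.

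One subtlety worth addressing explicitly: completeness is immediate because Round-Robin assigns every arriving chore to some agent, never discarding. I would also handle the degenerate rounds where the count of items is not a multiple of $n$: the "extra" items at the end of the current round go to an initial segment of the picking order, and the pairing argument still works because removing the single most recently acquired item from the complaining agent's bundle restores the inequality. The main obstacle I anticipate is getting the index bookkeeping exactly right in the pairing — which item of agent $j$ to match with which item of agent $i$ so that the monotonicity inequality points the correct way in both the "$j$ after $i$" and "$j$ before $i$" cases — since this is precisely where the round-robin structure (as opposed to an arbitrary online rule) is essential, and it is the analogue of the case split already carried out in the proof of Theorem~\ref{the: goods_monotone}. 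The reversed-monotonicity case $c_i(e_1) \leq \cdots \leq c_i(e_T)$ is symmetric and I would only sketch it.
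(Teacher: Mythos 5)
Your overall strategy (cycle-by-cycle pairing of the round-robin picks combined with monotonicity, symmetric treatment of the two orientations, completeness for free) is the same as the paper's, but the concrete pairing you wrote down does not give EF1 in the orientation you chose. In the decreasing case $c_i(e_1)\ge\cdots\ge c_i(e_T)$ with $j$ \emph{after} $i$ in the picking order, the same-cycle pairing gives $c_i(e_t)\le c_i(e_s)$ with $e_s\in A_i^k$ and $e_t\in A_j^k$, so summing yields $c_i(A_j^k)\le c_i(A_i^k)$ --- the \emph{reverse} of the bound you claim to obtain. Moreover, removing the \emph{last} item of $A_i^k$ does not suffice in this orientation: with two agents and identical decreasing costs $10,1,1$, round robin gives $A_1=\{e_1,e_3\}$ and $A_2=\{e_2\}$, and $c_1(A_1\setminus\{e_3\})=10>1=c_1(A_2)$, whereas removing the first item $e_1$ works. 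The correct bookkeeping in your orientation is the shifted pairing --- match agent $i$'s cycle-$(r+1)$ item against agent $j$'s cycle-$r$ item, so the matched item of $i$ arrives later and is cheaper --- and then drop $i$'s \emph{first} (largest-cost) chore, which Definition \ref{def: ef1_chores} permits since it only requires \emph{some} $e\in A_i^k$. (For $j$ before $i$ in the decreasing orientation no removal is needed at all: the same-cycle pairing already gives $c_i(A_i^k)\le c_i(A_j^k)$.)

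This is precisely the point you flagged as the main obstacle, and it is where the paper's proof does the work: it takes the increasing orientation $c_i(e_1)\le\cdots\le c_i(e_T)$, where the natural EF1 witness is the chore agent $i$ has just picked, and shows $c_i(A_i^k\setminus\{e_k\})=c_i(A_i^{k-1})\le c_i(A_j^{k-1})=c_i(A_j^k)$ by splitting into $j>i$ (same-cycle comparison, since $i$ picks earlier and hence cheaper in each cycle) and $j<i$ (shifted comparison with $t-s=n-i+j$). So your plan is salvageable with a small but essential correction: either switch to the increasing orientation and remove the most recently acquired chore, or keep the decreasing orientation but use the off-by-one-cycle pairing and remove the first chore of the envious agent's bundle. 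As written, the step ``summing over all such pairs gives $c_i(A_i^k)\le c_i(A_j^k)+c_i(e_{\mathrm{last}})$'' does not follow from the inequalities you state, and the EF1 witness you name is the wrong item.
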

\begin{proof}
    Without loss of generality, we assume that the picking sequence is $1, \ldots, n$.
    First, let us consider the case of $c_i(e_1) \leq \ldots\leq c_i(e_T)$ for any $i \in N$.
    Fix an arbitrary round $k$ and an arbitrary agent $i \in N$.
    Let $\textbf{A}^{k}$ denote the allocation after the allocation of $e_{k}$.
    Suppose that she picks item $e_k$.
    For agent $i$, we have $c_i(A^{k}_i \setminus \{e_{k}\}) = c_i(A^{k-1}_i) \leq c_i(A^{k-1}_j)=c_i(A^{k}_j)$ since for any $j >i$, agent $i$ has the priority of picking the item and for any $j < i$, $c_i(e_s) \leq c_i(e_t)$ holds, where agent $i$ picks $e_s$, agent $j$ picks $e_t$ and $t-s=n-i+j$.
    For agent $j \neq i$, we have $c_j(A_j^{k})\leq c_j(A_{i}^{k})$ for any $j<i$ agent $j$ has the priority of picking items and $j>i$, $c_j(e_s) \leq c_j(e_t)$ holds, where agent $j$ picks $e_s$, agent $i$ picks $e_t$ and $t-s=n-j+i$. 
    Therefore, the allocation $\textbf{A}^{k}$ is EF1.
As for the case of $c_i(e_1) \geq  \ldots \geq c_i(e_T)$ for any $i \in N$, we can use a similar analysis to show that the returned allocation is EF1 and we omit it.\end{proof}

In the monotone instance, each agent has the same preference over all items, and the arriving order of items follows the nonincreasing/nondecreasing order in terms of value or cost.
Then, if we allow the arriving order to be arbitrary and keep the same preference, we have the following proposition.

\begin{proposition}
\label{prop: no_ef1_ido}
    For an online goods/chores instance where agents only have identical preferences, there exists an instance for which no EF1 allocation can be found, even for two agents with additive personalized bi-valued valuation or cost functions.
\end{proposition}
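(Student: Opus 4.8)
\textbf{Proof proposal for Proposition~\ref{prop: no_ef1_ido}.}
The plan is to build a small online instance with two agents who share an identical bi-valued valuation (or cost) function, and to arrange the arrival order so that the \emph{first} few items force an unavoidable imbalance that no later item can repair. The key intuition is that EF1 for two agents is a very rigid constraint: once one agent is handed two ``large-value'' items while the other is left with only ``small-value'' items, a single removal cannot close the gap, so the allocation is permanently broken regardless of what arrives next. Since both agents have the \emph{same} valuation, we cannot exploit differences in taste to rebalance, which is exactly what makes the identical-preference case hard.

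First I would set $v_1 = v_2 = v$ with $v(e)\in\{1,M\}$ for a large constant $M$ (say $M=3$). The adversary releases items one at a time: release a first item $e_1$ with $v(e_1)=M$; whichever agent receives it, call them agent~$1$. Release a second item $e_2$ with $v(e_2)=M$. If agent~$1$ again receives it, then $v(A_1)\ge 2M$, $v(A_2)=0$, and after removing any single item from $A_1$ the remaining value is $\ge M > 0 = v(A_2)$, so agent~$2$'s EF1 constraint fails and we stop. Otherwise agent~$2$ gets $e_2$ and both bundles have value $M$. Now release $e_3, e_4$ each with value $1$; to maintain balance each must go to a distinct agent (if both go to the same agent, say $A_1=\{e_1,e_3,e_4\}$, $A_2=\{e_2\}$, then $v(A_1 \setminus \{e_3\}) = M+1 > M = v(A_2)$—actually this still satisfies EF1 since removing $e_1$ gives value $2 < M$ only if $M>2$; so here the adversary must be more careful). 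The cleaner route: after the symmetric state $v(A_1)=v(A_2)=M$, release a \emph{single} item $e_3$ with $v(e_3)=M$; it must go to one agent, say agent~$1$, giving $v(A_1)=2M$, $v(A_2)=M$, which \emph{is} EF1 (remove $e_1$ or $e_3$ from $A_1$, leaving $M=v(A_2)$). Then release $e_4$ with $v(e_4)=M$: it must go somewhere; if to agent~$1$ then $v(A_1)=3M$ and removing one item leaves $2M > M$, breaking EF1; if to agent~$2$ then both have $2M$ and we are back to a scaled symmetric state. This naive doubling never terminates, so I need a sharper gadget.

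The correct gadget uses the bi-valuedness asymmetrically. Take $v(e)\in\{1,2\}$. Release $e_1$ with $v(e_1)=2$ (goes to agent~$1$, WLOG), then $e_2$ with $v(e_2)=1$. If $e_2\to$ agent~$1$: $A_1$ has value $3$, $A_2=\emptyset$; removing either item from $A_1$ leaves value $\ge 1 > 0$, so agent~$2$'s EF1 fails—stop. So $e_2\to$ agent~$2$, giving values $(2,1)$. Now release $e_3$ with $v(e_3)=1$. If $e_3\to$ agent~$1$: values $(3,1)$; remove the value-$1$ item from $A_1$ to get $2>1$, and remove a value-$2$ item to get $2>1$—wait, $A_1=\{e_1,e_3\}$ has items of value $2$ and $1$, so $\min$ over removals is $\min\{1,2\}=1 \not> 1$; EF1 holds with equality. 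So this is allowed. If $e_3\to$ agent~$2$: values $(2,2)$, symmetric. In the $(3,1)$ branch, release $e_4$ with value $1$: to agent~$1$ gives $(4,1)$, removals leave $\ge 2 > 1$—EF1 fails, stop; to agent~$2$ gives $(3,2)$, and then $e_5$ with value $1$ forces the issue. I will make this terminate by having the adversary, at each symmetric state $(x,x)$ with $x$ items each being value $\le 2$, release enough value-$1$ items; the finiteness argument is: each round the total number of items grows, the per-agent counts can differ by at most $1$ under EF1 when all values are $1$, so after the initial value-$2$ item the adversary releases $N+1$ value-$1$ items for large $N$ and a parity/counting argument shows EF1 must fail at some point. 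The main obstacle is pinning down this finite adversary strategy precisely so that termination is guaranteed rather than an infinite chase; I expect a clean formulation comes from: \emph{the value-$2$ item $e_1$ is a permanent ``debt''}—whoever holds it can never be envied-up-to-one by the other once the other has accumulated $\ge 2$ units of value, so after releasing $e_1$ plus four unit items the adversary, playing the two-item look-ahead as in the $(3,1)$ analysis, forces a non-EF1 configuration in at most five rounds. The chores case is fully symmetric: swap $\{1,2\}$ for $\{1,2\}$ costs, reverse the direction of the inequalities, and the ``permanent debt'' becomes a cost-$2$ chore that cannot be offset by removing one unit chore.
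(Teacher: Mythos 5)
There is a genuine gap, and it is rooted in a misreading of the statement. You interpret ``identical preferences'' as identical \emph{cardinal} valuation functions ($v_1=v_2=v$) and then try to build an adversarial arrival sequence against every online algorithm. But under that interpretation the claim you are trying to prove is false: for two agents with identical additive valuations, the rule ``give each arriving item to the agent whose current bundle has smaller value'' maintains exact EF1 at every round. (Induction: if the invariant $v(A_r)-v(A_p)\le \max_{e\in A_r}v(e)$ holds for the richer agent $r$ and poorer agent $p$, then after giving the new item $g$ to $p$ either $p$ stays poorer and the deficit only shrinks, or $p$ overtakes $r$ and the new deficit is at most $v(g)\le\max_{e\in A_p\cup\{g\}}v(e)$.) This is exactly why both of your adversary gadgets stall: the ``doubling'' gadget admittedly never terminates, and in the $\{1,2\}$ gadget the state $(3,1)$ with $A_1=\{e_1,e_3\}$ is EF1 (remove $e_1$), the balanced states $(2,2)$, $(3,2)$, $(3,3)$ are all EF1, and your final claim that one value-$2$ item plus four unit items forces a violation within five rounds is incorrect -- the greedy rule above survives that sequence. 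No counting or parity argument can rescue the plan, because no finite adversary strategy exists against identical valuations.

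The proposition is instead about identical \emph{ordinal} preferences: both agents rank the items in the same order, but their bi-values differ, and this cardinal asymmetry is indispensable. The paper's construction for goods uses $v_1,v_2\in\{\epsilon,1\}$ with items such as one valued $1$ by agent~1 and $\epsilon$ by agent~2 (while a common ranking of all items is respected). The item valued $(1,\epsilon)$ must go to agent~2 to preserve EF1, which wedges the allocation into a position where, after one or two more items (values $(\epsilon,\epsilon)$ and $(1,1)$, with a second branch of length five), every choice violates EF1 for small $\epsilon$; the chores version is analogous with costs in $\{1,\frac{1}{\epsilon}\}$. To repair your proof you would need to abandon the identical-valuation setting entirely and introduce items that the two agents value differently (consistently with a common ordering), then do the short case analysis over the algorithm's forced responses, as the paper does.
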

\begin{proof}
 \begin{table}[!tb]
 \begin{minipage}{\linewidth}
    \centering
    \begin{tabular}{c|c|c|c|c}
    \toprule
      & $e_1$ & $e_2$ & $e_3$& $e_4$    \\
      \midrule
        agent $1$  & \textcolor{red}{$1$} & $1$ &  $\epsilon$ & $1$  \\
        \midrule
        agent 2 & $1$ & \textcolor{red}{$\epsilon$} & \textcolor{red}{$\epsilon$} & $1$\\
        \bottomrule
    \end{tabular}
    \caption*{Case 1}
\end{minipage}

\begin{minipage}{\linewidth}
\centering
    \begin{tabular}{c|c|c|c|c|c}
    \toprule
      & $e_1$ & $e_2$ & $e_3$& $e_4$ & $e_5$   \\
      \midrule
        agent $1$  & \textcolor{red}{$1$} & $1$ &  \textcolor{blue}{$\epsilon$} & $1$ & $1$ \\
        \midrule
        agent 2 & $1$ & \textcolor{red}{$\epsilon$} & 1 & \textcolor{blue}{$\epsilon$}& $1$\\
        \bottomrule
    \end{tabular} 
    \caption*{Case 2}
\end{minipage}

\caption{The nonexistence of EF1 for two agents that have the same preference over goods.}
    \label{tab: no_ef1_ido_cases}
\end{table}

\begin{table}[!tb]
 \begin{minipage}{\linewidth}
    \centering
    \begin{tabular}{c|c|c|c}
    \toprule
      & $e_1$ & $e_2$ & $e_3$    \\
      \midrule
        agent $1$  & \textcolor{red}{$1$} & $1$ &  $\frac{1}{\epsilon}$   \\
        \midrule
        agent 2 & $1$ & \textcolor{red}{$\frac{1}{\epsilon}$} & \textcolor{red}{$\frac{1}{\epsilon}$} \\
        \bottomrule
    \end{tabular}
    \caption*{Case 1}
\end{minipage}

\begin{minipage}{\linewidth}
\centering
    \begin{tabular}{c|c|c|c|c|c}
    \toprule
      & $e_1$ & $e_2$ & $e_3$& $e_4$ & $e_5$   \\
      \midrule
        agent $1$  & \textcolor{red}{$1$} & $1$ &  \textcolor{blue}{$\frac{1}{\epsilon}$} & 1 & $\frac{1}{\epsilon}$ \\
        \midrule
        agent 2 & $1$ & \textcolor{red}{$\frac{1}{\epsilon}$} & $\frac{1}{\epsilon}$ & \textcolor{blue}{$\frac{1}{\epsilon}$} & $\frac{1}{\epsilon}$\\
        \bottomrule
    \end{tabular} 
    \caption*{Case 2}
\end{minipage}
\caption{The nonexistence of EF1 for two agents that have the same preference over chores.}
    \label{tab: no_ef1_ido_cases_chores}
\end{table} 
 First, we consider the online goods instance.
 Without loss of generality, we assume that $v_1, v_2 \in \{\epsilon, 1\}$, where $0<\epsilon<1$
 The value of each item is shown in Table \ref{tab: no_ef1_ido_cases}.
 When $t=1$, the arriving item has the value of $1$ for both agents.
 Without loss of generality, it is allocated to agent 1.
 For the second round $t=2$, the item that has the value of $1$ for agent 1 and $\epsilon$ for agent 2.
 To guarantee EF1, agent 2 picks it.
 In round $t=3$, the arriving item has the value of $\epsilon$ for both agents.
 Then, let us consider two cases related to allocating it.
 \begin{itemize}
     \item If agent 2 picks it, this allocation is still EF1.
     In the last round $t=4$, the arriving item has the value of $1$ for both agents.
     In this case, it is not hard to check that no matter which agent picks it, the resulting allocation is not EF1.
     \item If agent 1 picks it, this allocation is still EF1.
     Then, in round $t=4$, the arriving item has the value of $1$ for agent 1 and $\epsilon$ for agent 2.
     To guarantee that EF1 is not violated, it should be allocated to agent 2.
     In this last round $t =5$, the item has the value of $1$ for both agents.
     In this case, it is trivial that as long as one agent picks it, the final allocation is not EF1.
 \end{itemize}
 
Next, let us consider the online chores instance.
We assume that $c_1,c_2 \in \{1, \frac{1}{\epsilon}\}$ where $0<\epsilon<1$.
The cost of each item is shown in Table \ref{tab: no_ef1_ido_cases_chores}.
When $t=1$, the arriving item has the cost of 1 for both agents, and we assume that agent 1 picks it.
In round $t = 2$, the item that has the cost of 1 for agent $1$ and $\frac{1}{\epsilon}$ for agent 2.
To maintain EF1, agent 2 picks it.
For the next round $t = 3$, the arriving item has the cost of $\frac{1}{\epsilon}$ for both agents.
Then, let us consider two cases related to allocating it.
\begin{itemize}
    \item If it is allocated to agent 2, no further items appear.
    In this case, it is clear that this allocation is not EF1.
    \item If it is allocated to agent 1, in the following round $t= 4$, the arriving item has the cost of 1 for agent $1$ and $\frac{1}{\epsilon}$ for agent 2.
    It should be allocated to agent 2 to ensure that the property of EF1 still holds.
    In the last round $t = 5$, this arriving item has the cost of $\frac{1}{\epsilon}$ for both agents.
    In this case, it is easy to verify that no allocation is EF1 when it is allocated to one agent.
\end{itemize}

Combining the above cases, we can conclude that our statement holds.
\end{proof}

\clearpage

\end{document}